\documentclass[reprint, amsmath,amssymb,showpacs,floatfix,longbibliography, aps, twocolumn, superscriptaddress,nofootinbib]{revtex4-1}
\usepackage{CJKutf8}

\usepackage{amsthm}
\newtheorem{theorem}{Theorem}
\newtheorem{proposition}{Proposition}

\newtheorem{corollary}{Corollary}[theorem]
\newtheorem{lemma}[theorem]{Lemma}

\usepackage{multirow}
\usepackage[utf8]{inputenc}
\usepackage{amsmath,amssymb,amsfonts,stmaryrd}
\usepackage{physics}
\usepackage{dsfont}
\usepackage{graphicx}
\usepackage{xcolor}
\usepackage{colortbl}
\usepackage{graphicx}
\usepackage{cancel}
\usepackage{natbib}
\usepackage{color}
\usepackage{tikz}
\usepackage{mathrsfs}
\usepackage{relsize}

\usepackage[all]{xy}
\usepackage{yhmath}
\usepackage{blkarray}
\usepackage{tabularx}
\usepackage{array}
\usepackage{makecell}

\usepackage{diagbox}
\usepackage{algorithm}
\usepackage{algpseudocode}
\usepackage{lipsum}
\newcommand*\colvec[3][]{
    \begin{pmatrix}\ifx\relax#1\relax\else#1\\\fi#2\\#3\end{pmatrix}
}
\algrenewcommand\algorithmicrequire{\textbf{Input:}}
\algrenewcommand\algorithmicensure{\textbf{Output:}}


\usepackage{amsmath}
\usepackage{bm} %allows for bold math type
\usepackage{subfigure} %allows for 2x2 figures
\usepackage{environ}
\NewEnviron{eqs}{%
\begin{equation}\begin{split}
    \BODY
\end{split}\end{equation}}
\usepackage{url}
\usepackage[margin=0.75in]{geometry}
\usepackage{scalerel}
\usepackage{stackengine,wasysym}

\usepackage{hyperref}
\hypersetup{colorlinks=true,citecolor=blue,linkcolor=blue, urlcolor=blue, breaklinks=true}
\hypersetup{linktocpage}

\allowdisplaybreaks

\usepackage{mathtools}

\newcommand{\ZZ}{{\mathbb Z}}

\newcommand{\g}{{\gamma}}

\graphicspath{{./Figures/}}

\newcommand{\prlsection}[1]{\vspace{6pt}\noindent\textbf{\textsc{#1.}}—}

\usepackage{capt-of}

\begin{document}

\author{Zijian Liang}
\affiliation{International Center for Quantum Materials, School of Physics, Peking University, Beijing 100871, China}

\author{Yu-An Chen}
\email[E-mail: ]{yuanchen@pku.edu.cn}
\affiliation{International Center for Quantum Materials, School of Physics, Peking University, Beijing 100871, China}

\date{\today}

\title{Topological subsystem bivariate bicycle codes with four-qubit check operators}

\begin{abstract}
High-rate bivariate bicycle (BB) codes are promising low-overhead quantum memories, but their stabilizer checks typically have weight $6$ or higher, making syndrome extraction challenging.
We introduce subsystem bivariate bicycle (SBB) codes, a translation-invariant CSS subsystem construction that realizes BB-code logical structure using local weight-$4$ gauge measurements. Their stabilizer syndromes are inferred by multiplying the corresponding gauge outcomes.
We further show that nonlocal stabilizers in translation-invariant CSS subsystem codes can be detected using a determinantal-ideal criterion based on the gauge-operator commutation matrix.
When this criterion excludes nonlocal stabilizers, a finite-depth Clifford circuit decouples gauge qubits and identifies the protected subsystem with a corresponding BB stabilizer code.
An SBB code is topological, meaning that it has no nontrivial local logical operators, if and only if the corresponding BB code is topological.
A finite search yields low-overhead examples including $[[27,6,3]]$, $[[75,10,5]]$, and $[[108,12,6]]$; the latter encodes six times more logical qubits than a subsystem surface code at the same block length and distance.
These results show how gauge degrees of freedom can make high-rate BB logical structure compatible with local weight-$4$ syndrome extraction.
\end{abstract}
\maketitle

%%%%%%%%%%%%%%%%%%%%%%%%%%%%%%%%%%%%%%%%%%%%%%%%%%%%%%%%%%%%%%%%%%%%%%%%%%%%%%%%%%%%%%%%%%%%%%%%%%%%%%%%%%%%%%%%

\prlsection{Introduction}
The realization of large-scale fault-tolerant quantum computers relies on quantum error correction~\cite{Shor1995Scheme, Steane1996Error, Knill1997Theory, gottesman1997stabilizer, kitaev2003fault}.
Topological codes, especially surface and toric codes, have become leading benchmarks because of their high thresholds and low-weight local stabilizer checks~\cite{bravyi1998quantum, dennis2002topological, terhal2015quantum, semeghini2021probing, Verresen2021PredictionTC, bluvstein2022quantum, google2023suppressing, Google2023NonAbelian, Google2024surface, iqbal2023topological, iqbal2024NonAbelian, Cong2024EnhancingTO}.
Their main drawback is overhead: implementing large-scale algorithms may require a substantial number of physical qubits~\cite{Fowler2012Surfacecodes, Litinski2019gameofsurfacecodes}.

Bivariate bicycle (BB) codes have recently emerged as a compelling low-overhead alternative.
They can achieve order-of-magnitude overhead improvements over the surface code~\cite{Kovalev2013QuantumKronecker, Pryadko2022DistanceGB, Bravyi2024HighThreshold, wang2024coprime, Wang2024Bivariate, tiew2024low, wolanski2024ambiguity, gong2024toward, maan2024machine, cowtan2024ssip, shaw2024lowering, cross2024linear, voss2024multivariate, berthusen2025toward, Eberhardt2025Logical, lin2025single, liang2025generalized, liang2025selfdual, liang2026generalizedmathbbzptoriccodes}.
However, their stabilizer measurements typically involve weight-$6$ or higher-weight checks, which are more demanding from a hardware perspective than the weight-$4$ checks of surface codes.
This motivates the search for constructions that retain the favorable logical structure of BB codes while reducing the weight of the operators measured directly.

Subsystem codes, originating from the operator quantum error-correction formalism~\cite{Kribs2005Unified, Kribs2006Operator, Poulin2005subsystem, Bacon2006Operator}, provide a natural mechanism for this reduction~\cite{Bombin2010Topologicalsubsystemcodes, Bombin2012Universaltopologicalphase, zhang2026coupledlayerconstructionquantumproduct}.
Rather than measuring stabilizers directly, one measures lower-weight gauge operators and combines their outcomes to infer the stabilizer syndrome.
Known examples include the subsystem surface code~\cite{bravyi2013subsystemsurfacecodesthreequbit} and subsystem hyperbolic codes~\cite{Higgott2021subsystem}, both of which admit weight-3 gauge checks.
%The trade-off is to require additional physical qubits.

In this work, we combine low-overhead BB codes with the measurement advantages of subsystem codes.
We introduce \textbf{subsystem bivariate bicycle (SBB) codes}. We propose a family of translation-invariant CSS subsystem codes on a square lattice with three qubits per unit cell with weight-4 gauge checks.
To make the construction concrete, we first present the $[[75,10,5]]$ SBB code as a guiding example: it measures only local weight-4 gauge operators, encodes five times more logical qubits than the $[[75,2,5]]$ subsystem surface code at the same block length and distance.
After this example, we develop the general Laurent-polynomial criterion that explains how the construction works.

Our explicit constructions show that this approach can substantially reduce overhead at a modest number of qubits.
% The $[[75,10,5]]$ code also reaches a comparable improvement factor to the $[[8064,338,10]]$ subsystem hyperbolic code.
Examples and comparisons are summarized in Table~\ref{tab:code_comparison}.

\begin{table}[thb]
\centering
\renewcommand{\arraystretch}{1.2}
\begin{tabular}{cccc}
\hline
Code family & $[[n,k,d]]$ & $kd/n$ & $kd^2/n$ \\
\hline
Subsystem surface code~\cite{bravyi2013subsystemsurfacecodesthreequbit}
& $[[3L^2,2,L]]$ & $2/(3L)$ & $0.67$ \\
\hline
\multirow{3}{*}{\begin{tabular}{c}
Subsystem \\
hyperbolic code~\cite{Higgott2021subsystem}
\end{tabular}}
& $[[384,18,4]]$ & $0.188$ & $0.75$ \\
& $[[1536,66,8]]$ & $0.344$ & $2.75$ \\
& $[[8064,338,10]]$ & $0.420$ & $4.19$ \\
\hline
\multirow{6}{*}{\begin{tabular}{c}
Subsystem bivariate\\
bicycle code
\end{tabular}}
& $[[27,6,3]]$ & $0.667$ & $2$ \\
& $[[60,10,4]]$ & $0.667$ & $2.67$ \\
& $[[75,10,5]]$ & $0.667$ & $3.33$ \\
& $[[90,12,5]]$ & $0.667$ & $3.33$ \\
& $[[108,12,6]]$ & $0.667$ & $4$ \\
& $[[126,14,6]]$ & $0.667$ & $4$ \\
%& $[[126,6,9]]$ & $0.429$ & $3.86$ \\
%& $[[147,14,6]]$ & $0.571$ & $3.43$ \\
\hline
\end{tabular}
\caption{
Comparison of subsystem code families.
The columns $kd/n$ and $kd^2/n$ are the normalized quantities appearing in the Bravyi--Poulin--Terhal tradeoff bounds~\cite{Bravyi2011Subsystem, Bravyi2010Tradeoffs} for two-dimensional local subsystem and stabilizer codes, respectively.
For SBB codes, $d$ is the dressed distance.
Explicit gauge-generator data and verification of the listed SBB examples are provided in Appendix~\ref{app: examples_verification}.
}
\label{tab:code_comparison}
\end{table}

\begin{figure}[t]
    \centering
    \subfigure[$G_{X,1}$]{\includegraphics[width=0.44\linewidth]{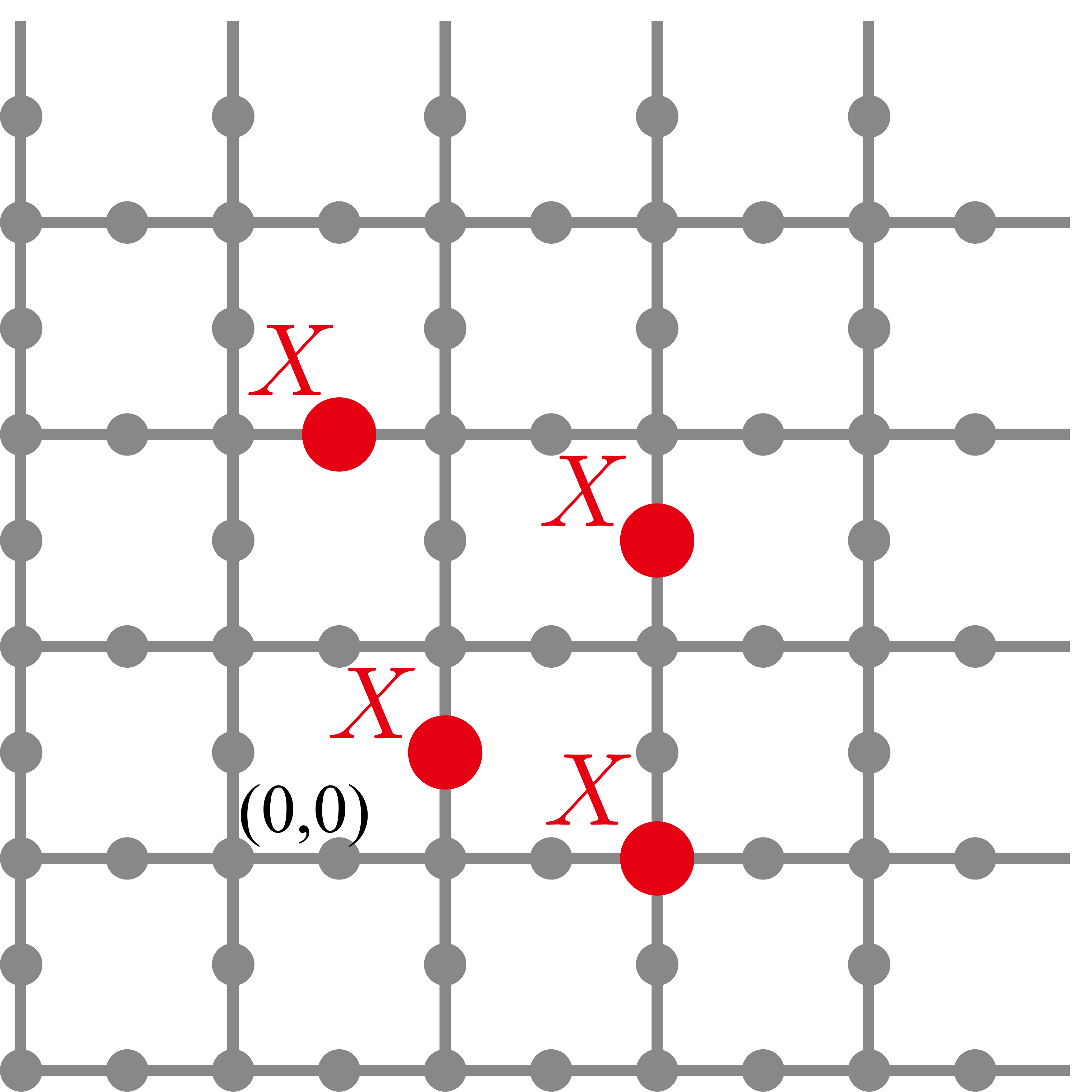}}
    \hspace{0.25cm}
    \subfigure[$G_{X,2}$]{\includegraphics[width=0.44\linewidth]{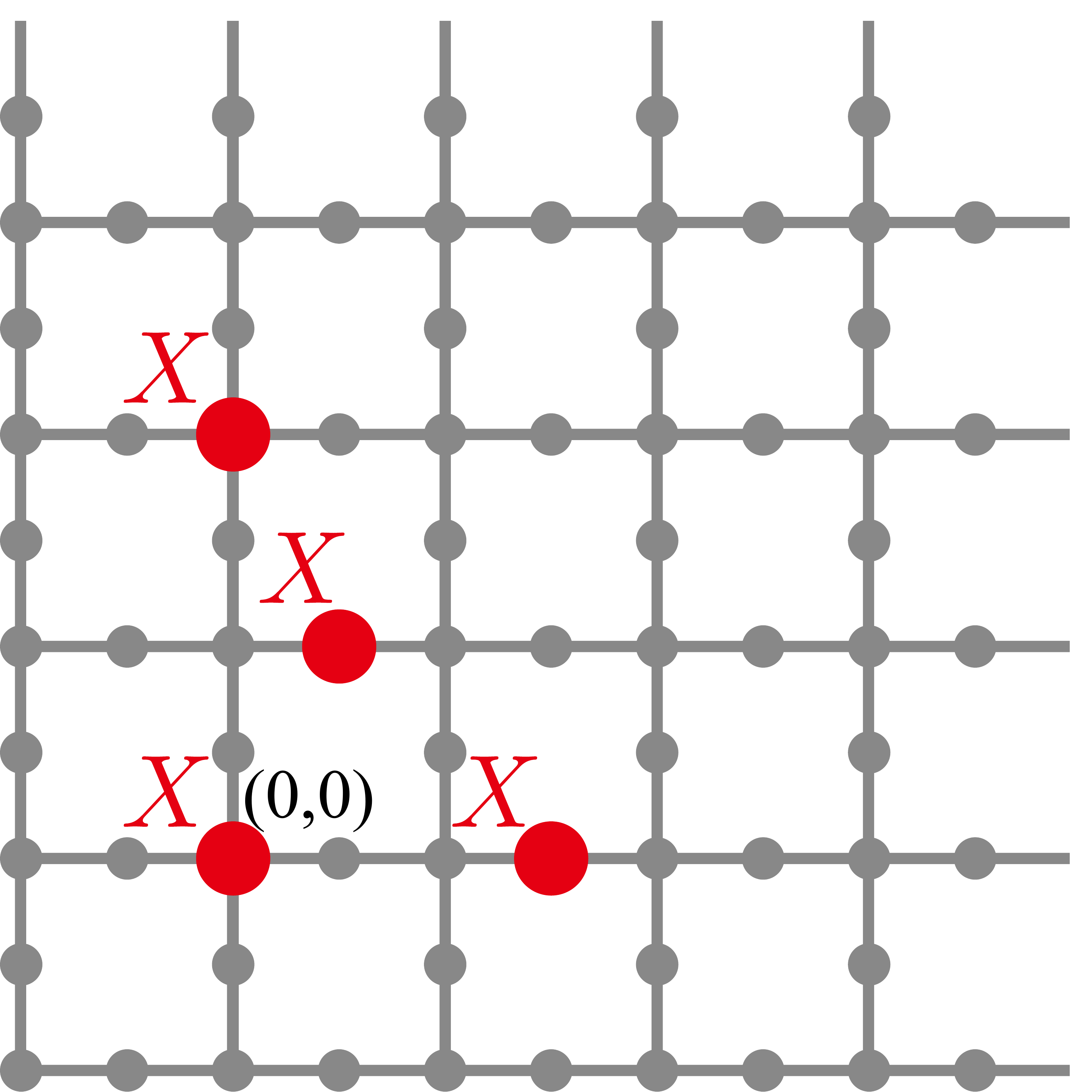}}\\
    \subfigure[$G_{Z,1}$]{\includegraphics[width=0.44\linewidth]{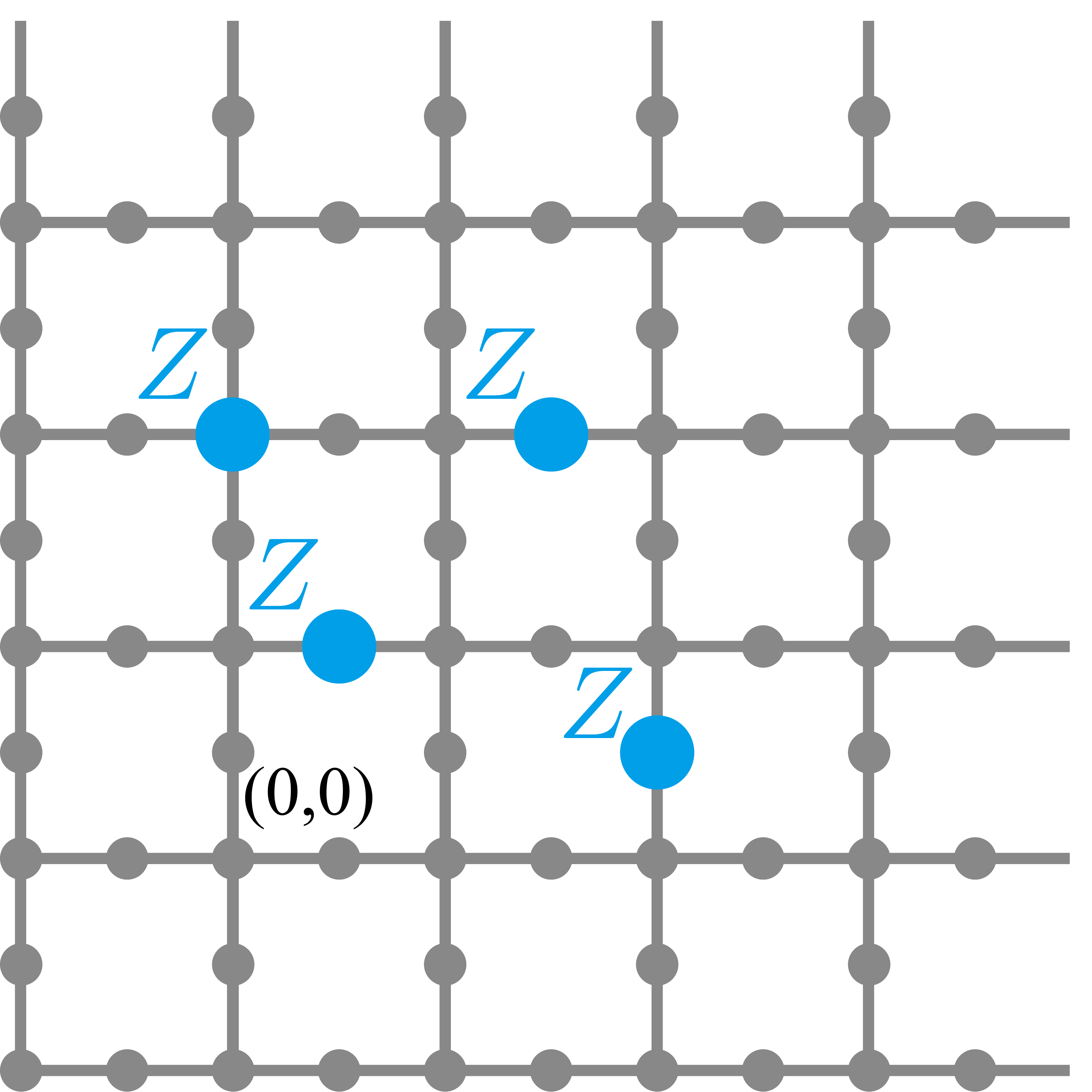}}
    \hspace{0.25cm}
    \subfigure[$G_{Z,2}$]{\includegraphics[width=0.44\linewidth]{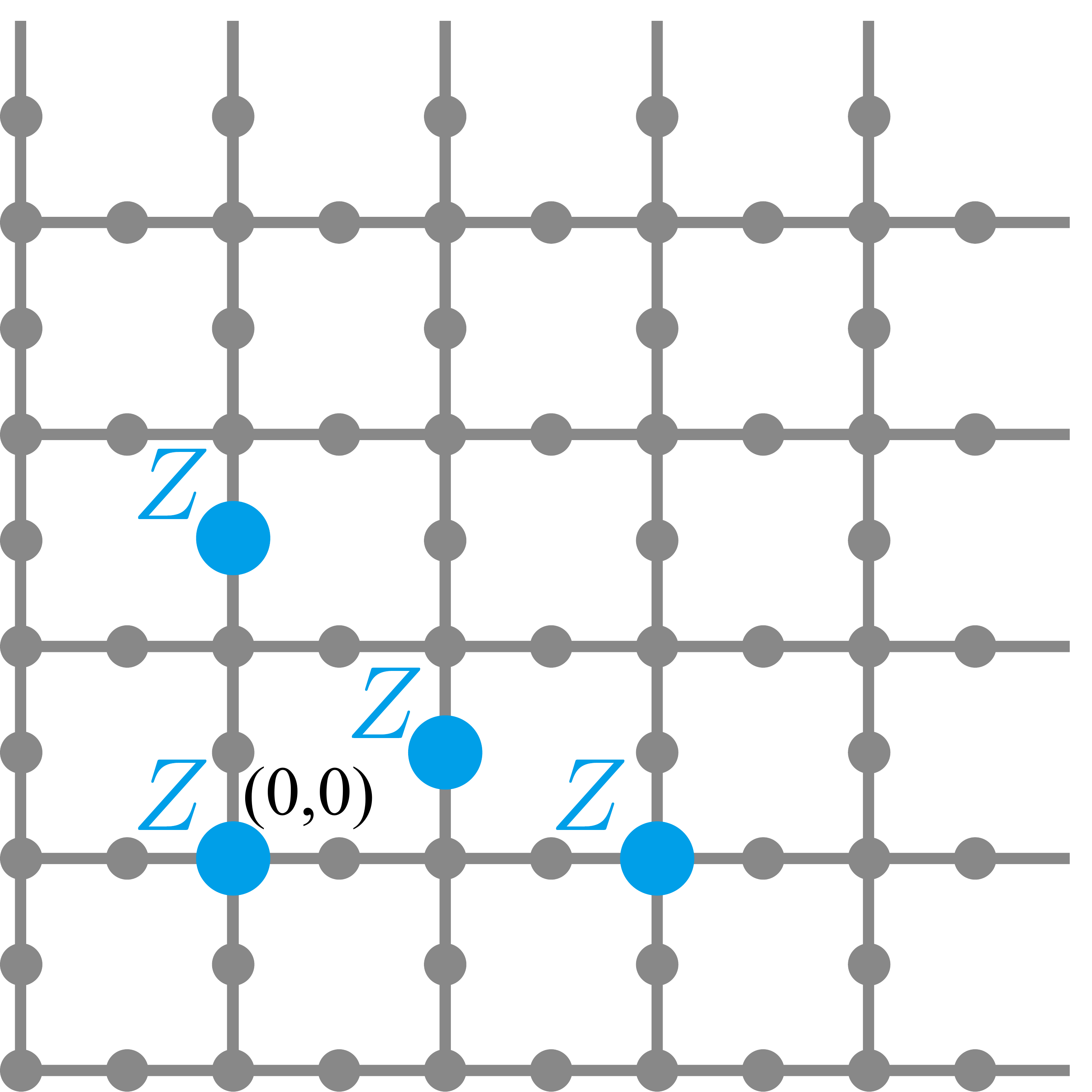}}
    \caption{
    Four types of local gauge generators of the $[[75,10,5]]$ SBB code.
    All other gauge generators are obtained by lattice translations.
    In the convention of Eq.~\eqref{eq: general_subsystem_gauge_operators}, they are specified by polynomials:
    (a) $(f_1,g_1,h_1)=(x^2,y^2,x+x^2y)$;
    (b) $(f_2,g_2,h_2)=(1+y^2,x+y,0)$;
    (c) $(f_3,g_3,h_3)=(y^2,y+xy^2,x^2)$;
    (d) $(f_4,g_4,h_4)=(1+x^2,0,x+y)$.
    }
    \label{fig: gauge operator of [[75, 10, 5]] code}
\end{figure}

\begin{figure}[t]
    \centering
    \subfigure[$X$-type stabilizer]{\includegraphics[width=0.44\linewidth]{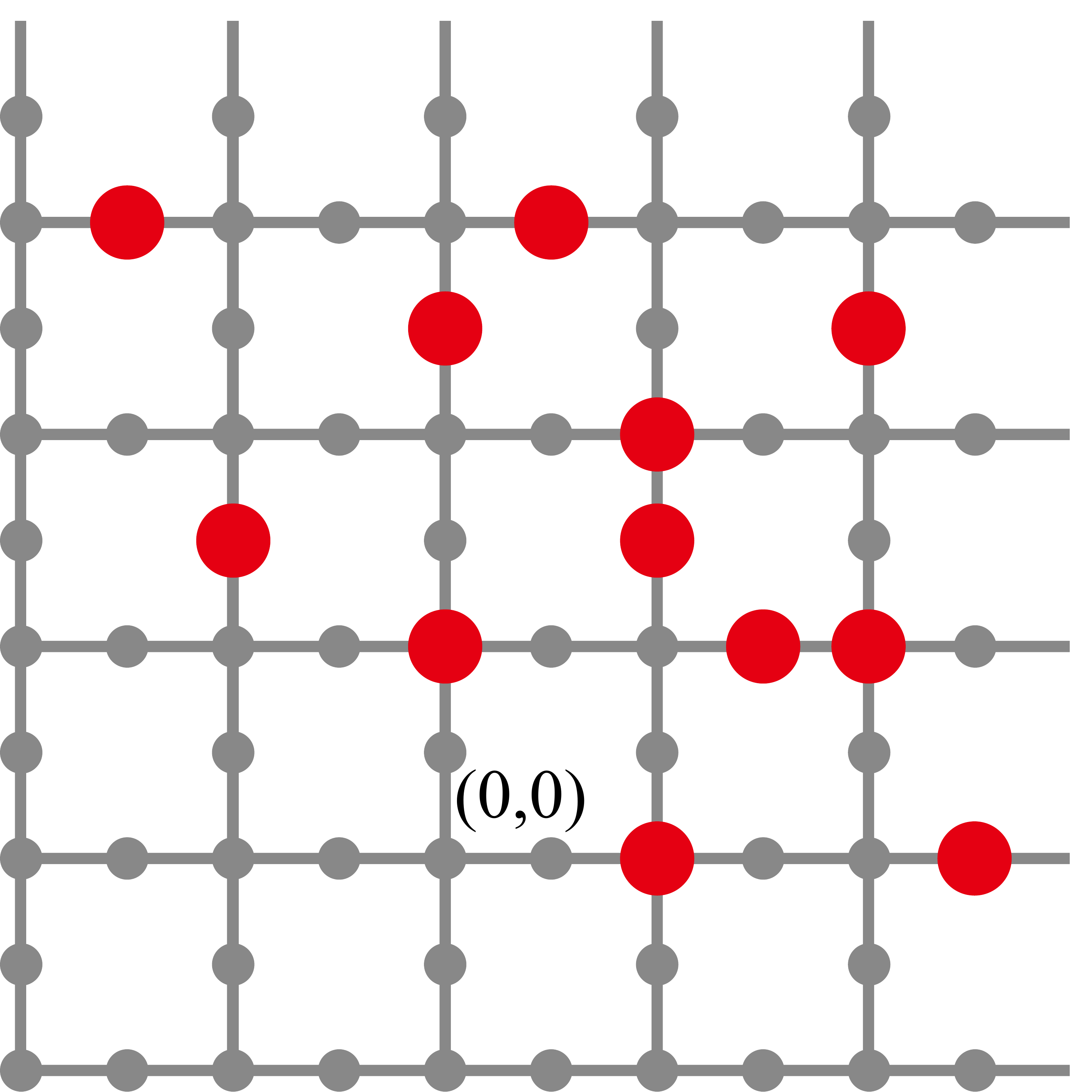}\label{fig: 2(a)}}
    \hspace{0.25cm}
    \subfigure[$Z$-type stabilizer]{\includegraphics[width=0.44\linewidth]{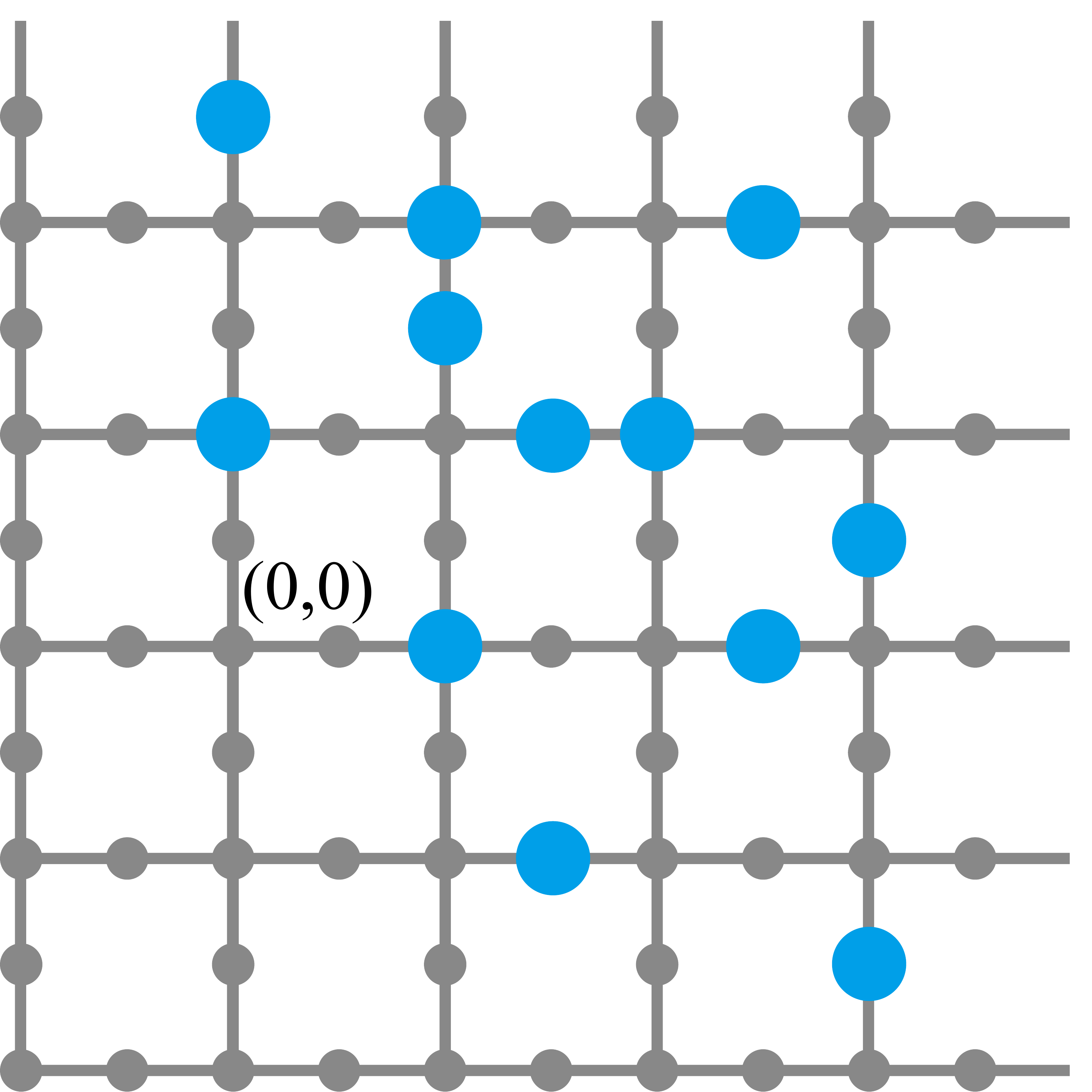}\label{fig: 2(b)}}\\
    \subfigure[Dressed logical $X$]{\includegraphics[width=0.44\linewidth]{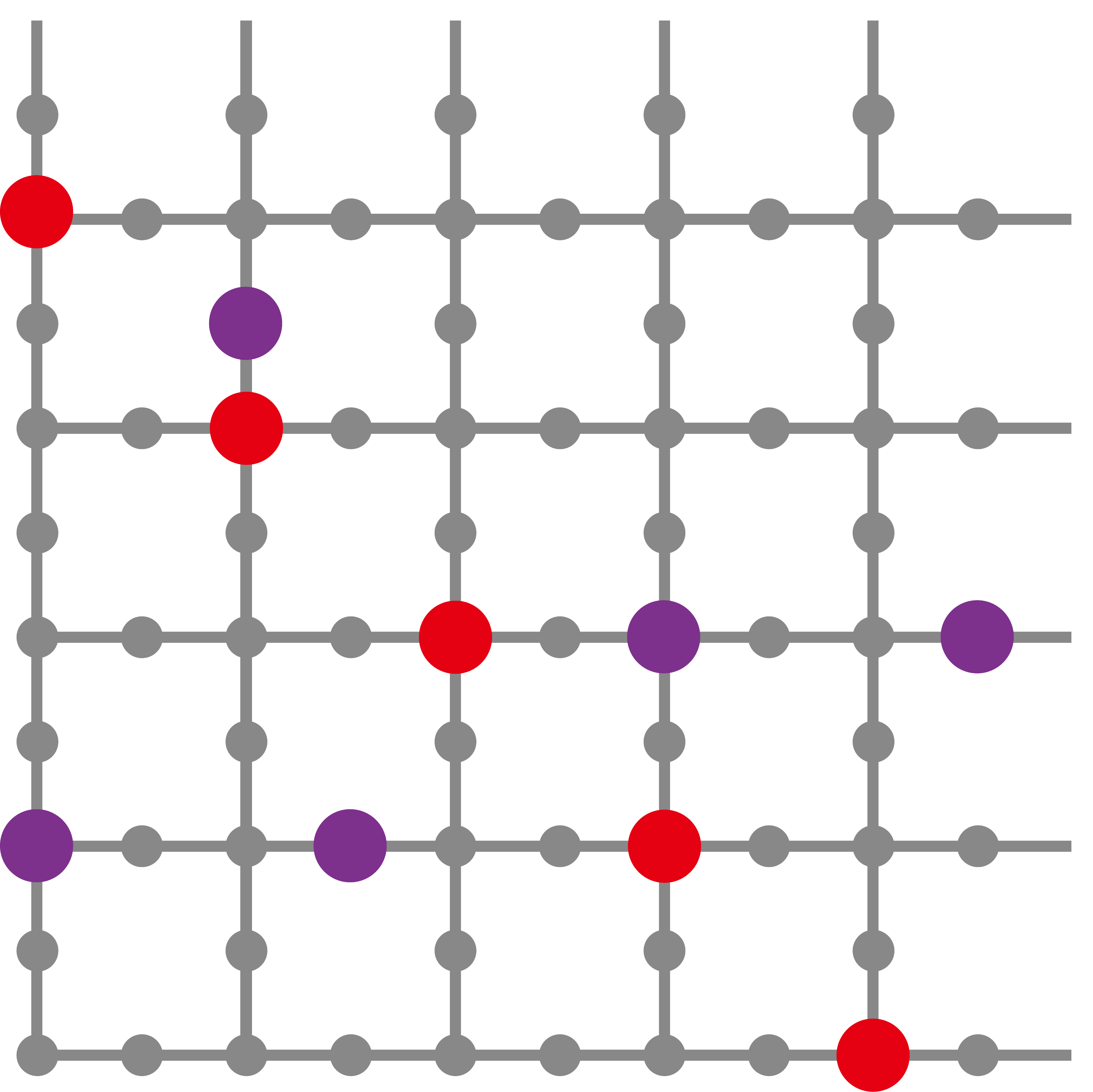}\label{fig: 2(c)}}
    \hspace{0.25cm}
    \subfigure[Dressed logical $Z$]{\includegraphics[width=0.44\linewidth]{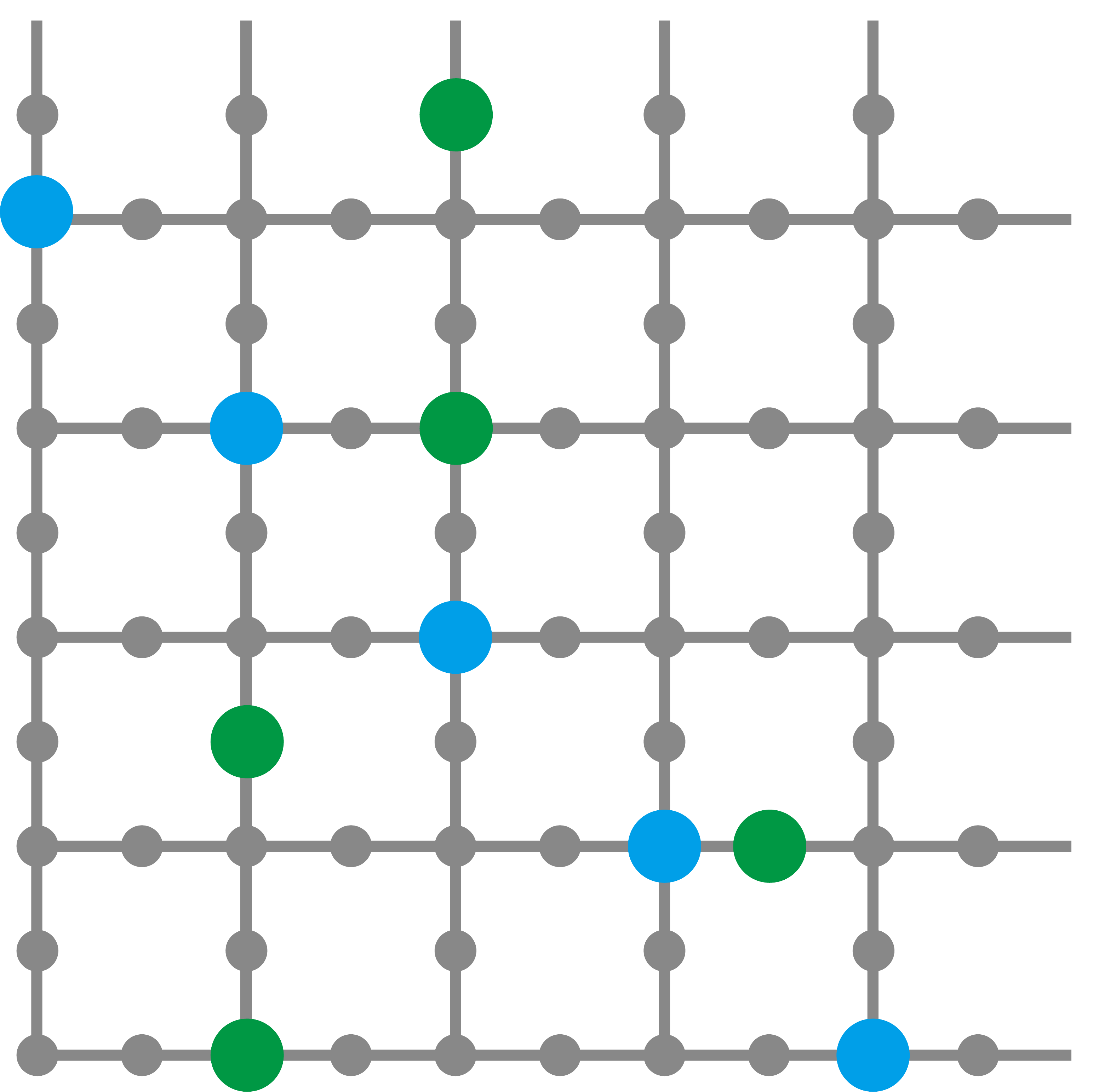}\label{fig: 2(d)}}
    \caption{
    Weight-$12$ stabilizers and representative weight-$5$ dressed logical operators of the $[[75,10,5]]$ SBB code.
    Panels (a) and (b) show local $X$- and $Z$-type stabilizers, respectively.
    These stabilizers are products of nearby gauge generators and lie in the center of the gauge group; hence they commute with all gauge operators.
    Panels (c) and (d) show representative minimum-weight dressed logical operators.
    In panel (c), red and purple both denote Pauli-$X$ operators and represent two inequivalent weight-$5$ logical-$X$ operators.
    The $5$ translations of each operator in the $y$ direction give $10$ independent dressed logical-$X$ operators modulo multiplication by gauge operators.
    Similarly, in panel (d), blue and green both denote Pauli-$Z$ operators and represent two inequivalent weight-$5$ logical-$Z$ operators.
    Their translations in the $x$ direction give $10$ independent dressed logical-$Z$ operators.
    }
     \label{fig: Stabilizers and dressed logical operators of [[75, 10, 5]] code}
\end{figure}

%%%%%%%%%%%%%%%%%%%%%%%%%%%%%%%%%%%%%%%%%%%%%%%%%%%%%%%%%%%%%%%%%%%%%%%%%%%%%%%%%%%%%%%%%%%%%%%%%%%%%%%%%%%%%%%%
\prlsection{Guiding example}
\label{sec:guiding_example_75}
We use standard subsystem-code terminology throughout; a brief review is given in Appendix~\ref{app: preliminaries}.
We begin with a concrete example: the $[[75,10,5]]$ SBB code.
The code is defined on a $5\times5$ square lattice with periodic boundary conditions.
Each unit cell contains three physical qubits, one on the vertex and two on the edges, so the total qubit number is $n=75$.
% Thus the total number of physical qubits is $n=3\times 5^2=75$.

The syndrome-extraction measurements are local gauge checks.
%rather than stabilizer checks.
There are four translation-invariant families of gauge generators per unit cell, denoted
$G_{X,1}$, $G_{X,2}$, $G_{Z,1}$, and $G_{Z,2}$.
Two are $X$-type and two are $Z$-type, and each is weight-$4$, as shown in Fig.~\ref{fig: gauge operator of [[75, 10, 5]] code}.

The gauge checks need not commute with each other, but suitable products of gauge checks can commute with every gauge generator.
Such products lie in the center of the gauge group and define stabilizers.
For the $[[75,10,5]]$ code, each local $X$-stabilizer is the product of three  $X$-gauge checks.
Relative to a reference unit cell, it is formed from two copies of $G_{X,1}$ shifted by $(-2,1)$ and $(0,1)$, together with one copy of $G_{X,2}$ shifted by $(1,0)$.
Similarly, each local $Z$-stabilizer is the product of three  $Z$-gauge checks: two copies of $G_{Z,1}$ shifted by $(1,-2)$ and $(1,0)$, together with one copy of $G_{Z,2}$ shifted by $(0,1)$.
The resulting stabilizers have weight $12$ and are shown in Figs.~\ref{fig: 2(a)} and~\ref{fig: 2(b)}.
In addition, Figs.~\ref{fig: 2(c)} and~\ref{fig: 2(d)} show representative dressed logical operators of minimum weight $5$.
There are $10$ independent dressed logical-$X$ and logical-$Z$ operators, respectively.
Thus, the subsystem code encodes $k=10$ logical qubits, and its code distance is $d=5$.

\begin{figure*}[thb]
    \centering
    \subfigure[CNOT gates $U_1$]{\includegraphics[width=0.22\linewidth]{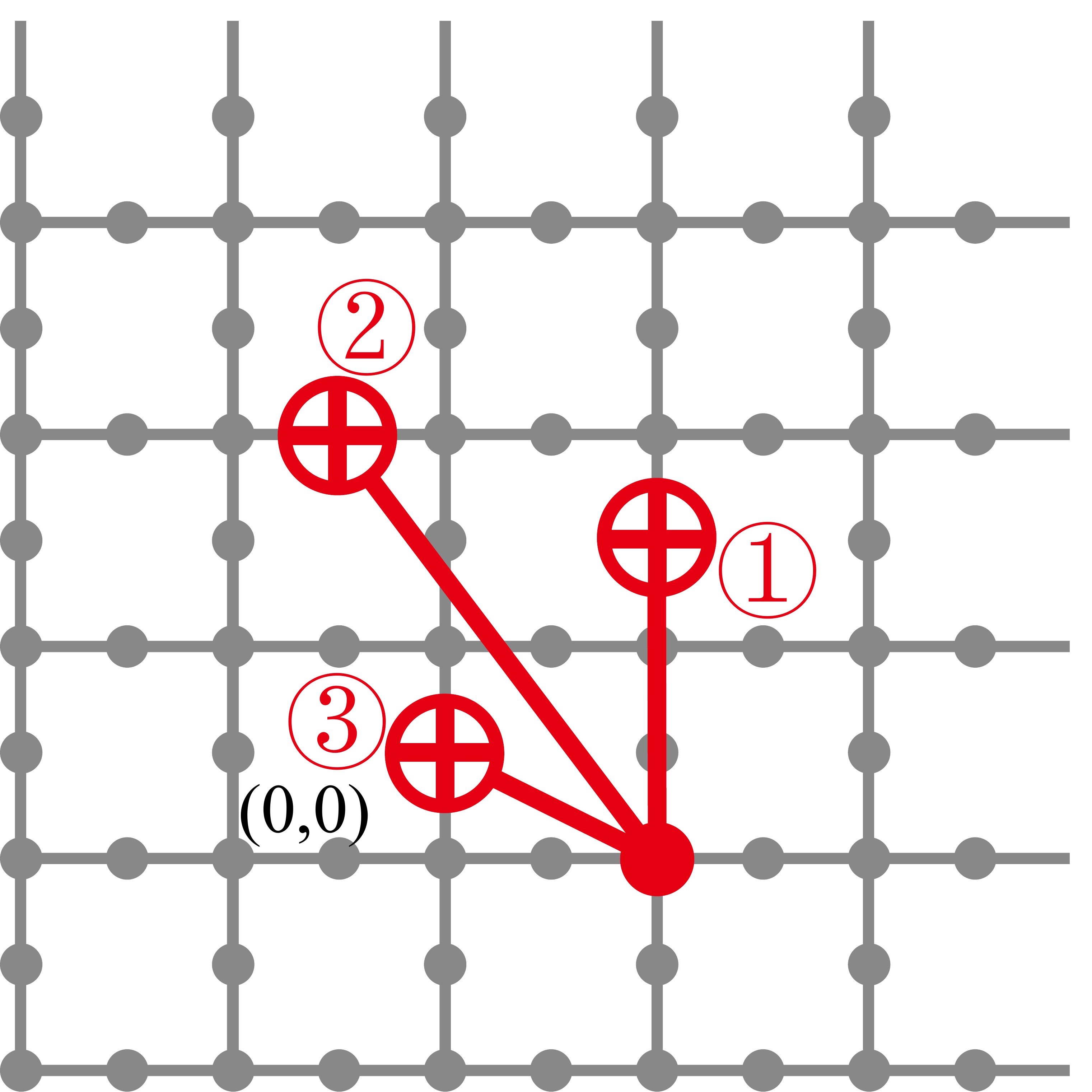}}
    \hspace{0.25cm}
    \subfigure[CNOT gates $U_2$]{\includegraphics[width=0.22\linewidth]{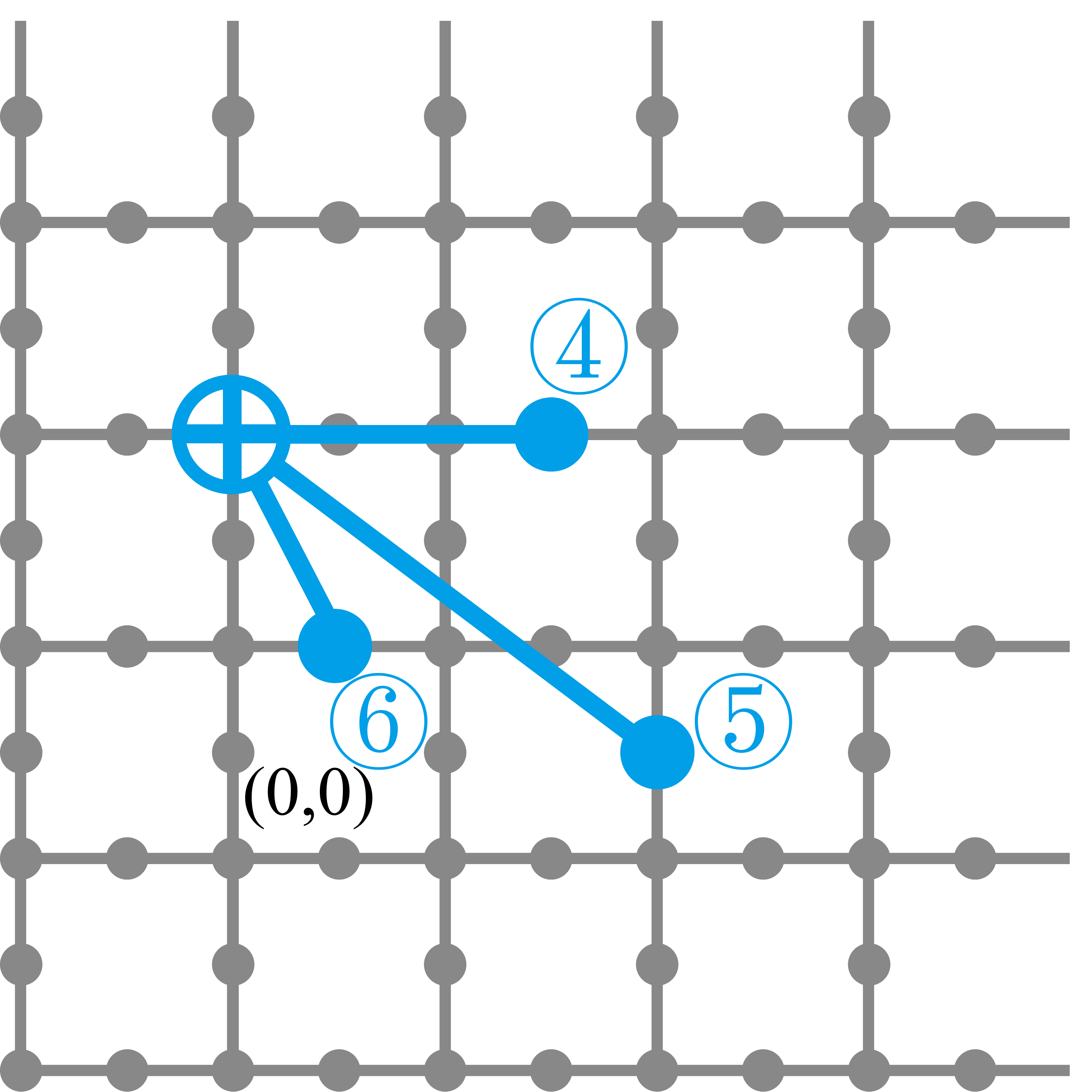}}
    \hspace{0.25cm}
    \subfigure[$X$-type stabilizer after applying $U$]{\includegraphics[width=0.22\linewidth]{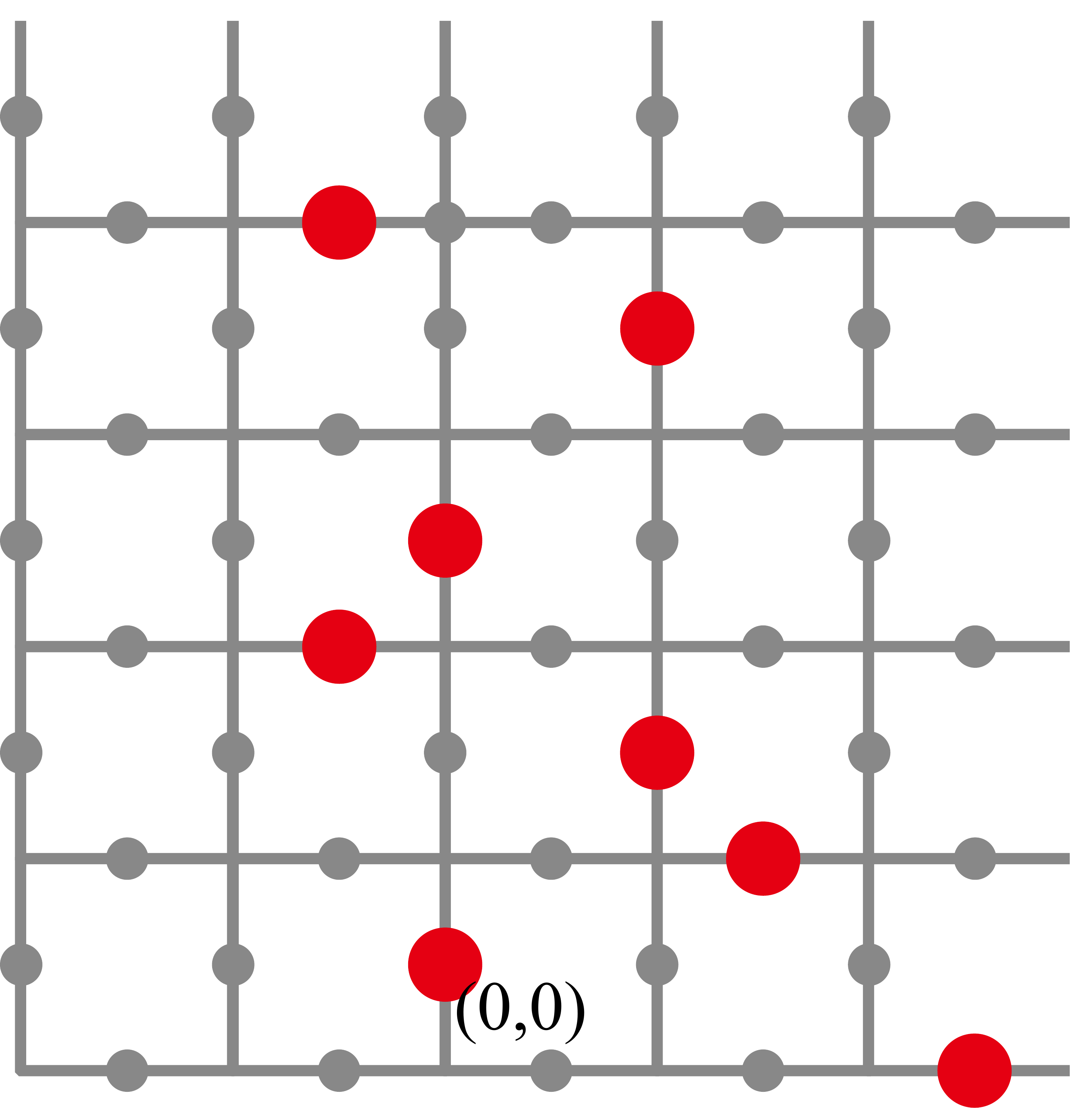}}
    \hspace{0.25cm}
    \subfigure[$Z$-type stabilizer after applying $U$]{\includegraphics[width=0.22\linewidth]{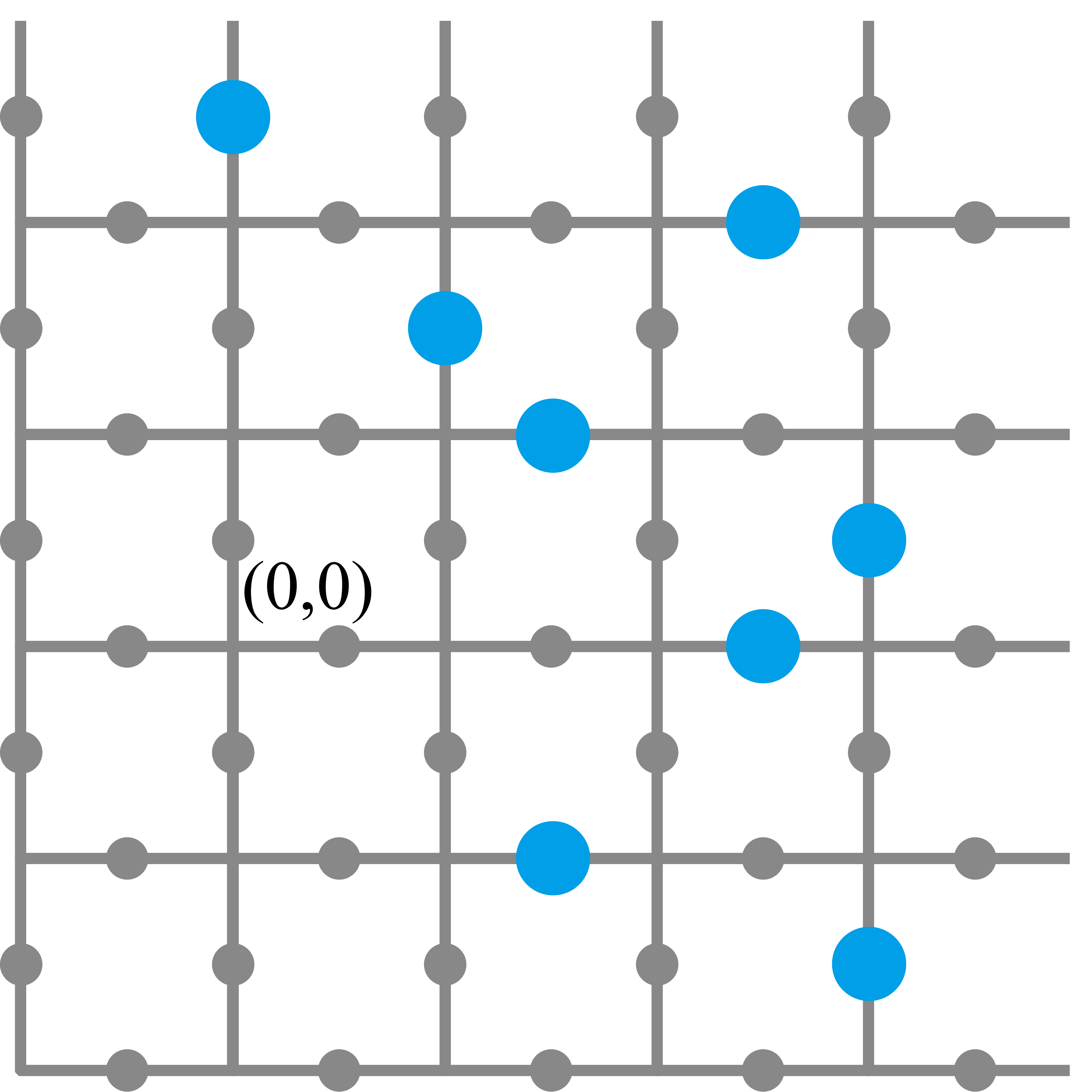}}
    \caption{
    Clifford reduction of the $[[75,10,5]]$ SBB code.
    (a)(b) The circuit $U=U_2U_1$ consists of two translation-invariant layers of CNOT gates.
    Within each layer, the CNOT gates mutually commute, so both $U_1$ and $U_2$ are depth-3 circuits.
    The circuit maps the gauge pair $G_{X,1},G_{Z,1}$ to a single-qubit Pauli $X,Z$ pair on the vertex gauge qubit.
    These gauge qubits can then be decoupled from the protected subsystem.
    (c)(d) After removing the vertex gauge qubits, the transformed stabilizers act on the edge qubits and have weight $8$, reduced from weight $12$ in the subsystem description.
    On the $5\times5$ lattice, the resulting stabilizer code has parameters $[[50,10,5]]$.
    }
    \label{fig: clifford gate and stabilizers}
\end{figure*}

To relate this subsystem construction to a more familiar stabilizer code description, consider the finite-depth Clifford circuit shown in Fig.~\ref{fig: clifford gate and stabilizers}.
This circuit maps the local gauge pair $G_{X,1},G_{Z,1}$ to a single-qubit Pauli $X,Z$ pair on the vertex qubit in each unit cell.
These vertex qubits are gauge qubits and can therefore be decoupled from the protected logical subsystem.
After removing them, the remaining edge qubits support an induced BB stabilizer code with $50$ physical qubits, weight-$8$ stabilizers, and $k=10$ logical qubits.
For this example, the induced stabilizer code has parameters $[[50,10,5]]$.

Thus, the $[[75,10,5]]$ SBB code realizes the logical content and distance of a BB code, but exposes only local weight-$4$ gauge checks for direct measurement.
Compared with the $[[75,2,5]]$ subsystem surface code at the same block length and distance, it encodes 5 times more logical qubits.
We now turn to the general construction underlying this example.

%%%%%%%%%%%%%%%%%%%%%%%%%%%%%%%%%%%%%%%%%%%%%%%%%%%%%%%%%%%%%
\prlsection{General construction and SBB correspondence}
\label{sec: Weight-4 subsystem bivariate bicycle codes}
We now use the Laurent-polynomial formalism reviewed in Appendix~\ref{app:laurent_formalism} to derive the local stabilizers, exclude nonlocal stabilizers on finite tori, and identify the corresponding BB code.

The square lattice has three qubits per unit cell.
We consider CSS gauge generators with two $X$-type and two $Z$-type generators per unit cell.
In the Laurent-polynomial representation, the first three components record the $X$ support and the last three components record the $Z$ support:
\begin{eqs}
    G_{X,1} =
    \left[\begin{array}{c}
        \rule{0pt}{1.1em}f_1(x,y) \\
        \rule{0pt}{1.1em}g_1(x,y) \\
        \rule{0pt}{1.1em}h_1(x,y) \\
        \hline
        0 \\
        0 \\
        0
    \end{array}\right],
    \qquad
    G_{X,2} =
    \left[\begin{array}{c}
        \rule{0pt}{1.1em}f_2(x,y) \\
        \rule{0pt}{1.1em}g_2(x,y) \\
        \rule{0pt}{1.1em}h_2(x,y) \\
        \hline
        0 \\
        0 \\
        0
    \end{array}\right], \\
    G_{Z,1} =
    \left[\begin{array}{c}
        0 \\
        0 \\
        0 \\
        \hline
        \rule{0pt}{1.1em} f_3(x,y) \\
        \rule{0pt}{1.1em} g_3(x,y) \\
        \rule{0pt}{1.1em}h_3(x,y)
    \end{array}\right],
    \qquad
    G_{Z,2} =
    \left[\begin{array}{c}
        0 \\
        0 \\
        0 \\
        \hline
        \rule{0pt}{1.1em} f_4(x,y) \\
        \rule{0pt}{1.1em} g_4(x,y) \\
        \rule{0pt}{1.1em} h_4(x,y)
    \end{array}\right],
    \label{eq: general_subsystem_gauge_operators}
\end{eqs}
where $f_i(x,y), g_i(x,y), h_i(x,y)$ belong to the \textbf{Laurent polynomial ring} $$R=\mathbb{Z}_2[x^{\pm1},y^{\pm1}].$$
%The polynomial arguments $(x,y)$ are suppressed below.
For two Pauli operators represented by Laurent-polynomial vectors $v_1$ and $v_2$, define the product
\begin{eqs}
    v_1\cdot v_2 :=\overline{v}_1^{\mathsf T}\Lambda v_2,
\end{eqs}
where $\Lambda$ is the standard $6\times6$ symplectic form, and the antipode map is defined by
\begin{eqs}
 x^a y^b \mapsto \overline{x^a y^b}:=x^{-a}y^{-b}.
\end{eqs}
The constant term of $v_1\cdot v_2$ determines whether the two Pauli operators commute, while the full Laurent polynomial records the commutators between $v_1$ and all lattice translates of $v_2$.

The commutation data between the two $X$-type and two $Z$-type gauge generators are encoded in the $2\times 2$ matrix
\begin{eqs}
M_c
% &=
% \begin{pmatrix}
% G_{X,1}\\
% G_{X,2}
% \end{pmatrix}
% \cdot
% \begin{pmatrix}
% G_{Z,1} & G_{Z,2}
% \end{pmatrix}\\
&=
\begin{pmatrix}
G_{X,1}\cdot G_{Z,1} & G_{X,1}\cdot G_{Z,2} \\
G_{X,2}\cdot G_{Z,1} & G_{X,2}\cdot G_{Z,2}
\end{pmatrix}
\in M_2(R). 
\label{eq: commutation matrix M_c}
\end{eqs}
Stabilizers are linear combinations of gauge generators that commute with all gauge generators. $X$-type stabilizers come from the left kernel of $M_c$, while $Z$-type stabilizers come from the right kernel of $M_c$:
\begin{lemma}[Kernel-stabilizer correspondence]
\label{lemma: subsystem code}
If $u=(u_1,u_2)$ satisfies $uM_c=0$, then
\begin{eqs}
 S_X=\bar{u}_1G_{X,1}+\bar{u}_2G_{X,2}
\end{eqs}
is an $X$-type stabilizer.
Similarly, if $v=(v_1,v_2)^{\mathsf T}$ satisfies $M_cv=0$, then
\begin{eqs}
 S_Z=v_1G_{Z,1}+v_2G_{Z,2}
\end{eqs}
is a $Z$-type stabilizer.
\end{lemma}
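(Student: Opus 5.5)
The plan is to check the two defining properties of a stabilizer in the subsystem formalism of Appendix~\ref{app: preliminaries}. An element is a stabilizer if (i) it lies in the gauge group, and (ii) it commutes with every gauge operator, i.e.\ it lies in the center of the gauge group. Property (i) is immediate. In the Laurent-polynomial representation, multiplying a vector by a monomial $x^ay^b$ implements a lattice translation. Hence $\bar u_1 G_{X,1}+\bar u_2 G_{X,2}$ with $\bar u_i\in R$ represents a finite product of translated copies of $G_{X,1}$ and $G_{X,2}$, and so is an element of the gauge group. The same holds for $v_1G_{Z,1}+v_2G_{Z,2}$.

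For property (ii) in the $X$ case, note that $S_X$ is $X$-type. It therefore commutes automatically with all $X$-type gauge generators and all their translates, so only the $Z$-type generators need checking. Commutation of $S_X$ with every translate of $G_{Z,j}$ is equivalent to the vanishing of the full Laurent polynomial $S_X\cdot G_{Z,j}$. The reason is that the coefficient of $x^ay^b$ in $S_X\cdot G_{Z,j}$ records the commutator of $S_X$ with the corresponding translate of $G_{Z,j}$. I will then use sesquilinearity of the product $v_1\cdot v_2=\overline{v}_1^{\mathsf T}\Lambda v_2$: it is conjugate-linear in the first slot and linear in the second. This gives
\begin{eqs}
S_X\cdot G_{Z,j}=\overline{\bar u_1}\,(G_{X,1}\cdot G_{Z,j})+\overline{\bar u_2}\,(G_{X,2}\cdot G_{Z,j})=u_1 (M_c)_{1j}+u_2 (M_c)_{2j}=(uM_c)_j=0,
\end{eqs}
using that the antipode is an involutive ring automorphism. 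This is exactly why the conjugates $\bar u_i$ appear in the definition of $S_X$.

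The $Z$ case is symmetric. $S_Z$ commutes trivially with $Z$-type gauge operators. For the $X$-type generators, linearity in the second slot gives
\begin{eqs}
G_{X,i}\cdot S_Z=(M_c)_{i1}v_1+(M_c)_{i2}v_2=(M_cv)_i=0.
\end{eqs}
Therefore $S_Z$ commutes with every translate of every $X$-type gauge generator. Since commutation of a Pauli operator with the gauge group reduces to commutation with each generator, both operators lie in the center of the gauge group, which completes the argument.

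The only step requiring care is the bookkeeping of the antipode and the translation convention, namely confirming that ``the coefficients of $v_1\cdot v_2$ encode commutation with all translates'' holds with the conjugation placed in the first argument. I would verify this directly on monomials, $x^ay^b e_k\cdot x^cy^d e_l=x^{c-a}y^{d-b}\,e_k^{\mathsf T}\Lambda e_l$, over $\mathbb{Z}_2$. I would also note the finite-torus case, where $R$ is replaced by its quotient by $(x^{L_x}-1,y^{L_y}-1)$; the identities above pass to the quotient unchanged.
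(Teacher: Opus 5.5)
Your proof is correct and is essentially the paper's argument. The paper gives no separate proof beyond two remarks: stabilizers are combinations of gauge generators that commute with all gauge generators, and vanishing of the Laurent polynomial $u\cdot v$ encodes commutation with every translate. Your sesquilinearity computation, $S_X\cdot G_{Z,j}=(uM_c)_j$ and $G_{X,i}\cdot S_Z=(M_cv)_i$, spells this out and also explains why the antipodes $\bar u_i$ appear in $S_X$.
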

A sufficient condition for such kernels is that the commutation matrix is singular.
Write
\begin{eqs}
M_c=
\begin{pmatrix}
 a & b \\
 c & d
\end{pmatrix}\in M_2(R),
\end{eqs}
and suppose that $\det(M_c) = 0$.
% \begin{eqs}
% \det(M_c)=ad+bc=0 .
% \label{eq: det(M_c)=0}
% \end{eqs}
It follows that
\begin{eqs}
(c,a)M_c=0,\qquad
M_c
\begin{pmatrix}
b\\
a
\end{pmatrix}
=0 .
\end{eqs}
Thus $(c,a)$ gives a left kernel vector and $(b,a)^{\mathsf T}$ gives a right kernel vector.
% The corresponding stabilizers are
% \begin{eqs}
% S_X&=\bar{c}~G_{X,1}+\bar{a}~G_{X,2},\\
% S_Z&=b~G_{Z,1}+a~G_{Z,2}.
% \label{eq: SBB_local_stabilizers}
% \end{eqs}
Equivalently, in the matrix form,
\begin{eqs}
P:=
\begin{pmatrix}
1&0\\
c&a
\end{pmatrix},~~
Q:=
\begin{pmatrix}
1&b\\
0&a
\end{pmatrix}
\Rightarrow
PM_cQ=
\begin{pmatrix}
a&0\\
0&0
\end{pmatrix}.
\end{eqs}
The vanishing second row and second column encode the stabilizer combinations above.
If the pivot $a(x,y)$ is a monomial in $R$, then $P$ and $Q$ are invertible, since their determinants are units in $R$. Hence, replacing $G_{X,2}$ and $G_{Z,2}$ by $S_X$ and $S_Z$ is merely a change of basis:
\begin{eqs}
    \langle G_{X,1},G_{Z,1},G_{X,2},G_{Z,2}\rangle
    =
    \langle G_{X,1},G_{Z,1},S_X,S_Z\rangle .
    \label{eq: same gauge group}
\end{eqs}
%If another entry of $M_c$ is a monomial, we can relabel the $X$- and $Z$-type gauge generators and use that entry as the pivot. 
% More generally, the existence of such a basis change is equivalent to the four entries of $M_c$ generating the unit ideal. This gives the following criterion, proved in Appendix~\ref{app: proof of Lemma monomial condition}.
More generally, whether such a basis change exists is characterized by the following lemma, proved in Appendix~\ref{app: proof of Lemma monomial condition}.
\begin{lemma}[Unimodular-entry reduction criterion]
\label{lemma: monomial condition}
Let $M\in M_2(R)$ satisfy $M\neq 0$ and $\det(M)=0$.
For $M=\begin{pmatrix}
 a & b \\
 c & d
\end{pmatrix}$, define the \textbf{entry ideal}
\begin{equation}
I_1(M):=\langle a, b, c, d\rangle \subseteq R
\end{equation}
to be the ideal generated by the entries of $M$. Then the following conditions are equivalent:
\begin{enumerate}
    \item There exist invertible matrices $P,Q\in GL_2(R)$ such that
    \begin{equation}
        PMQ =
        \begin{pmatrix}
            1&0\\
            0&0
        \end{pmatrix}~.
    \end{equation}
    \item The entries of $M$ generate the unit ideal, i.e.,
    \begin{equation}
        I_1(M)=R.
    \end{equation}
\end{enumerate}
\end{lemma}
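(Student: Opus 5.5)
The direction (1)$\Rightarrow$(2) is the easy one, and I would prove it first using the fact that the entry ideal is invariant under multiplication by matrices over $R$. If $N=PMQ$, then every entry of $N$ is an $R$-linear combination of entries of $M$, so $I_1(PMQ)\subseteq I_1(M)$. Applying the same observation to $M=P^{-1}NQ^{-1}$ gives the reverse inclusion, so $I_1(M)=I_1(PMQ)$. With $PMQ=\mathrm{diag}(1,0)$ we have $I_1(PMQ)=\langle 1\rangle=R$, and hence $I_1(M)=R$.

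For (2)$\Rightarrow$(1), the plan is to exploit the rank-one structure forced by $\det M=0$ over the domain $R$. The ring $R=\mathbb{Z}_2[x^{\pm1},y^{\pm1}]$ is a localization of a polynomial ring, so it is a UFD and gcds exist. Since $M\neq0$ and $\det M=0$, the two rows of $M$ are linearly dependent over the fraction field.

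\emph{Factorization step.} I would write $M=\alpha\beta^{\mathsf T}$ with column vectors $\alpha=(\alpha_1,\alpha_2)^{\mathsf T}$ and $\beta=(\beta_1,\beta_2)^{\mathsf T}$ in $R^2$. To construct it, pick a nonzero row, say $(a,b)$, let $g=\gcd(a,b)$, and set $\beta=(a/g,b/g)$, which is primitive. The second row $(c,d)$ is proportional to $(a,b)$ over the fraction field, i.e.\ $ad=bc$ (in characteristic 2, $\det M=ad+bc=0$). Hence $(c,d)=t\beta$ with $t=c g/a\in\mathrm{Frac}(R)$. Because $\beta$ is primitive, $t\beta\in R^2$ forces $t\in R$: if a prime $p$ divided the denominator of $t$, it would divide both components of $\beta$. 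This gives $\alpha=(g,t)^{\mathsf T}$. The case where the first row vanishes is symmetric.

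\emph{Unimodularity of the factors.} Next, $I_1(M)=\langle \alpha_i\beta_j\rangle=\langle\alpha_1,\alpha_2\rangle\langle\beta_1,\beta_2\rangle$. If this equals $R$, then each factor ideal is already $R$, so $\alpha$ and $\beta$ are unimodular vectors. A unimodular vector $\alpha$ with $r_1\alpha_1+r_2\alpha_2=1$ completes to an invertible matrix: $A=\begin{pmatrix}\alpha_1&-r_2\\ \alpha_2&r_1\end{pmatrix}$ has determinant $1$ and first column $\alpha$. Likewise there is $B\in GL_2(R)$ with first column $\beta$. Then $M=\alpha\beta^{\mathsf T}=A\,\mathrm{diag}(1,0)\,B^{\mathsf T}$, and taking $P=A^{-1}$, $Q=(B^{\mathsf T})^{-1}$ yields (1). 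This $2\times2$ case needs no appeal to Quillen--Suslin.

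\emph{Main obstacle.} The delicate step is the factorization $M=\alpha\beta^{\mathsf T}$ with both factors having entries in $R$ rather than in the fraction field. The plan handles this through primitivity of $\beta$ together with the UFD property; divisibility only matters up to units (monomials), which is harmless. As an alternative, one could argue that $I_1(M)=R$ makes $\mathrm{coker}\,M$ a projective module of rank one, hence free over $R$, but the explicit gcd argument is simpler.
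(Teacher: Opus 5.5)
Your proposal is correct and follows essentially the same route as the paper: a gcd-based rank-one factorization $M=\alpha\beta^{\mathsf T}$ over the UFD $R$, the identity $I_1(M)=\langle\alpha_1,\alpha_2\rangle\langle\beta_1,\beta_2\rangle$ forcing both factors to be unimodular, and completion of unimodular $2$-vectors to matrices in $GL_2(R)$. The paper takes the gcd of the first column rather than the first row and writes $P_0,Q_0$ directly, which is cosmetic; your only small slip is that $t=cg/a$ presumes $a\neq0$, and if $a=0$ then $b\neq0$ forces $c=0$, so you take $t=dg/b$ instead.
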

%We say that $M$ satisfies the \textbf{unimodular-entry condition} when $I_1(M)=R$.
For a general $p\times q$ matrix $M$ of generic rank $r$, the analogous condition is $I_r(M)=R$, where $I_r(M)$ is the ideal generated by all $r\times r$ minors of $M$.
In that case, invertible row and column operations bring $M$ to the canonical rank-$r$ form; see Appendix~\ref{app: Generalization of Lemma monomial condition}.

%$\begin{psmallmatrix} I_r&0\\0&0 \end{psmallmatrix}$

Thus far, we have worked on the infinite plane and identified local stabilizers. On a finite torus, additional stabilizers may appear with support winding around noncontractible cycles; we refer to them as \textbf{nonlocal stabilizers}. The Bacon-Shor code provides a standard example~\cite{Bacon2006Operator}.
The condition $I_1(M)=R$ rules out such stabilizers, as stated in the following theorem, proved in Appendix~\ref{app: proof of Theorem nonlocal stabilizer}.
\begin{theorem}[Entry-ideal criterion for nonlocal stabilizers]
\label{thm: nonlocal stabilizer}
Let $M\in M_2(R)$ satisfy $M\neq 0$ and $\det(M)=0$.
For an $n\times m$ torus, define
\begin{equation}
R_{n,m}:=R/(x^n-1,y^m-1),
\end{equation}
and let $M^{(n,m)}$ denote the image of $M$ in $M_2(R_{n,m})$.
Let $I_1(M)$ be the entry ideal of $M$.
\begin{enumerate}
    \item If $I_1(M)=R$, then, for every choice of positive integers $n$ and $m$, imposing the periodic boundary conditions $x^n=y^m=1$ does not create kernel vectors beyond those generated by the infinite-plane local kernels. Equivalently, no additional nonlocal stabilizers appear.
    \item If $I_1(M)$ is a proper ideal of $R$, then there exist positive integers $n$ and $m$ such that the finite-torus kernel of $M^{(n,m)}$ is strictly larger than the reduction of the infinite-plane local kernel. Any nonzero Pauli operator obtained from such an extra kernel vector is an additional nonlocal stabilizer.
\end{enumerate}
\end{theorem}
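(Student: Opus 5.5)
For part 1, we use Lemma~\ref{lemma: monomial condition}. If $I_1(M)=R$, there exist $P,Q\in GL_2(R)$ with $PMQ=\mathrm{diag}(1,0)$. Reduction modulo $(x^n-1,y^m-1)$ is a ring homomorphism $R\to R_{n,m}$. It sends invertible matrices to invertible matrices, because the determinant is a unit and its image is still a unit. Hence $P^{(n,m)}M^{(n,m)}Q^{(n,m)}=\mathrm{diag}(1,0)$ holds over $R_{n,m}$.

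The right kernel of $\mathrm{diag}(1,0)$ over any ring is exactly $\{(0,t)^{\mathsf T}\}$. The right kernel of $M^{(n,m)}$ is therefore $Q^{(n,m)}\cdot\{(0,t)^{\mathsf T}: t\in R_{n,m}\}$. This is the $R_{n,m}$-span of the image of the second column of $Q$, which already lies in the infinite-plane right kernel of $M$. So every torus kernel vector is a reduction of a local kernel vector. The left kernel is handled the same way, using the second row of $P$. By Lemma~\ref{lemma: subsystem code}, all stabilizers on the torus are then generated by translates of the local ones, and no nonlocal stabilizer appears.

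For part 2, assume $I=I_1(M)\neq R$. We first reduce to a one-dimensional kernel over $R$. Since $\det M=0$ and $M\neq0$, $M$ has generic rank $1$ over the fraction field. Because $R$ is a UFD, we can write $M=\alpha\,w\,u^{\mathsf T}$. Here $u=(u_1,u_2)^{\mathsf T}$ and $w$ are primitive vectors, meaning their entries have gcd $1$, and $\alpha\ne0$.

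The right kernel of $M$ over $R$ is then $R\cdot(u_2,u_1)^{\mathsf T}$ (signs are irrelevant over $\mathbb Z_2$). Its reduction mod $(x^n-1,y^m-1)$ is the multiples of $\kappa:=(u_2,u_1)^{\mathsf T}$ in $R_{n,m}$. Moreover $I=\alpha\,I_1(w)\,I_1(u)$, so $I$ is proper exactly when one of these three factors is.

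Next we pick a maximal ideal $\mathfrak m\supseteq I$. Its residue field $R/\mathfrak m$ is a finite field $\mathbb F_{2^r}$, since $R$ is a finitely generated $\mathbb Z_2$-algebra (Nullstellensatz). The images of $x$ and $y$ are nonzero elements of $\mathbb F_{2^r}^\times$, so they are roots of unity. Choose $n=m=2^r-1$ (or a multiple). Then $x^n-1,y^m-1\in\mathfrak m$, and the point $p=(\xi,\eta)\in(\mathbb F_{2^r}^\times)^2$ lies on the torus spectrum. At $p$ every entry of $M$ vanishes, because $I\subseteq\mathfrak m$.

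The extra kernel vector is built from a finite Fourier analysis. $R_{n,m}$ is a finite commutative ring. For odd $n,m$ it is semisimple and splits as a product of finite fields indexed by Frobenius orbits of points $(\xi,\eta)$. The factor $\mathbb F_p$ at $p$ corresponds to an idempotent $e_p\in R_{n,m}$.

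Now take $\nu=e_p\,(1,0)^{\mathsf T}$. Then $M^{(n,m)}\nu=e_p M(p)(1,0)^{\mathsf T}=0$, because $M(p)=0$. On the other hand, suppose $\nu=t\kappa$ for some $t$. Projecting to the $p$-component gives $(1,0)^{\mathsf T}=t(p)\,\kappa(p)$, and projecting to any other component $q$ gives $0=t(q)\kappa(q)$. We need this to be impossible. It is, whenever $\kappa(p)\notin\mathbb F_p\cdot(1,0)^{\mathsf T}$; for instance, if $\kappa(p)=0$ or $u_1(p)\neq0$. If instead $\kappa(p)$ happens to be proportional to $(1,0)^{\mathsf T}$, we use $e_p(0,1)^{\mathsf T}$, which cannot be a multiple of $\kappa$ either. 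In every case the torus kernel is strictly larger than the reduced local kernel. The resulting Pauli operator is nonzero as a gauge-group element, as long as the gauge generators are independent at $p$; its support is spread over the whole torus, so it is a nonlocal stabilizer.

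\textbf{Main obstacle.} The hard step is the last one in part 2. We must guarantee that the extra kernel vector is genuinely not in the span of the reduced local kernel, and that it gives a nontrivial operator rather than a gauge relation. The semisimplicity argument needs odd $n,m$. This is why we choose $n=m=2^r-1$ instead of arbitrary periods, since it avoids the non-reduced components that appear when $2\mid n$. The rank-counting at the point $p$ must also be done carefully, because $M(p)=0$ drops the rank from $1$ to $0$ exactly there.
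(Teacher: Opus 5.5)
Your argument is correct and follows the paper's structure. Part 1 is the same transport of $PMQ=\mathrm{diag}(1,0)$ to $R_{n,m}$. In part 2 you likewise take a maximal ideal $\mathfrak p\supseteq I_1(M)$ with finite residue field $K$ and choose odd periods so that $R_{n,m}$ is a product of fields. You then compare the two kernels in the factor $A_{\mathfrak n}\cong K$ where $M$ vanishes, so that the torus kernel there is all of $K^2$.

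Where you differ is in showing that the reduced local kernel is too small at that point. You use the rank-one factorization $M=\alpha\,w\,u^{\mathsf T}$ over the UFD $R$ (the same factorization the paper uses to prove Lemma~\ref{lemma: monomial condition}) to see that $\mathcal K_R=R\kappa$ is cyclic. Its image in $K^2$ is therefore $K\kappa(p)$, of dimension at most one, and you get an explicit witness $e_p(1,0)^{\mathsf T}$ or $e_p(0,1)^{\mathsf T}$. The paper never computes generators of $\mathcal K_R$. Instead it localizes at $\mathfrak p$ and uses Nakayama's lemma: if the image of $(\mathcal K_R)_{\mathfrak p}$ filled $K^2$, then $M_{\mathfrak p}=0$, hence $M=0$.

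Your route is more elementary and constructive for $2\times 2$ matrices. The Nakayama route is what the paper reuses for the rectangular rank-$r$ version (Theorem~\ref{thm:determinantal_nonlocal_stabilizers}). There your method would first need $\mathcal K_R$ to be free of rank $q-r$; this is true, since $R$ has global dimension two and Quillen--Suslin applies, but it is extra input.

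Two small corrections. First, the parenthetical ``or a multiple'' must read ``or an odd multiple.'' For even periods $R_{n,m}$ is not reduced, and the local factor at $\mathfrak n$ is no longer the field $K$. Second, your closing claims are not needed: that the operator's support is spread over the whole torus, and that it is nonzero when the generators are independent at $p$. The theorem's last sentence is conditional on the Pauli operator being nonzero, and ``additional nonlocal'' there just means arising from a vector outside $\rho_A(\mathcal K_R)$.
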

For a general $p\times q$ matrix $M$ of generic rank $r$, the no-nonlocal-stabilizer condition is $I_r(M)=R$; see Appendix~\ref{app: Generalization of Theorem nonlocal stabilizer}.

In what follows, we focus on nonzero singular $2\times 2$ commutation matrices $M_c$ satisfying $I_1(M_c)=R$, i.e., no nonlocal stabilizers.
%By Theorem~\ref{thm: nonlocal stabilizer}, this condition rules out nonlocal stabilizers on any finite torus; hence, the stabilizer group is generated by the local kernels already identified on the infinite plane.
Lemma~\ref{lemma: monomial condition} allows an invertible change of gauge-generator bases such that
\begin{equation}
    M_c =
    \begin{pmatrix}
        1 & 0 \\
        0 & 0
    \end{pmatrix}.
\label{eq: canonical Mc}
\end{equation}
We therefore work in this canonical basis. The next theorem shows that the remaining noncommuting gauge pair can then be decoupled by a finite-depth Clifford circuit.

\begin{theorem}[SBB-BB Clifford correspondence]
\label{thm: exist Clifford gate}
When $M_c$ is in the canonical form~\eqref{eq: canonical Mc}, there exists a finite-depth Clifford circuit $U$ that maps the noncommuting gauge pair in each unit cell to a single-qubit Pauli pair:
\begin{eqs}
    U G_{X,1}(\mathbf r) U^\dagger &= X_{q(\mathbf r)},\\
    U G_{Z,1}(\mathbf r) U^\dagger &= Z_{q(\mathbf r)},
\end{eqs}
where $q(\mathbf r)$ denotes the corresponding gauge qubit in unit cell $\mathbf r$.
The remaining transformed gauge generators have no support on the qubits $q(\mathbf r)$ and generate the stabilizer group on the complementary qubits. Hence, the qubits $q(\mathbf r)$ are disentangled gauge qubits; removing them yields the corresponding BB stabilizer code.
\end{theorem}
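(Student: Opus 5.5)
In the canonical basis~\eqref{eq: canonical Mc} the data are: a translation-invariant $X$-type vector $G_{X,1}=(f_1,g_1,h_1\,|\,0,0,0)$ and a $Z$-type vector $G_{Z,1}=(0,0,0\,|\,f_3,g_3,h_3)$ with $G_{X,1}\cdot G_{Z,1}=\bar f_1 f_3+\bar g_1 g_3+\bar h_1 h_3=1$. Thus $G_{X,1}(\mathbf r)$ anticommutes with $G_{Z,1}(\mathbf r)$ and commutes with all other translates. The remaining generators $S_X:=G_{X,2}$ and $S_Z:=G_{Z,2}$ commute with everything, so they are stabilizers.

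The plan is to view a CSS Clifford circuit built from CNOTs as an invertible matrix $A\in GL_3(R)$. It acts on $X$-vectors by $w\mapsto Aw$ and on $Z$-vectors by $w\mapsto (\bar A^{\mathsf T})^{-1}w$, which preserves the pairing $\bar w_X^{\mathsf T}w_Z$. We want an $A$ with
\[
A\,(f_1,g_1,h_1)^{\mathsf T}=e_1,\qquad (\bar A^{\mathsf T})^{-1}(f_3,g_3,h_3)^{\mathsf T}=e_1,
\]
where $e_1=(1,0,0)^{\mathsf T}$ and the vertex qubit is taken as $q(\mathbf r)$ after a relabeling if needed. The first condition says that $u:=(f_1,g_1,h_1)^{\mathsf T}$ is unimodular and extends to a basis of $R^3$. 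It is unimodular because $\bar f_3,\bar g_3,\bar h_3$ give a Bézout relation; applying the antipode to $\bar f_1 f_3+\bar g_1 g_3+\bar h_1 h_3=1$ shows that the entries of $u$ generate $R$. By the Quillen--Suslin theorem for Laurent polynomial rings over a field (Swan's extension), together with Suslin's stability theorem, every unimodular row of length $\ge 3$ over $R$ is completable. More strongly, $GL_3(R)=\text{(units)}\cdot E_3(R)$, so $u$ can be sent to $e_1$ by elementary operations. Choose $A_0$ with $A_0u=e_1$. Then $w:=(\bar A_0^{\mathsf T})^{-1}(f_3,g_3,h_3)^{\mathsf T}$ satisfies $\bar e_1^{\mathsf T} w=1$, so $w=(1,w_2,w_3)^{\mathsf T}$. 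A further elementary matrix $B$ that adds multiples of row 1 into rows 2 and 3 fixes $e_1$ on the $X$ side. On the $Z$ side it subtracts $\bar w_2 ,\bar w_3$-type multiples of rows 2 and 3 into row 1; choose instead the dual version that clears $w_2,w_3$ while fixing $e_1$ on the $X$ side. Then $A=BA_0$ achieves both conditions.

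Each elementary matrix $E_{ij}(p)$ with $p\in R$ is a translation-invariant layer of CNOTs. There is one CNOT from qubit type $j$ at $\mathbf r+\mathbf s$ to qubit type $i$ at $\mathbf r$ for each monomial $x^{s_1}y^{s_2}$ of $p$. All these CNOTs commute, so the layer has depth at most the number of monomials of $p$, which is bounded. A product of finitely many such layers is a finite-depth circuit. On a finite torus the same circuit descends to $R_{n,m}$, so the statement holds uniformly in $n,m$.

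After applying $U$, the images of $G_{X,1},G_{Z,1}$ are $X_q,Z_q$. The images of $S_X,S_Z$ commute with every translate of $X_q$ and $Z_q$, so their first components vanish in both the $X$ and $Z$ parts. Hence they are supported on the two edge qubits and define a two-qubit-per-cell CSS stabilizer code. The gauge group is $\langle X_q,Z_q\rangle\otimes\langle US_XU^\dagger,US_ZU^\dagger\rangle$, so the $q$ qubits are pure gauge. By Theorem~\ref{thm: nonlocal stabilizer}, the stabilizer group is generated by these local translates, so the residual code is a BB stabilizer code.

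The main obstacle is the existence of the elementary decomposition for a unimodular vector over $R$. This needs Suslin's $K_1$-stability for $\mathbb Z_2[x^{\pm1},y^{\pm1}]$; if it were unavailable, a constructive Euclidean-type reduction would have to be supplied instead. I would cite Suslin's theorem first and then give an algorithmic remark.
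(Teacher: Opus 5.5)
Your proposal is correct and follows the paper's proof essentially step for step. The shared steps are: unimodularity of $(f_1,g_1,h_1)^{\mathsf T}$ from the antipode of the B\'ezout relation; an elementary $E$ with $Ex=e_1$ acting as $\mathrm{diag}(E,(E^{-1})^\dagger)$; a second shear clearing the $Z$-side entries $w_2,w_3$; the commutation argument that forces the first components of the remaining transformed generators to vanish; and Theorem~\ref{thm: nonlocal stabilizer} to conclude that the local translates generate the stabilizer group. Your explicit citation of Suslin's theorem, the CNOT-layer depth count and the finite-torus descent add detail the paper leaves implicit.

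There is one slip in your first description of $B$: adding multiples of row~1 into rows~2 and~3 does not fix $e_1$. However, the ``dual version'' you settle on is exactly the paper's shear $B=\begin{psmallmatrix}1&\bar w_2&\bar w_3\\0&1&0\\0&0&1\end{psmallmatrix}$. It fixes $e_1$ on the $X$ side and sends $(1,w_2,w_3)^{\mathsf T}\mapsto e_1$ on the $Z$ side.
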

The proof is given in Appendix~\ref{app: proof of Theorem exist Clifford gate}.

Theorem~\ref{thm: exist Clifford gate} identifies the protected subsystem of the SBB code with the corresponding BB code. Indeed, the qubits $q(\mathbf r)$ support only the canonical gauge pairs $X_{q(\mathbf r)},Z_{q(\mathbf r)}$, so removing them deletes only gauge degrees of freedom and leaves the logical dimension $k$ unchanged.
The code distance, however, need not be invariant under a finite-depth Clifford circuit. To constrain code distances, we therefore impose the topological order condition: a subsystem code is topological if it has no nontrivial local logical operator~\cite{Bombin2010Topologicalsubsystemcodes, Bombin2012Universaltopologicalphase, bombin_Stabilizer_14}.\footnote{Mathematically, if $\mathcal S_{\mathrm{loc}}$ and $\mathcal G$ denote the groups generated by local stabilizers and gauge operators, respectively, then the topological order condition is
$\mathcal Z(\mathcal S_{\mathrm{loc}})=\mathcal G$,
up to phases. Here $\mathcal Z(\mathcal S)$ denotes local Pauli operators commuting with all elements of $\mathcal S$. For an ordinary stabilizer code, this reduces to
$\mathcal Z(\mathcal S)=\mathcal S$.}

Since a finite-depth Clifford circuit maps local Pauli operators to local Pauli operators, and since any support on gauge qubits $q(\mathbf r)$ can be removed by multiplying by the gauge operators
$X_{q(\mathbf r)}$ and $Z_{q(\mathbf r)}$, nontrivial local logical operators exist in the SBB description if and only if they exist in the corresponding BB description. This gives the following corollary.
\begin{corollary}[Preservation of topological order]
\label{cor:topological_iff}
Under the Clifford correspondence of Theorem~\ref{thm: exist Clifford gate},
the SBB code is a topological subsystem code if and only if the corresponding
BB code is a topological stabilizer code.
\end{corollary}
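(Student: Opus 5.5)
We argue both directions through the Clifford correspondence of Theorem~\ref{thm: exist Clifford gate}, using the definition of topological order as $\mathcal Z(\mathcal S_{\mathrm{loc}})=\mathcal G$ up to phases. Write $U$ for the finite-depth circuit, and $\mathcal S^{\mathrm{SBB}}_{\mathrm{loc}},\mathcal G^{\mathrm{SBB}}$ for the local stabilizer and gauge groups of the SBB code. After conjugation by $U$, the gauge group becomes $U\mathcal G^{\mathrm{SBB}}U^\dagger=\langle X_{q(\mathbf r)},Z_{q(\mathbf r)}\rangle_{\mathbf r}\times\mathcal S^{\mathrm{BB}}$. Here $\mathcal S^{\mathrm{BB}}$ is generated by the transformed $S_X,S_Z$, which act only on the complementary (edge) qubits. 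The local stabilizer group maps to $U\mathcal S^{\mathrm{SBB}}_{\mathrm{loc}}U^\dagger=\mathcal S^{\mathrm{BB}}_{\mathrm{loc}}$. The first step is to record these identities. They follow from Theorem~\ref{thm: exist Clifford gate} together with Eq.~\eqref{eq: same gauge group}, and from the fact that, by Lemma~\ref{lemma: monomial condition} and the canonical form~\eqref{eq: canonical Mc}, the stabilizer group is exactly the center generated by the second-row/column combinations.

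The second step is to show that conjugation by $U$ preserves locality and centralizers. Conjugation is a group automorphism of the Pauli group that preserves commutation, so $U\,\mathcal Z(\mathcal S)\,U^\dagger=\mathcal Z(U\mathcal S U^\dagger)$ for any subgroup $\mathcal S$. Because $U$ has finite depth, it maps operators of support diameter $\ell$ to operators of diameter at most $\ell+O(1)$, and the same holds for $U^\dagger$. Hence the notion of a ``local Pauli operator'' is preserved in both directions, provided the locality scale is allowed to shift by a constant light-cone radius. Consequently the SBB code is topological if and only if $\mathcal Z(\mathcal S^{\mathrm{BB}}_{\mathrm{loc}})=\langle X_q,Z_q\rangle\times\mathcal S^{\mathrm{BB}}$ holds for local operators on the full qubit set.

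The third step reduces this to the BB condition on the edge qubits. Any local Pauli $P$ factors as $P=P_qP_e$, with $P_q$ supported on the gauge qubits and $P_e$ on the edge qubits. The factor $P_q$ lies in $\langle X_q,Z_q\rangle$ and commutes with $\mathcal S^{\mathrm{BB}}$, which has no support on the $q$ qubits. Therefore $P\in\mathcal Z(\mathcal S^{\mathrm{BB}}_{\mathrm{loc}})$ if and only if $P_e$ lies in the edge-qubit centralizer of $\mathcal S^{\mathrm{BB}}_{\mathrm{loc}}$. Likewise $P\in\langle X_q,Z_q\rangle\times\mathcal S^{\mathrm{BB}}$ if and only if $P_e\in\mathcal S^{\mathrm{BB}}$. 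The subsystem condition is thus equivalent to $\mathcal Z(\mathcal S^{\mathrm{BB}}_{\mathrm{loc}})=\mathcal S^{\mathrm{BB}}$ on the edge qubits, up to phases, which is precisely the statement that the BB code is a topological stabilizer code. Locality of $P_e$ is inherited from $P$, and conversely any local $P_e$ is itself a local $P$, so both implications go through.

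The main obstacle is the bookkeeping in the second step. ``Local'' must be made precise, for example as support within a ball of radius below some fraction of the torus size, or on the infinite plane. We must also check that the constant-size light-cone spreading of $U$ does not turn a trivial local operator into a nonlocal one, or the reverse. I would handle this on the infinite plane, where locality means finite support and is exactly preserved by $U$. On finite tori the same argument applies with a radius threshold shifted by the circuit depth. A secondary subtlety is confirming that $U\mathcal S^{\mathrm{SBB}}_{\mathrm{loc}}U^\dagger$ is generated by local BB stabilizers. This holds because the transformed $S_X,S_Z$ are translation-invariant finite-support generators.
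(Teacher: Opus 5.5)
Your proposal is correct and follows the same route as the paper. The paper's argument is that a finite-depth Clifford circuit maps local Paulis to local Paulis while preserving commutation, and that any support on the gauge qubits $q(\mathbf r)$ can be removed by multiplying by $X_{q(\mathbf r)}$ and $Z_{q(\mathbf r)}$, which is exactly your factorization $P=P_qP_e$. Your version makes this sketch precise through the footnote's condition $\mathcal Z(\mathcal S_{\mathrm{loc}})=\mathcal G$ and the identity $U\mathcal G U^\dagger=\langle X_{q(\mathbf r)},Z_{q(\mathbf r)}\rangle\times\mathcal S^{\mathrm{BB}}$, and working on the infinite plane sensibly avoids the light-cone threshold bookkeeping.
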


%%%%%%%%%%%%%%%%%%%%%%%%%%%%%%%%%%%%%%%%%%%%%%%%%%%%%%%%
\prlsection{Search for reflection-symmetric SBB codes}
\label{sec:search_symmetric_sbb}
We now describe a finite search for SBB codes with a combined reflection symmetry: spatial reflection about the diagonal $y=x$ accompanied by the exchange of $X$- and $Z$-gauge generators.
This constraint reduces the search space and can support fold-transversal Clifford gates~\cite{Eberhardt2025Logical}.
Thus, for an $X$-gauge generator
\begin{equation}
    G_X(f,g,h)=(f,g,h,0,0,0)^{\mathsf T},
\end{equation}
the corresponding $Z$-gauge generator is fixed to be
\begin{equation}
    G_Z(f,g,h)
    =
    (0,0,0,f^\sigma,h^\sigma,g^\sigma)^{\mathsf T},
\end{equation}
where $p^\sigma (x,y):=p(y,x)$.
Each candidate is therefore specified by two $X$-type generators
\begin{equation*}
    G_{X,1}=G_X(f_1,g_1,h_1),
    \qquad
    G_{X,2}=G_X(f_2,g_2,h_2),
\end{equation*}
together with their reflected partners $G_{Z,1}$ and $G_{Z,2}$.

For the first generator, its reflected self-pairing is
\begin{equation}
    a
    :=
    G_{X,1}\cdot G_{Z,1}
    =
    \bar f_1 f_1^\sigma
    +
    \bar g_1 h_1^\sigma
    +
    \bar h_1 g_1^\sigma .
\end{equation}
We require $G_{X,1}$ to have weight 4 and $a=1$.\footnote{Generally, one could allow $I_1(M_c)= R$. We impose the stronger condition $a=1$ to simplify the search; the resulting search space is still large enough to produce good SBB code examples.} By Proposition~\ref{prop:weight_four_monomial_pivot}, proved in the Appendix~\ref{app:reflection_symmetric_search}, this forces $f_1$ to be a monomial. We remove the resulting translation freedom by setting $f_1 = 1$, implying the mixed term vanishes,
\begin{equation}
    \bar g_1 h_1^\sigma
    +
    \bar h_1 g_1^\sigma
    =
    0.
\end{equation}
It remains to enumerate the possible pairs $(g_1,h_1)$ with $\operatorname{wt}(g_1)+\operatorname{wt}(h_1)=3$.
Up to interchanging $g_1$ and $h_1$, $G_{X,1}$ falls into one of two cases:
\begin{enumerate}
    \item[(i)] $g_1=0$ and $\operatorname{wt}(h_1)=3$;
    \item[(ii)] $\operatorname{wt}(g_1)=1$ and $h_1=u g_1$, with $u\in R^\rho$ and $\operatorname{wt}(u)=2$.
\end{enumerate}
Here $p^\rho(x,y):=p(y^{-1},x^{-1})$ and $R^\rho:=\{p\in R:p^\rho=p\}$.
% Every enumerated $G_{X,1}$ satisfies $G_{X,1}\cdot G_{Z,1}=1$ and hence the commutation matrix has a monomial pivot.

The second generator is taken to be an arbitrary weight-4 operator, modulo a common monomial translation.
% \begin{equation}
%     \operatorname{wt}(f_2)+\operatorname{wt}(g_2)+\operatorname{wt}(h_2)=4,
% \end{equation}
% again modulo a common monomial translation.
For each pair $(G_{X,1},G_{X,2})$, we compute the commutation matrix
\begin{equation}
    M_c
    =
    \begin{pmatrix}
        G_{X,1}\cdot G_{Z,1} & G_{X,1}\cdot G_{Z,2}\\
        G_{X,2}\cdot G_{Z,1} & G_{X,2}\cdot G_{Z,2}
    \end{pmatrix}
    =
    \begin{pmatrix}
        1 & b\\
        b^\rho & d
    \end{pmatrix}.
\end{equation}
We retain only candidates satisfying $\det M_c=0$.
This condition gives local kernel stabilizers. Since the upper-left entry is already $1$, the candidate has the SBB-BB Clifford correspondence.
The stabilizers are
\begin{equation}
    S_X=b^\sigma \,G_{X,1}+G_{X,2},
    \qquad
    S_Z=b\,G_{Z,1}+G_{Z,2}.
\end{equation}

We evaluate the candidates on twisted tori\footnote{The preceding results were stated for standard tori with $x^n=y^m=1$. The arguments extend directly to twisted tori specified by two independent lattice translation vectors.} with translation vectors~\cite{liang2025generalized}
\begin{equation}
    \mathbf a_1=(0,m),
    \qquad
    \mathbf a_2=(\ell,q),
    \qquad
    0\le q<m .
\end{equation}
Laurent polynomials are reduced in the quotient ring
\begin{equation}
    %R_{m,\ell,q}=
    R/(y^m-1,\;x^\ell y^q-1).
\end{equation}
The torus contains $n=3m\ell$ physical qubits.
For each twisted torus, we construct the finite binary matrices
$H_{G_X}$ and $H_{G_Z}$ from all translates of the $X$- and $Z$-type gauge
generators, and $H_{S_X}$ and $H_{S_Z}$ from all translates of the stabilizers. The number of protected
logical qubits is computed as
\begin{equation}
\begin{aligned}
    k
    &=
    n-\operatorname{rank}_{\mathbb Z_2}(H_{G_X})
      -\operatorname{rank}_{\mathbb Z_2}(H_{S_Z}) \\
    &=
    n-\operatorname{rank}_{\mathbb Z_2}(H_{S_X})
      -\operatorname{rank}_{\mathbb Z_2}(H_{G_Z}) .
\end{aligned}
\end{equation}

Finally, we compute the dressed distance. Dressed logical $X$ operators are
obtained from the CSS code with check matrices $(H_{G_X},H_{S_Z})$, while
dressed logical $Z$ operators are obtained from $(H_{S_X},H_{G_Z})$. The minimum weight among nontrivial dressed logical operators defines the SBB code
distance $d$.
%giving subsystem parameters $[[n,k,d]]$ on the chosen twisted torus.

An explicit formula for the Clifford circuit mapping these SBB codes to their
corresponding BB codes is given in Appendix~\ref{app:reflection_symmetric_search}. Representative search results
are summarized in Table~\ref{tab:code_comparison} and
Appendix~\ref{app: examples_verification}.
% %%%%%%%%%%%%%%%%%%%%%%%%%%%%%%%%%%%%%%%%%%%%%%%%%%%%%%%%%%%%%%%%%%%%%%%%%%%%%%%%%%%%%%%%%%%%%%%%%%%%%%%%%%%%%%%%

%%%%%%%%%%%%%%%%%%%%%%%%%%%%%%%%%%%%%%%%%
\prlsection{Summary and outlook}
We introduced subsystem bivariate bicycle codes that realize the logical structure of BB codes through local gauge measurements.
In the weight-$4$ constructions studied here, products of gauge checks generate local stabilizers, and a determinantal-ideal condition excludes nonlocal stabilizers.
Under these conditions, a finite-depth Clifford circuit decouples the gauge qubits and identifies the protected subsystem with an associated BB stabilizer code.
Thus, within this class, the SBB and BB descriptions have the same logical dimension and satisfy the topological order condition simultaneously.

Several directions remain open. First, efficient syndrome-extraction circuits should be designed and analyzed under realistic circuit-level noise, with attention to measurement depth, wire crossings, correlated faults, and practical decoding strategies. 
Second, understanding whether the observed low-overhead patterns extend systematically to larger sizes is worth studying.
Third, candidates with nonlocal stabilizers, which are excluded in our SBB correspondence, may have their own logical structure and deserve further study.
Finally, allowing more gauge qubits per unit cell may lead to even lower-weight gauge generators or new SBB code families.

\prlsection{Acknowledgements}
We would like to thank Nikolas P. Breuckmann, Jens Niklas Eberhardt, and Zongyuan Wang for their valuable discussions.
Y.-A.C. is supported by the National Natural Science Foundation of China (Grant No.~12474491) and by the Fundamental Research Funds for the Central Universities, Peking University.

\bibliography{bibliography}

@article{haah_module_13,
	author = {Haah, Jeongwan},
	da = {2013/12/01},
	doi = {10.1007/s00220-013-1810-2},
	id = {Haah2013},
	isbn = {1432-0916},
	journal = {Communications in Mathematical Physics},
	number = {2},
	pages = {351--399},
	title = {Commuting Pauli Hamiltonians as Maps between Free Modules},
	ty = {JOUR},
	url = {https://doi.org/10.1007/s00220-013-1810-2},
	volume = {324},
	year = {2013}}

@article{bombin_Stabilizer_14,
   title={Structure of 2D Topological Stabilizer Codes},
   volume={327},
   ISSN={1432-0916},
   url={http://dx.doi.org/10.1007/s00220-014-1893-4},
   DOI={10.1007/s00220-014-1893-4},
   number={2},
   journal={Communications in Mathematical Physics},
   publisher={Springer Science and Business Media LLC},
   author={Bombín, Héctor},
   year={2014},
   month=Mar, pages={387–432} }

@article{iqbal2024NonAbelian,
   title={Non-Abelian topological order and anyons on a trapped-ion processor},
   volume={626},
   ISSN={1476-4687},
   url={http://dx.doi.org/10.1038/s41586-023-06934-4},
   DOI={10.1038/s41586-023-06934-4},
   number={7999},
   journal={Nature},
   publisher={Springer Science and Business Media LLC},
   author={Iqbal, Mohsin and Tantivasadakarn, Nathanan and Verresen, Ruben and Campbell, Sara L. and Dreiling, Joan M. and Figgatt, Caroline and Gaebler, John P. and Johansen, Jacob and Mills, Michael and Moses, Steven A. and Pino, Juan M. and Ransford, Anthony and Rowe, Mary and Siegfried, Peter and Stutz, Russell P. and Foss-Feig, Michael and Vishwanath, Ashvin and Dreyer, Henrik},
   year={2024},
   month=feb, pages={505–511} }

@article{Bombin2010Topologicalsubsystemcodes,
  title = {Topological subsystem codes},
  author = {Bombin, H.},
  journal = {Phys. Rev. A},
  volume = {81},
  issue = {3},
  pages = {032301},
  numpages = {15},
  year = {2010},
  month = {Mar},
  publisher = {American Physical Society},
  doi = {10.1103/PhysRevA.81.032301},
  url = {https://link.aps.org/doi/10.1103/PhysRevA.81.032301}
}

@article{Bombin2012Universaltopologicalphase,
doi = {10.1088/1367-2630/14/7/073048},
url = {https://doi.org/10.1088/1367-2630/14/7/073048},
year = {2012},
month = {jul},
publisher = {IOP Publishing},
volume = {14},
number = {7},
pages = {073048},
author = {Bombin, H and Duclos-Cianci, Guillaume and Poulin, David},
title = {Universal topological phase of two-dimensional stabilizer codes},
journal = {New Journal of Physics}
}

@misc{bravyi1998quantum,
      title={Quantum codes on a lattice with boundary}, 
      author={S. B. Bravyi and A. Yu. Kitaev},
      year={1998},
      eprint={quant-ph/9811052},
      archivePrefix={arXiv},
      primaryClass={quant-ph},
      url={https://arxiv.org/abs/quant-ph/9811052}, 
}

@article{dennis2002topological,
    author = {Dennis, Eric and Kitaev, Alexei and Landahl, Andrew and Preskill, John},
    title = "{Topological quantum memory}",
    journal = {Journal of Mathematical Physics},
    volume = {43},
    number = {9},
    pages = {4452-4505},
    year = {2002},
    month = {09},
    issn = {0022-2488},
    doi = {10.1063/1.1499754},
    url = {https://doi.org/10.1063/1.1499754},
}

@article{kitaev2003fault,
title = {Fault-tolerant quantum computation by anyons},
journal = {Annals of Physics},
volume = {303},
number = {1},
pages = {2-30},
year = {2003},
issn = {0003-4916},
doi = {https://doi.org/10.1016/S0003-4916(02)00018-0},
url = {https://www.sciencedirect.com/science/article/pii/S0003491602000180},
author = {A.Yu. Kitaev}
}

@article{google2023suppressing,
	author = {Google Quantum AI and Collaborators},
	da = {2023/02/01},
	doi = {10.1038/s41586-022-05434-1},
	id = {Acharya2023},
	isbn = {1476-4687},
	journal = {Nature},
	number = {7949},
	pages = {676--681},
	title = {Suppressing quantum errors by scaling a surface code logical qubit},
	ty = {JOUR},
	url = {https://doi.org/10.1038/s41586-022-05434-1},
	volume = {614},
	year = {2023}}

@article{bluvstein2022quantum,
   title={A quantum processor based on coherent transport of entangled atom arrays},
   volume={604},
   ISSN={1476-4687},
   url={http://dx.doi.org/10.1038/s41586-022-04592-6},
   DOI={10.1038/s41586-022-04592-6},
   number={7906},
   journal={Nature},
   publisher={Springer Science and Business Media LLC},
   author={Bluvstein, Dolev and Levine, Harry and Semeghini, Giulia and Wang, Tout T. and Ebadi, Sepehr and Kalinowski, Marcin and Keesling, Alexander and Maskara, Nishad and Pichler, Hannes and Greiner, Markus and Vuletić, Vladan and Lukin, Mikhail D.},
   year={2022},
   month=apr, pages={451–456} }

@article{Ellison2023paulitopological,
  doi = {10.22331/q-2023-10-12-1137},
  url = {https://doi.org/10.22331/q-2023-10-12-1137},
  title = {Pauli topological subsystem codes from {A}belian anyon theories},
  author = {Ellison, Tyler D. and Chen, Yu-An and Dua, Arpit and Shirley, Wilbur and Tantivasadakarn, Nathanan and Williamson, Dominic J.},
  journal = {{Quantum}},
  issn = {2521-327X},
  publisher = {{Verein zur F{\"{o}}rderung des Open Access Publizierens in den Quantenwissenschaften}},
  volume = {7},
  pages = {1137},
  month = oct,
  year = {2023}
}

@article{iqbal2023topological,
	author = {Iqbal, Mohsin and Tantivasadakarn, Nathanan and Gatterman, Thomas M. and Gerber, Justin A. and Gilmore, Kevin and Gresh, Dan and Hankin, Aaron and Hewitt, Nathan and Horst, Chandler V. and Matheny, Mitchell and Mengle, Tanner and Neyenhuis, Brian and Vishwanath, Ashvin and Foss-Feig, Michael and Verresen, Ruben and Dreyer, Henrik},
	da = {2024/06/25},
	doi = {10.1038/s42005-024-01698-3},
	id = {Iqbal2024},
	isbn = {2399-3650},
	journal = {Communications Physics},
	number = {1},
	pages = {205},
	title = {Topological order from measurements and feed-forward on a trapped ion quantum computer},
	ty = {JOUR},
	url = {https://doi.org/10.1038/s42005-024-01698-3},
	volume = {7},
	year = {2024}}

@article{terhal2015quantum,
   title={Quantum error correction for quantum memories},
   volume={87},
   ISSN={1539-0756},
   url={http://dx.doi.org/10.1103/RevModPhys.87.307},
   DOI={10.1103/revmodphys.87.307},
   number={2},
   journal={Reviews of Modern Physics},
   publisher={American Physical Society (APS)},
   author={Terhal, Barbara M.},
   year={2015},
   month=apr, pages={307–346} }

@misc{gottesman1997stabilizer,
      title={Stabilizer Codes and Quantum Error Correction}, 
      author={Daniel Gottesman},
      year={1997},
      eprint={quant-ph/9705052},
      archivePrefix={arXiv},
      primaryClass={quant-ph}
}

@article{haah2016algebraic,
  title={Algebraic methods for quantum codes on lattices},
  author={Haah, Jeongwan},
  journal={Revista colombiana de matematicas},
  volume={50},
  number={2},
  pages={299--349},
  year={2016},
  publisher={Universidad Nacional de Colombia y Sociedad Colombiana de Matem{\'a}ticas},
  url = {https://doi.org/10.15446/recolma.v50n2.62214}
}

@article{Poulin2005subsystem,
  title = {Stabilizer Formalism for Operator Quantum Error Correction},
  author = {Poulin, David},
  journal = {Phys. Rev. Lett.},
  volume = {95},
  issue = {23},
  pages = {230504},
  numpages = {4},
  year = {2005},
  month = {Dec},
  publisher = {American Physical Society},
  doi = {10.1103/PhysRevLett.95.230504},
  url = {https://link.aps.org/doi/10.1103/PhysRevLett.95.230504}
}

@article{liang2023extracting,
  title = {Extracting Topological Orders of Generalized Pauli Stabilizer Codes in Two Dimensions},
  author = {Liang, Zijian and Xu, Yijia and Iosue, Joseph T. and Chen, Yu-An},
  journal = {PRX Quantum},
  volume = {5},
  issue = {3},
  pages = {030328},
  numpages = {36},
  year = {2024},
  month = {Aug},
  publisher = {American Physical Society},
  doi = {10.1103/PRXQuantum.5.030328},
  url = {https://link.aps.org/doi/10.1103/PRXQuantum.5.030328}
}

@article{Shor1995Scheme,
  title = {Scheme for reducing decoherence in quantum computer memory},
  author = {Shor, Peter W.},
  journal = {Phys. Rev. A},
  volume = {52},
  issue = {4},
  pages = {R2493--R2496},
  numpages = {0},
  year = {1995},
  month = {Oct},
  publisher = {American Physical Society},
  doi = {10.1103/PhysRevA.52.R2493},
  url = {https://link.aps.org/doi/10.1103/PhysRevA.52.R2493}
}

@article{Steane1996Error,
  title = {Error Correcting Codes in Quantum Theory},
  author = {Steane, A. M.},
  journal = {Phys. Rev. Lett.},
  volume = {77},
  issue = {5},
  pages = {793--797},
  numpages = {0},
  year = {1996},
  month = {Jul},
  publisher = {American Physical Society},
  doi = {10.1103/PhysRevLett.77.793},
  url = {https://link.aps.org/doi/10.1103/PhysRevLett.77.793}
}

@article{Knill1997Theory,
  title = {Theory of quantum error-correcting codes},
  author = {Knill, Emanuel and Laflamme, Raymond},
  journal = {Phys. Rev. A},
  volume = {55},
  issue = {2},
  pages = {900--911},
  numpages = {0},
  year = {1997},
  month = {Feb},
  publisher = {American Physical Society},
  doi = {10.1103/PhysRevA.55.900},
  url = {https://link.aps.org/doi/10.1103/PhysRevA.55.900}
}

@article{Google2023NonAbelian,
	author = {Google Quantum AI and Collaborators},
	da = {2023/06/01},
	doi = {10.1038/s41586-023-05954-4},
	id = {Andersen2023},
	isbn = {1476-4687},
	journal = {Nature},
	number = {7964},
	pages = {264--269},
	title = {Non-Abelian braiding of graph vertices in a superconducting processor},
	ty = {JOUR},
	url = {https://doi.org/10.1038/s41586-023-05954-4},
	volume = {618},
	year = {2023}}

@article{Bravyi2024HighThreshold,
	author = {Bravyi, Sergey and Cross, Andrew W. and Gambetta, Jay M. and Maslov, Dmitri and Rall, Patrick and Yoder, Theodore J.},
	da = {2024/03/01},
	doi = {10.1038/s41586-024-07107-7},
	id = {Bravyi2024},
	isbn = {1476-4687},
	journal = {Nature},
	number = {8005},
	pages = {778--782},
	title = {High-threshold and low-overhead fault-tolerant quantum memory},
	ty = {JOUR},
	url = {https://doi.org/10.1038/s41586-024-07107-7},
	volume = {627},
	year = {2024}}

@article{Cong2024EnhancingTO,
	author = {Cong, Iris and Maskara, Nishad and Tran, Minh C. and Pichler, Hannes and Semeghini, Giulia and Yelin, Susanne F. and Choi, Soonwon and Lukin, Mikhail D.},
	da = {2024/02/20},
	doi = {10.1038/s41467-024-45584-6},
	id = {Cong2024},
	isbn = {2041-1723},
	journal = {Nature Communications},
	number = {1},
	pages = {1527},
	title = {Enhancing detection of topological order by local error correction},
	ty = {JOUR},
	url = {https://doi.org/10.1038/s41467-024-45584-6},
	volume = {15},
	year = {2024}}

@article{Verresen2021PredictionTC,
  title = {Prediction of Toric Code Topological Order from Rydberg Blockade},
  author = {Verresen, Ruben and Lukin, Mikhail D. and Vishwanath, Ashvin},
  journal = {Phys. Rev. X},
  volume = {11},
  issue = {3},
  pages = {031005},
  numpages = {23},
  year = {2021},
  month = {Jul},
  publisher = {American Physical Society},
  doi = {10.1103/PhysRevX.11.031005},
  url = {https://link.aps.org/doi/10.1103/PhysRevX.11.031005}
}

@article{wang2024coprime,
  title={Coprime Bivariate Bicycle Codes and their Properties},
  author={Wang, Ming and Mueller, Frank},
  journal={arXiv preprint arXiv:2408.10001},
  url = {https://arxiv.org/abs/2408.10001},
  year={2024}
}

@article{shaw2024lowering,
  title = {Lowering Connectivity Requirements for Bivariate Bicycle Codes Using Morphing Circuits},
  author = {Shaw, Mackenzie H. and Terhal, Barbara M.},
  journal = {Phys. Rev. Lett.},
  volume = {134},
  issue = {9},
  pages = {090602},
  numpages = {5},
  year = {2025},
  month = {Mar},
  publisher = {American Physical Society},
  doi = {10.1103/PhysRevLett.134.090602},
  url = {https://link.aps.org/doi/10.1103/PhysRevLett.134.090602}
}

@article{maan2024machine,
    author = "Maan, Arshpreet Singh and Paler, Alexandru",
    title = "{Machine learning message-passing for the scalable decoding of QLDPC codes}",
    archivePrefix = "arXiv",
    primaryClass = "quant-ph",
    doi = "10.1038/s41534-025-01033-w",
    journal = "npj Quantum Inf.",
    volume = "11",
    number = "1",
    pages = "78",
    year = "2025"
}

@article{cross2024linear,
  title={Linear-Size Ancilla Systems for Logical Measurements in QLDPC Codes},
  author={Cross, Andrew and He, Zhiyang and Rall, Patrick and Yoder, Theodore},
  journal={arXiv preprint arXiv:2407.18393},
  url={https://arxiv.org/abs/2407.18393},
  year={2024}
}

@article{cowtan2024ssip,
  title={SSIP: automated surgery with quantum LDPC codes},
  author={Cowtan, Alexander},
  journal={arXiv preprint arXiv:2407.09423},
  url={https://arxiv.org/abs/2407.09423},
  year={2024}
}

@article{wolanski2024ambiguity,
  title={Ambiguity Clustering: an accurate and efficient decoder for qLDPC codes},
  author={Wolanski, Stasiu and Barber, Ben},
  journal={arXiv preprint arXiv:2406.14527},
  url={https://arxiv.org/abs/2406.14527},
  year={2024}
}

@article{gong2024toward,
  title={Toward Low-latency Iterative Decoding of QLDPC Codes Under Circuit-Level Noise},
  author={Gong, Anqi and Cammerer, Sebastian and Renes, Joseph M},
  journal={arXiv preprint arXiv:2403.18901},
  url={https://arxiv.org/abs/2403.18901},
  year={2024}
}

@ARTICLE{Eberhardt2025Logical,
  author={Eberhardt, Jens Niklas and Steffan, Vincent},
  journal={IEEE Transactions on Information Theory}, 
  title={Logical Operators and Fold-Transversal Gates of Bivariate Bicycle Codes}, 
  year={2025},
  volume={71},
  number={2},
  pages={1140-1152},
  doi={10.1109/TIT.2024.3521638}}

@article{tiew2024low,
    author = "Tiew, Ryan and Breuckmann, Nikolas P.",
    title = "{Low-Overhead Entangling Gates From Generalised Dehn Twists}",
    doi = "10.1109/TIT.2025.3571197",
    journal = "IEEE Trans. Info. Theor.",
    volume = "71",
    number = "7",
    pages = "5452--5468",
    year = "2025"
}

@misc{landahl2011fault,
      title={Fault-tolerant quantum computing with color codes}, 
      author={Andrew J. Landahl and Jonas T. Anderson and Patrick R. Rice},
      year={2011},
      eprint={1108.5738},
      archivePrefix={arXiv},
      primaryClass={quant-ph},
      url={https://arxiv.org/abs/1108.5738}, 
}

@article{voss2024multivariate,
  title = {Multivariate bicycle codes},
  author = {Voss, Lukas and Xian, Sim Jian and Haug, Tobias and Bharti, Kishor},
  journal = {Phys. Rev. A},
  volume = {111},
  issue = {6},
  pages = {L060401},
  numpages = {6},
  year = {2025},
  month = {Jun},
  publisher = {American Physical Society},
  doi = {10.1103/ll5p-z88p},
  url = {https://link.aps.org/doi/10.1103/ll5p-z88p}
}

@INPROCEEDINGS{Wang2024Bivariate,
  author={Wang, Ming and Mueller, Frank},
  booktitle={2024 IEEE International Conference on Quantum Computing and Engineering (QCE)}, 
  title={Rate Adjustable Bivariate Bicycle Codes for Quantum Error Correction}, 
  year={2024},
  volume={02},
  number={},
  pages={412-413},
  doi={10.1109/QCE60285.2024.10331}}

@article{Google2024surface,
    author = {Google Quantum AI and Collaborators},
	da = {2025/02/01},
	doi = {10.1038/s41586-024-08449-y},
	id = {Acharya2025},
	isbn = {1476-4687},
	journal = {Nature},
	number = {8052},
	pages = {920--926},
	title = {Quantum error correction below the surface code threshold},
	ty = {JOUR},
	url = {https://doi.org/10.1038/s41586-024-08449-y},
	volume = {638},
	year = {2025}}

@article{semeghini2021probing,
	author = {G. Semeghini and H. Levine and A. Keesling and S. Ebadi and T. T. Wang and D. Bluvstein and R. Verresen and H. Pichler and M. Kalinowski and R. Samajdar and A. Omran and S. Sachdev and A. Vishwanath and M. Greiner and V. Vuleti{\'c} and M. D. Lukin},
	doi = {10.1126/science.abi8794},
	journal = {Science},
	number = {6572},
	pages = {1242-1247},
	title = {Probing topological spin liquids on a programmable quantum simulator},
	url = {https://www.science.org/doi/abs/10.1126/science.abi8794},
	volume = {374},
	year = {2021}}

@article{Bravyi2010Tradeoffs,
  title = {Tradeoffs for Reliable Quantum Information Storage in 2D Systems},
  author = {Bravyi, Sergey and Poulin, David and Terhal, Barbara},
  journal = {Phys. Rev. Lett.},
  volume = {104},
  issue = {5},
  pages = {050503},
  numpages = {4},
  year = {2010},
  month = {Feb},
  publisher = {American Physical Society},
  doi = {10.1103/PhysRevLett.104.050503},
  url = {https://link.aps.org/doi/10.1103/PhysRevLett.104.050503}
}

@article{Kovalev2013QuantumKronecker,
  title = {Quantum Kronecker sum-product low-density parity-check codes with finite rate},
  author = {Kovalev, Alexey A. and Pryadko, Leonid P.},
  journal = {Phys. Rev. A},
  volume = {88},
  issue = {1},
  pages = {012311},
  numpages = {13},
  year = {2013},
  month = {Jul},
  publisher = {American Physical Society},
  doi = {10.1103/PhysRevA.88.012311},
  url = {https://link.aps.org/doi/10.1103/PhysRevA.88.012311}
}

@Article{Pryadko2022DistanceGB,
AUTHOR = {Wang, Renyu and Pryadko, Leonid P.},
TITLE = {Distance Bounds for Generalized Bicycle Codes},
JOURNAL = {Symmetry},
VOLUME = {14},
YEAR = {2022},
NUMBER = {7},
ARTICLE-NUMBER = {1348},
URL = {https://www.mdpi.com/2073-8994/14/7/1348},
ISSN = {2073-8994},
DOI = {10.3390/sym14071348}
}

@article{liang2025generalized,
  title = {Generalized Toric Codes on Twisted Tori for Quantum Error Correction},
  author = {Liang, Zijian and Liu, Ke and Song, Hao and Chen, Yu-An},
  journal = {PRX Quantum},
  volume = {6},
  issue = {2},
  pages = {020357},
  numpages = {22},
  year = {2025},
  month = {Jun},
  publisher = {American Physical Society},
  doi = {10.1103/rmy6-9n89},
  url = {https://link.aps.org/doi/10.1103/rmy6-9n89}
}

@article{Fowler2012Surfacecodes,
  title = {Surface codes: Towards practical large-scale quantum computation},
  author = {Fowler, Austin G. and Mariantoni, Matteo and Martinis, John M. and Cleland, Andrew N.},
  journal = {Phys. Rev. A},
  volume = {86},
  issue = {3},
  pages = {032324},
  numpages = {48},
  year = {2012},
  month = {Sep},
  publisher = {American Physical Society},
  doi = {10.1103/PhysRevA.86.032324},
  url = {https://link.aps.org/doi/10.1103/PhysRevA.86.032324}
}

@article{Litinski2019gameofsurfacecodes,
  doi = {10.22331/q-2019-03-05-128},
  url = {https://doi.org/10.22331/q-2019-03-05-128},
  title = {A {G}ame of {S}urface {C}odes: {L}arge-{S}cale {Q}uantum {C}omputing with {L}attice {S}urgery},
  author = {Litinski, Daniel},
  journal = {{Quantum}},
  issn = {2521-327X},
  publisher = {{Verein zur F{\"{o}}rderung des Open Access Publizierens in den Quantenwissenschaften}},
  volume = {3},
  pages = {128},
  month = mar,
  year = {2019}
}

@article{berthusen2025toward,
  title = {Toward a 2D Local Implementation of Quantum Low-Density Parity-Check Codes},
  author = {Berthusen, Noah and Devulapalli, Dhruv and Schoute, Eddie and Childs, Andrew M. and Gullans, Michael J. and Gorshkov, Alexey V. and Gottesman, Daniel},
  journal = {PRX Quantum},
  volume = {6},
  issue = {1},
  pages = {010306},
  numpages = {18},
  year = {2025},
  month = {Jan},
  publisher = {American Physical Society},
  doi = {10.1103/PRXQuantum.6.010306},
  url = {https://link.aps.org/doi/10.1103/PRXQuantum.6.010306}
}

@article{lin2025single,
  title={Single-shot and two-shot decoding with generalized bicycle codes},
  author={Lin, Hsiang-Ku and Liu, Xingrui and Lim, Pak Kau and Pryadko, Leonid P},
  journal={arXiv preprint arXiv:2502.19406},
  url={https://arxiv.org/abs/2502.19406},
  year={2025}
}

@article{Bacon2006Operator,
  title = {Operator quantum error-correcting subsystems for self-correcting quantum memories},
  author = {Bacon, Dave},
  journal = {Phys. Rev. A},
  volume = {73},
  issue = {1},
  pages = {012340},
  numpages = {13},
  year = {2006},
  month = {Jan},
  publisher = {American Physical Society},
  doi = {10.1103/PhysRevA.73.012340},
  url = {https://link.aps.org/doi/10.1103/PhysRevA.73.012340}
}

@misc{liang2025selfdual,
      title={Self-dual bivariate bicycle codes with transversal Clifford gates}, 
      author={Zijian Liang and Yu-An Chen},
      year={2025},
      eprint={2510.05211},
      archivePrefix={arXiv},
      primaryClass={quant-ph},
}

@article{bravyi2013subsystemsurfacecodesthreequbit,
  title     = {Subsystem surface codes with three-qubit check operators},
  author    = {Bravyi, Sergey and Duclos-Cianci, Guillaume and Poulin, David and Suchara, Martin},
  journal   = {Quantum Information \& Computation},
  volume    = {13},
  number    = {11--12},
  pages     = {963--985},
  year      = {2013},
  doi       = {10.26421/QIC13.11-12-4},
  eprint    = {1207.1443},
  archivePrefix = {arXiv},
  primaryClass  = {quant-ph}
}

@article{Higgott2021subsystem,
  title = {Subsystem Codes with High Thresholds by Gauge Fixing and Reduced Qubit Overhead},
  author = {Higgott, Oscar and Breuckmann, Nikolas P.},
  journal = {Phys. Rev. X},
  volume = {11},
  issue = {3},
  pages = {031039},
  numpages = {30},
  year = {2021},
  month = {Aug},
  publisher = {American Physical Society},
  doi = {10.1103/PhysRevX.11.031039},
  url = {https://link.aps.org/doi/10.1103/PhysRevX.11.031039}
}

@misc{liang2026generalizedmathbbzptoriccodes,
      title={Generalized $\mathbb{Z}_p$ toric codes as qudit low-density parity-check codes}, 
      author={Zijian Liang and Yu-An Chen},
      year={2026},
      eprint={2602.20158},
      archivePrefix={arXiv},
      primaryClass={quant-ph},
      url={https://arxiv.org/abs/2602.20158}, 
}

@article{Kribs2005Unified,
  title = {Unified and Generalized Approach to Quantum Error Correction},
  author = {Kribs, David and Laflamme, Raymond and Poulin, David},
  journal = {Phys. Rev. Lett.},
  volume = {94},
  issue = {18},
  pages = {180501},
  numpages = {4},
  year = {2005},
  month = {May},
  publisher = {American Physical Society},
  doi = {10.1103/PhysRevLett.94.180501},
  url = {https://link.aps.org/doi/10.1103/PhysRevLett.94.180501}
}

@article{Kribs2006Operator,
    author = "Kribs, David W. and Laflamme, Raymond and Poulin, David and Lesosky, Maia",
    title = "{Operator quantum error correction}",
    eprint = "quant-ph/0504189",
    archivePrefix = "arXiv",
    doi = "10.26421/QIC6.4-5-6",
    journal = "Quant. Inf. Comput.",
    volume = "6",
    number = "4-5",
    pages = "382--399",
    year = "2006"
}

@article{Bravyi2011Subsystem,
  title = {Subsystem codes with spatially local generators},
  author = {Bravyi, Sergey},
  journal = {Phys. Rev. A},
  volume = {83},
  issue = {1},
  pages = {012320},
  numpages = {9},
  year = {2011},
  month = {Jan},
  publisher = {American Physical Society},
  doi = {10.1103/PhysRevA.83.012320},
  url = {https://link.aps.org/doi/10.1103/PhysRevA.83.012320}
}

@misc{zhang2026coupledlayerconstructionquantumproduct,
      title={Coupled-Layer Construction of Quantum Product Codes}, 
      author={Shuyu Zhang and Tzu-Chieh Wei and Nathanan Tantivasadakarn},
      year={2026},
      eprint={2603.08711},
      archivePrefix={arXiv},
      primaryClass={quant-ph},
      url={https://arxiv.org/abs/2603.08711}, 
}

\onecolumngrid
\appendix
\newpage

\section{Subsystem code preliminaries}\label{app: preliminaries}

In this section, we review basic concepts about stabilizer and subsystem codes~\cite{gottesman1997stabilizer, Poulin2005subsystem, Bacon2006Operator, bravyi2013subsystemsurfacecodesthreequbit, Ellison2023paulitopological}.
Throughout this appendix, all Pauli operators are considered up to overall phases.
For binary vectors $u,v\in \mathbb Z_2^n$, define
\begin{equation}
    X(u)=\bigotimes_{i=1}^n X_i^{u_i},
    \qquad
    Z(v)=\bigotimes_{i=1}^n Z_i^{v_i},
    \qquad
    P(u,v)=X(u)Z(v).
\end{equation}
The support of $P(u,v)$ is the set of qubits on which it acts nontrivially,
\begin{equation}
    \operatorname{supp} P(u,v)
    =
    \{i: u_i=1 \text{ or } v_i=1\},
\end{equation}
and its weight is
\begin{equation}
    \operatorname{wt} P(u,v)
    =
    |\operatorname{supp} P(u,v)|.
\end{equation}
Two Pauli operators commute or anticommute according to the symplectic inner product
\begin{equation}
    \langle (u,v),(u',v')\rangle_s
    =
    u\cdot v' + v\cdot u'
    \pmod 2.
\end{equation}
Equivalently,
\begin{equation}
    P(u,v)P(u',v')
    =
    (-1)^{\langle (u,v),(u',v')\rangle_s}
    P(u',v')P(u,v).
\end{equation}

\subsection{Stabilizer codes}

A stabilizer code is specified by an Abelian subgroup
$\mathcal S\subset \mathcal P_n$ of the $n$-qubit Pauli group, with
$-I\notin \mathcal S$.
The code space is the joint $+1$ eigenspace of all stabilizers:
\begin{equation}
    \mathcal C(\mathcal S)
    =
    \{|\psi\rangle : S|\psi\rangle = |\psi\rangle
    \text{ for all } S\in \mathcal S\}.
\end{equation}
If $\mathcal S$ has $s$ independent generators, then
\begin{equation}
    \dim \mathcal C(\mathcal S)=2^{n-s}.
\end{equation}
Thus a stabilizer code encodes
\begin{equation}
    k=n-s
\end{equation}
logical qubits.

The logical Pauli operators are Pauli operators that commute with all stabilizers,
modulo multiplication by stabilizers.
Let
\begin{equation}
    \operatorname{Cent}(\mathcal S)
    =
    \{P\in \mathcal P_n : PS=SP \text{ for all } S\in\mathcal S\}
\end{equation}
denote the Pauli centralizer of $\mathcal S$.
Then the logical Pauli group is
\begin{equation}
    \mathcal L
    =
    \operatorname{Cent}(\mathcal S)/\mathcal S.
\end{equation}
The distance of a stabilizer code is the minimum weight of a nontrivial logical Pauli operator:
\begin{equation}
    d
    =
    \min_{P\in \operatorname{Cent}(\mathcal S)\setminus \mathcal S}
    \operatorname{wt}(P).
\end{equation}

\subsection{Subsystem codes}

Subsystem codes generalize stabilizer codes by allowing part of the code space to be used as an unprotected gauge subsystem.
The stabilizer code space factors as
\begin{equation}
    \mathcal C(\mathcal S)
    \cong
    \mathcal H_L \otimes \mathcal H_G,
\end{equation}
where $\mathcal H_L$ stores the protected logical qubits and $\mathcal H_G$ stores gauge qubits.
Errors or measurements that affect only $\mathcal H_G$ do not change the protected logical information.

A subsystem code is specified by a generally non-Abelian gauge group
$\mathcal G\subset \mathcal P_n$.
The stabilizer group is the center of the gauge group,
\begin{equation}
    \mathcal S
    =
    \mathcal G \cap \operatorname{Cent}(\mathcal G),
\end{equation}
where
\begin{equation}
    \operatorname{Cent}(\mathcal G)
    =
    \{P\in \mathcal P_n : PG=GP \text{ for all } G\in\mathcal G\}.
\end{equation}
Thus stabilizers are precisely those gauge operators that commute with every gauge operator.

After a suitable choice of independent generators, the gauge group can be written in the canonical form
\begin{equation}
    \mathcal G
    =
    \left\langle
    Z_1,\ldots,Z_s,
    X^g_1,Z^g_1,\ldots,X^g_r,Z^g_r
    \right\rangle,
\end{equation}
where $Z_1,\ldots,Z_s$ generate the stabilizer group, and each pair
$X^g_i,Z^g_i$ acts as a Pauli-$X$ and Pauli-$Z$ operator on the $i$th gauge qubit.
The only nontrivial commutation relations among these canonical gauge generators are
\begin{equation}
    X^g_i Z^g_j
    =
    (-1)^{\delta_{ij}}
    Z^g_j X^g_i.
\end{equation}
If $g=\operatorname{rank}(\mathcal G)$ is the number of independent gauge generators and
$s=\operatorname{rank}(\mathcal S)$ is the number of independent stabilizer generators, then
\begin{equation}
    g=s+2r,
\end{equation}
where $r$ is the number of gauge qubits.
Since
\begin{equation}
    \dim \mathcal C(\mathcal S)
    =
    2^{n-s}
    =
    2^k 2^r,
\end{equation}
the number of protected logical qubits is
\begin{equation}
    k=n-s-r.
\end{equation}
Equivalently,
\begin{equation}
    r=\frac{g-s}{2},
    \qquad
    k=n-\frac{g+s}{2}.
    \label{eq:subsystem_counting_general}
\end{equation}
In this work we quote subsystem-code parameters as $[[n,k,d]]$, suppressing the number of gauge qubits.

Subsystem codes have two closely related notions of logical operators.
Bare logical operators commute with all gauge operators:
\begin{equation}
    \mathcal L_{\mathrm{bare}}
    =
    \operatorname{Cent}(\mathcal G)/\mathcal S.
\end{equation}
These operators act on the protected logical subsystem and act trivially on the gauge subsystem.
Dressed logical operators are Pauli operators that commute with all stabilizers, modulo multiplication by gauge operators:
\begin{equation}
    \mathcal L_{\mathrm{dressed}}
    =
    \operatorname{Cent}(\mathcal S)/\mathcal G.
\end{equation}
Multiplying a dressed logical operator by a gauge operator changes only its action on the gauge subsystem.
Thus two Pauli operators $P$ and $PG$, with $G\in\mathcal G$, represent the same dressed logical operator.

The distance of a subsystem code is the minimum weight of a nontrivial dressed logical operator.
For a dressed logical coset $[P]\in \operatorname{Cent}(\mathcal S)/\mathcal G$, define its weight by
\begin{equation}
    \operatorname{wt}([P])
    =
    \min_{G\in\mathcal G}
    \operatorname{wt}(PG).
\end{equation}
Then
\begin{equation}
    d
    =
    \min_{\substack{[P]\in \operatorname{Cent}(\mathcal S)/\mathcal G\\ [P]\neq I}}
    \operatorname{wt}([P]).
    \label{eq:subsystem_distance_general}
\end{equation}
This dressed distance is the relevant distance for quantum error correction, because errors that differ by a gauge operator have the same effect on the protected logical subsystem.

\subsection{CSS subsystem codes}

The SBB codes considered in this work are CSS subsystem codes.
Their gauge group is generated by $X$-type and $Z$-type Pauli operators separately.
Let
\begin{equation}
    H_X^{\rm g}\in \mathbb Z_2^{m_X\times n},
    \qquad
    H_Z^{\rm g}\in \mathbb Z_2^{m_Z\times n}
\end{equation}
be the binary matrices whose rows specify the $X$- and $Z$-type gauge generators.
The corresponding gauge group is
\begin{equation}
    \mathcal G
    =
    \left\langle
    X(u): u\in \operatorname{row}(H_X^{\rm g})
    \right\rangle
    \cdot
    \left\langle
    Z(v): v\in \operatorname{row}(H_Z^{\rm g})
    \right\rangle .
\end{equation}
Define the row spaces
\begin{equation}
    R_X=\operatorname{row}(H_X^{\rm g}),
    \qquad
    R_Z=\operatorname{row}(H_Z^{\rm g}).
\end{equation}
Unlike in an ordinary CSS stabilizer code, one does not require
$H_X^{\rm g}(H_Z^{\rm g})^T=0$.
Instead, the matrix
\begin{equation}
    \Omega
    =
    H_X^{\rm g}(H_Z^{\rm g})^T
    \label{eq:gauge_anticommutation_matrix}
\end{equation}
records the anticommutation relations between $X$- and $Z$-type gauge generators.
Its entries are
\begin{equation}
    \Omega_{ab}
    =
    \begin{cases}
    0, & \text{if the $a$th $X$-gauge generator commutes with the $b$th $Z$-gauge generator,}\\
    1, & \text{if the $a$th $X$-gauge generator anticommutes with the $b$th $Z$-gauge generator.}
    \end{cases}
\end{equation}

The stabilizers are the gauge operators that commute with all gauge operators.
Therefore the $X$-type stabilizer space is
\begin{equation}
    S_X
    =
    R_X\cap R_Z^\perp
    =
    \{u\in R_X : u\cdot v=0 \text{ for all } v\in R_Z\},
    \label{eq:SX_subsystem}
\end{equation}
and the $Z$-type stabilizer space is
\begin{equation}
    S_Z
    =
    R_Z\cap R_X^\perp
    =
    \{v\in R_Z : u\cdot v=0 \text{ for all } u\in R_X\}.
    \label{eq:SZ_subsystem}
\end{equation}
Equivalently,
\begin{equation}
    S_X
    =
    \{u\in R_X : H_Z^{\rm g}u^T=0\},
    \qquad
    S_Z
    =
    \{v\in R_Z : H_X^{\rm g}v^T=0\}.
\end{equation}
The full stabilizer group is
\begin{equation}
    \mathcal S
    =
    \left\langle
    X(u): u\in S_X
    \right\rangle
    \cdot
    \left\langle
    Z(v): v\in S_Z
    \right\rangle .
\end{equation}

Let
\begin{equation}
    a=\operatorname{rank} H_X^{\rm g},
    \qquad
    b=\operatorname{rank} H_Z^{\rm g},
    \qquad
    \rho=\operatorname{rank}\Omega .
\end{equation}
Then $\rho$ counts the number of independent anticommuting gauge pairs.
The ranks of the $X$- and $Z$-type stabilizer spaces are
\begin{equation}
    \operatorname{rank} S_X = a-\rho,
    \qquad
    \operatorname{rank} S_Z = b-\rho.
\end{equation}
Hence
\begin{equation}
    s=\operatorname{rank}(\mathcal S)
    =
    a+b-2\rho,
\end{equation}
and the number of gauge qubits is
\begin{equation}
    r=\rho.
\end{equation}
Using Eq.~\eqref{eq:subsystem_counting_general}, the number of protected logical qubits is therefore
\begin{equation}
    k
    =
    n-a-b+\rho
    =
    n-\operatorname{rank}H_X^{\rm g}
     -\operatorname{rank}H_Z^{\rm g}
     +\operatorname{rank}\!\left(H_X^{\rm g}(H_Z^{\rm g})^T\right).
    \label{eq:css_subsystem_k}
\end{equation}
When $\Omega=0$, the gauge generators all commute, the gauge group is already a stabilizer group, and Eq.~\eqref{eq:css_subsystem_k} reduces to the usual CSS stabilizer-code formula
\begin{equation}
    k
    =
    n-\operatorname{rank}H_X-\operatorname{rank}H_Z.
\end{equation}

The dressed logical operators also have a simple CSS description.
An $X$-type dressed logical operator must commute with all $Z$-type stabilizers, so its binary vector lies in $S_Z^\perp$.
Two such operators are equivalent if they differ by an $X$-type gauge operator in $R_X$.
Thus
\begin{equation}
    \mathcal L_X^{\mathrm{dressed}}
    =
    S_Z^\perp/R_X.
\end{equation}
Similarly,
\begin{equation}
    \mathcal L_Z^{\mathrm{dressed}}
    =
    S_X^\perp/R_Z.
\end{equation}
The corresponding dressed distances are
\begin{equation}
    d_X
    =
    \min_{\substack{x\in S_Z^\perp\\ x\notin R_X}}
    \operatorname{wt}(x+R_X),
    \qquad
    d_Z
    =
    \min_{\substack{z\in S_X^\perp\\ z\notin R_Z}}
    \operatorname{wt}(z+R_Z),
\end{equation}
where
\begin{equation}
    \operatorname{wt}(x+R_X)
    =
    \min_{r_X\in R_X}\operatorname{wt}(x+r_X),
    \qquad
    \operatorname{wt}(z+R_Z)
    =
    \min_{r_Z\in R_Z}\operatorname{wt}(z+r_Z).
\end{equation}
The subsystem-code distance is
\begin{equation}
    d=\min(d_X,d_Z).
\end{equation}

For comparison, bare logical operators are constrained to commute with all gauge generators.
The $X$- and $Z$-type bare logical spaces are
\begin{equation}
    \mathcal L_X^{\mathrm{bare}}
    =
    R_Z^\perp/S_X,
    \qquad
    \mathcal L_Z^{\mathrm{bare}}
    =
    R_X^\perp/S_Z.
\end{equation}
Bare logical operators are often more restrictive than dressed logical operators, because they must act trivially on the gauge subsystem.

\section{Review of the Laurent-polynomial formalism}
\label{app:laurent_formalism}

\begin{figure*}[thb]
    \centering
    \includegraphics[width=0.75\linewidth]{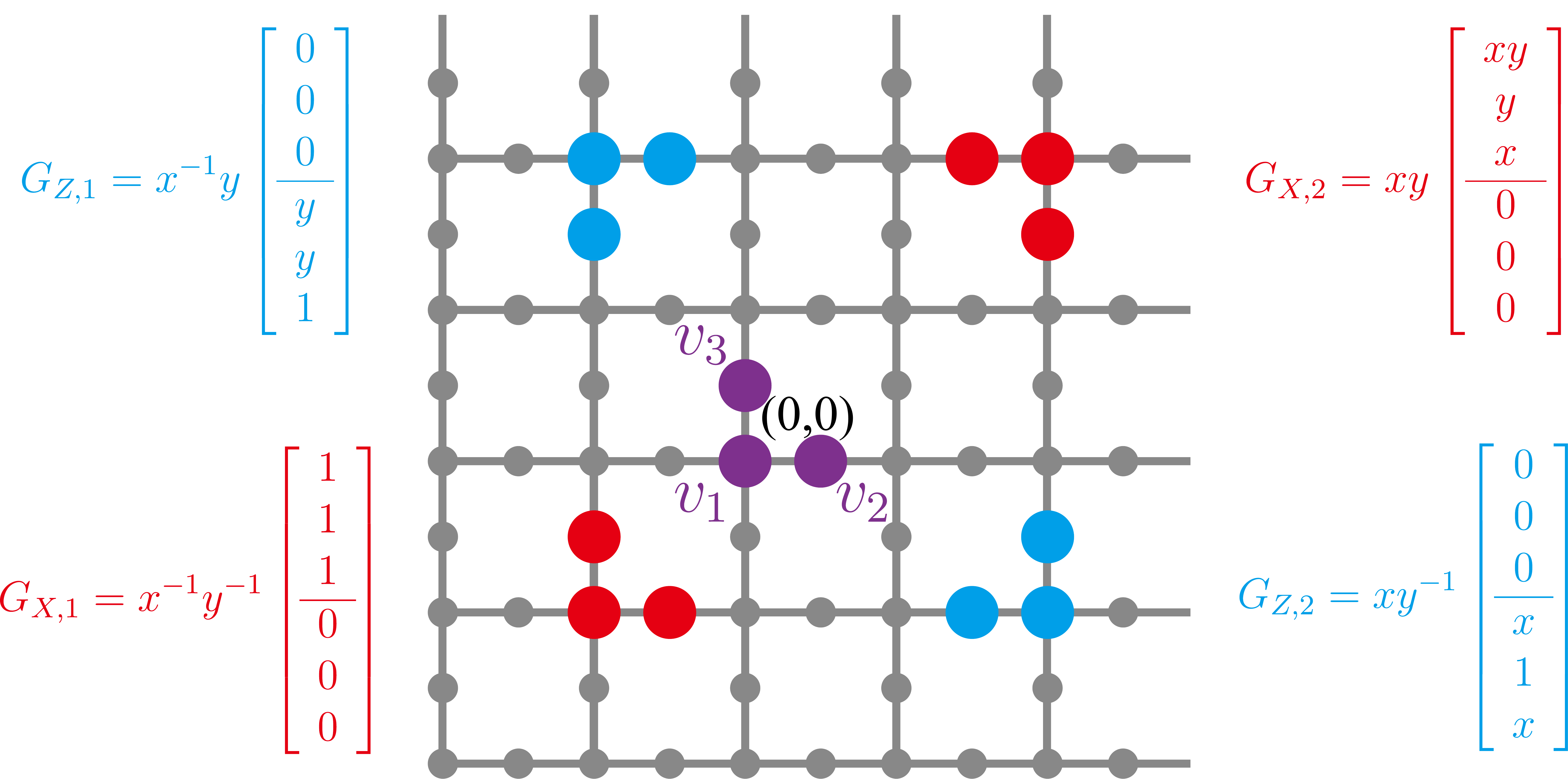}
    \caption{
    Examples of Laurent-polynomial representations of Pauli operators on the square lattice.
    Each unit cell contains three qubit positions, labeled $v_1$, $v_2$, and $v_3$.
    Pauli operators on these qubits are encoded as six-component vectors over
    $\mathbb{Z}_2[x^{\pm1},y^{\pm1}]$, as in Eq.~\eqref{eq: X and Z on v1 and v2 in Appendix}.
    Multiplication by $x^n y^m$ translates an operator by $(n,m)$ lattice units.
    For example, translating the operator supported on the three qubits at the origin to $(-1,-1)$ multiplies its vector by $x^{-1}y^{-1}$.
    Sums of such translated basis vectors give the polynomial-vector representation of gauge generators.
    }
\label{fig:example_poly}
\end{figure*}

For completeness, we review the Laurent-polynomial description of translation-invariant Pauli codes used throughout this work~\cite{haah_module_13, haah2016algebraic, liang2023extracting}.
The general formalism is standard; here we specialize it to qubit CSS subsystem codes on the square lattice with three qubits per unit cell.

We label the three qubits in a unit cell by $v_1$, $v_2$, and $v_3$, as shown in Fig.~\ref{fig:example_poly}.
Ignoring overall phases, the single-qubit Pauli operators are represented by six-component binary vectors:
\begin{equation}
    \mathcal{X}_{v_1} =
    \left[\begin{array}{c}1 \\ 0 \\ 0 \\ \hline 0 \\ 0 \\ 0 \end{array}\right], ~
    \mathcal{Z}_{v_1} =
    \left[\begin{array}{c}0 \\ 0 \\ 0 \\ \hline 1 \\ 0 \\ 0 \end{array}\right], ~
    \mathcal{X}_{v_2} =
    \left[\begin{array}{c}0 \\ 1 \\ 0 \\ \hline 0 \\ 0 \\ 0 \end{array}\right], ~
    \mathcal{Z}_{v_2} =
    \left[\begin{array}{c}0 \\ 0 \\ 0 \\ \hline 0 \\ 1 \\ 0 \end{array}\right], ~
    \mathcal{X}_{v_3} =
    \left[\begin{array}{c}0 \\ 0 \\ 1 \\ \hline 0 \\ 0 \\ 0 \end{array}\right], ~
    \mathcal{Z}_{v_3} =
    \left[\begin{array}{c}0 \\ 0 \\ 0 \\ \hline 0 \\ 0 \\ 1 \end{array}\right].
\label{eq: X and Z on v1 and v2 in Appendix}
\end{equation}
The first three entries record the $X$ support, and the last three entries record the $Z$ support.
Multiplication of Pauli operators corresponds to addition of these vectors over $\mathbb{Z}_2$.

Translation invariance is encoded by two formal variables, $x$ and $y$.
A translation by $(n,m)\in\mathbb{Z}^2$ is represented by multiplication by the monomial $x^n y^m$.
For instance, the Pauli operator $\mathcal{Z}_{v_1}$ translated by $(n,m)$ is represented by
\begin{equation}
    x^n y^m \mathcal{Z}_{v_1}.
\end{equation}
Thus translation-invariant Pauli operators are naturally represented by vectors over the Laurent-polynomial ring
\begin{equation}
    R=\mathbb{Z}_2[x^{\pm1},y^{\pm1}].
\end{equation}

We also use the antipode map
\begin{equation}
    \overline{x^n y^m}=x^{-n}y^{-m},
\end{equation}
extended linearly to all Laurent polynomials.
Throughout this paper, the bar denotes this antipode map, not complex conjugation.

Commutation relations are encoded by the standard symplectic form
\begin{eqs}
    \Lambda=
    \left[\begin{array}{ccc|ccc}
        0 & 0 & 0 & 1 & 0 & 0 \\
        0 & 0 & 0 & 0 & 1 & 0 \\
        0 & 0 & 0 & 0 & 0 & 1 \\
        \hline
        -1 & 0 & 0 & 0 & 0 & 0 \\
        0 & -1 & 0 & 0 & 0 & 0 \\
        0 & 0 & -1 & 0 & 0 & 0
    \end{array}\right]
\end{eqs}
For two Pauli vectors $u,v\in R^6$, we define the symplectic product
\begin{equation}
    u\cdot v := \overline{u}^{\mathsf T}\Lambda v.
\end{equation}
This product is again a Laurent polynomial.
The constant term of $u\cdot v$ determines whether the two Pauli operators commute.
More generally, the coefficient of $x^n y^m$ records the commutation relation between $u$ and a lattice translate of $v$.
Thus $u$ commutes with all translates of $v$ if and only if
\begin{equation}
    u\cdot v=0
\end{equation}
as a Laurent polynomial.

We now describe the gauge-generator data used for translation-invariant CSS subsystem codes.
With two $X$-type and two $Z$-type gauge-generator families per unit cell, the code is specified by twelve Laurent polynomials
$f_i,g_i,h_i\in R$, with $i=1,\ldots,4$:
\begin{eqs}
    G_{X,1} =
    \left[\begin{array}{c}
        \rule{0pt}{1.1em}f_1(x,y) \\
        \rule{0pt}{1.1em}g_1(x,y) \\
        \rule{0pt}{1.1em}h_1(x,y) \\
        \hline
        \rule{0pt}{1.1em}0 \\
        \rule{0pt}{1.1em}0 \\
        \rule{0pt}{1.1em}0
    \end{array}\right],
    \qquad
    G_{X,2} =
    \left[\begin{array}{c}
        \rule{0pt}{1.1em}f_2(x,y) \\
        \rule{0pt}{1.1em}g_2(x,y) \\
        \rule{0pt}{1.1em}h_2(x,y) \\
        \hline
        \rule{0pt}{1.1em}0 \\
        \rule{0pt}{1.1em}0 \\
        \rule{0pt}{1.1em}0
    \end{array}\right], \qquad
    G_{Z,1} =
    \left[\begin{array}{c}
        \rule{0pt}{1.1em}0 \\
        \rule{0pt}{1.1em}0 \\
        \rule{0pt}{1.1em}0 \\
        \hline
        \rule{0pt}{1.1em} f_3(x,y) \\
        \rule{0pt}{1.1em} g_3(x,y) \\
        \rule{0pt}{1.1em} h_3(x,y)
    \end{array}\right],
    \qquad
    G_{Z,2} =
    \left[\begin{array}{c}
        \rule{0pt}{1.1em}0 \\
        \rule{0pt}{1.1em}0 \\
        \rule{0pt}{1.1em}0 \\
        \hline
        \rule{0pt}{1.1em} f_4(x,y) \\
        \rule{0pt}{1.1em} g_4(x,y) \\
        \rule{0pt}{1.1em} h_4(x,y)
    \end{array}\right].
\end{eqs}
Below, we suppress the arguments $(x,y)$ when no confusion can arise.
All lattice translates of $G_{X,1}$, $G_{X,2}$, $G_{Z,1}$, and $G_{Z,2}$ generate the gauge group.

As a simple example, the subsystem surface code~\cite{bravyi2013subsystemsurfacecodesthreequbit} is represented by
\begin{eqs}
    G_{X,1} =
    \left[\begin{array}{c}
        1 \\
        1 \\
        1 \\
        \hline
        0 \\
        0 \\
        0
    \end{array}\right],
    \qquad
    G_{X,2} =
    \left[\begin{array}{c}
        xy \\
        y \\
        x \\
        \hline
        0 \\
        0 \\
        0
    \end{array}\right], \qquad
    G_{Z,1} =
    \left[\begin{array}{c}
        0 \\
        0 \\
        0 \\
        \hline
        y \\
        y \\
        1
    \end{array}\right],
    \qquad
    G_{Z,2} =
    \left[\begin{array}{c}
        0 \\
        0 \\
        0 \\
        \hline
        x \\
        1 \\
        x
    \end{array}\right],
    \label{eq: subsystem surface code}
\end{eqs}
which is illustrated in Fig.~\ref{fig:example_poly}.
Using the symplectic product above, the corresponding commutation matrix is
\begin{equation}
M_c =
\begin{pmatrix}
1 & 1\\
1 & 1
\end{pmatrix}.
\end{equation}
Thus $\det(M_c)=0$, so local stabilizers exist.
Moreover, the nonzero entries are monomials, hence remain units after imposing periodic boundary conditions.
Therefore, no additional nonlocal stabilizers arise on finite tori.

In summary, the Laurent-polynomial formalism converts the analysis of translation-invariant subsystem codes into algebra over $R$: Pauli operators are vectors over $R$, lattice translations are monomial multiplications, commutation is computed by the symplectic product, and local stabilizers are obtained from kernels of the gauge-generator commutation matrix.

%%%%%%%%%%%%%%%%%%%%%%%%%%%%%
\section{Algebraic criteria and proofs}
\label{app:algebraic_proofs}

This appendix proves the algebraic criteria used in the main text, together with their generalizations to rectangular commutation matrices.
We begin with the $2\times2$ reduction and nonlocal-stabilizer criteria, then present their determinantal-ideal extensions, and finally prove the Clifford correspondence to the associated BB code.

%%%%%%%%%%%%%%%%%%%%%%%%%%%%%
\subsection{Proof of Lemma~\ref{lemma: monomial condition}}\label{app: proof of Lemma monomial condition}
\begin{proof}
First note that the Laurent monomials \(x^ay^b\) are precisely the units of \(R\).

We first prove \((1)\Rightarrow (2)\). Suppose there exist \(P,Q\in GL_2(R)\) such that \(PMQ=\begin{psmallmatrix}1&0\\0&0\end{psmallmatrix}\). Since a monomial is a unit in \(R\), the entries of \(PMQ\) generate the whole ring:
\begin{equation}
I_1(PMQ)=R.
\end{equation}
On the other hand, multiplication by invertible matrices only performs invertible row and column operations, so the ideal generated by the entries is unchanged:
\begin{equation}
I_1(M) = I_1(PMQ)= R.
\end{equation}

Now we prove \((2)\Rightarrow (1)\). Assume that
\(
I_1(M)=R
\)
and
\(
\det(M)=0.
\)
Write
\begin{equation}
M=
\begin{pmatrix}
a&b\\
c&d
\end{pmatrix}.
\end{equation}
Since \(I_1(M)=R\), the matrix \(M\) is not zero. By swapping rows and columns if necessary, we may assume \(a\neq 0\). These swaps are invertible operations, so they can be absorbed into the final choices of \(P\) and \(Q\).

Since $R$ is a unique factorization domain, choose
\begin{equation}
    g=\gcd(a,c),
\end{equation}
where the greatest common divisor of two Laurent polynomials is defined only up to multiplication by a Laurent monomial. Then we may write
\begin{equation}
    a=g\alpha,
    \qquad
    c=g\gamma,
\end{equation}
with $\gcd(\alpha,\gamma)=1$. Since $\det(M)=0$, we have
\begin{equation}
    ad=bc.
\end{equation}
Substituting $a=g\alpha$ and $c=g\gamma$, we obtain
\begin{equation}
    g\alpha d=g\gamma b.
\end{equation}
Since $R$ is an integral domain and $g\neq 0$, we can cancel $g$, obtaining
\begin{equation}
    \alpha d=\gamma b.
\end{equation}
Because $\gcd(\alpha,\gamma)=1$ and $R$ is a UFD, the relation $\alpha d=\gamma b$ implies that $\alpha\mid b$. Thus there exists $h\in R$ such that
\begin{equation}
    b=\alpha h.
\end{equation}
Substituting this back into $\alpha d=\gamma b$, we get
\begin{equation}
    \alpha d=\gamma\alpha h.
\end{equation}
Since $R$ is an integral domain and $\alpha\neq 0$, we can cancel $\alpha$, obtaining
\begin{equation}
    d=\gamma h.
\end{equation}
Therefore
\begin{equation}
    M=
    \begin{pmatrix}
        \alpha g&\alpha h\\
        \gamma g&\gamma h
    \end{pmatrix}
    =
    \begin{pmatrix}
        \alpha\\
        \gamma
    \end{pmatrix}
    \begin{pmatrix}
        g&h
    \end{pmatrix}.
\end{equation}
It follows that
\begin{equation}
    I_1(M)
    =
    \langle \alpha g,\alpha h,\gamma g,\gamma h\rangle
    =
    \langle \alpha,\gamma\rangle \langle g,h\rangle .
\end{equation}
By assumption $I_1(M)=R$, so
\begin{equation}
    \langle \alpha,\gamma\rangle \langle g,h\rangle=R.
\end{equation}
This forces
\begin{equation}
    \langle \alpha,\gamma\rangle=R,
    \qquad
    \langle g,h\rangle=R.
\end{equation}
Indeed, if either ideal were proper, it would be contained in a maximal ideal, and then their product would also be contained in that maximal ideal, contradicting the fact that the product is $R$.

Since $\langle \alpha,\gamma\rangle=R$, there exist $r,s\in R$ such that
\begin{equation}
    r\alpha+s\gamma=1.
\end{equation}
Define
\begin{equation}
    P_0=
    \begin{pmatrix}
        r&s\\
        -\gamma&\alpha
    \end{pmatrix}.
\end{equation}
Then
\begin{equation}
    \det(P_0)=r\alpha+s\gamma=1,
\end{equation}
so $P_0\in GL_2(R)$, and
\begin{equation}
    P_0
    \begin{pmatrix}
        \alpha\\
        \gamma
    \end{pmatrix}
    =
    \begin{pmatrix}
        1\\
        0
    \end{pmatrix}.
\end{equation}
Similarly, since $\langle g,h\rangle=R$, there exist $m,n\in R$ such that $gm+hn=1$.
% \begin{equation}
%     gm+hn=1.
% \end{equation}
Define
\begin{equation}
    Q_0=
    \begin{pmatrix}
        m&-h\\
        n&g
    \end{pmatrix}.
\end{equation}
Then
\begin{equation}
    \det(Q_0)=mg+hn=1,
\end{equation}
so $Q_0\in GL_2(R)$, and
\begin{equation}
    \begin{pmatrix}
        g&h
    \end{pmatrix}
    Q_0
    =
    \begin{pmatrix}
        1&0
    \end{pmatrix}.
\end{equation}
Therefore
\begin{equation}
    P_0MQ_0
    =
    P_0
    \begin{pmatrix}
        \alpha\\
        \gamma
    \end{pmatrix}
    \begin{pmatrix}
        g&h
    \end{pmatrix}
    Q_0
    =
    \begin{pmatrix}
        1\\
        0
    \end{pmatrix}
    \begin{pmatrix}
        1&0
    \end{pmatrix}
    =
    \begin{pmatrix}
        1&0\\
        0&0
    \end{pmatrix}.
\end{equation}
%In particular, $P_0MQ_0$ has an entry equal to $1$, which is a monomial.

Thus, there exist invertible matrices $P,Q\in GL_2(R)$ such that $PMQ$ has a monomial entry. This proves $(2)\Rightarrow (1)$, and hence the two conditions are equivalent.
\end{proof}

%%%%%%%%%%%%%%%%%%%%%%%%%%%%%
\subsection{Generalization of Lemma~\ref{lemma: monomial condition}}\label{app: Generalization of Lemma monomial condition}

The main text uses Lemma~\ref{lemma: monomial condition} for the $2\times2$ commutation matrix that appears when there are two $X$-type and two $Z$-type gauge-generator families.
For a more general subsystem code with $p$ $X$-type and $q$ $Z$-type gauge-generator families, the commutation matrix is rectangular.
The entry-ideal condition is then replaced by a determinantal-ideal condition: the ideal generated by the maximal nonzero minors must be the whole ring.
When this holds, invertible row and column operations bring the commutation matrix to canonical rank form.

\begin{lemma}[Unimodular-rank reduction]
\label{lem:unimodular_rank_reduction}
Let
\begin{equation}
R=\mathbb Z_2[x^{\pm1},y^{\pm1}],
\qquad
M\in \operatorname{Mat}_{p\times q}(R).
\end{equation}
For \(j\geq0\), let \(I_j(M)\subseteq R\) be the ideal generated by all
\(j\times j\) minors of \(M\), with \(I_0(M)=R\) and
\(I_j(M)=0\) for \(j>\min(p,q)\). Let
\(
r=\operatorname{rank}(M)
\)
be the generic rank of \(M\), so that
\begin{equation}
I_r(M)\neq0,
\qquad
I_{r+1}(M)=0.
\end{equation}
Then the following are equivalent:
\begin{enumerate}
    \item There exist invertible matrices
    \begin{equation}
    P\in GL_p(R),
    \qquad
    Q\in GL_q(R),
    \end{equation}
    such that
    \begin{equation}
    PMQ=
    \begin{pmatrix}
        I_r&0\\
        0&0
    \end{pmatrix}.
    \end{equation}

    \item The maximal nonzero determinantal ideal is the unit ideal:
    \begin{equation}
    I_r(M)=R.
    \end{equation}
\end{enumerate}
\end{lemma}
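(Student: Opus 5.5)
The direction $(1)\Rightarrow(2)$ is formal, and I would settle it first. By Cauchy--Binet, every $j\times j$ minor of $PMQ$ is an $R$-linear combination of $j\times j$ minors of $M$, so $I_j(PMQ)\subseteq I_j(M)$. Applying the same argument to $M=P^{-1}(PMQ)Q^{-1}$ gives the reverse inclusion, hence $I_j(PMQ)=I_j(M)$. The canonical form $\begin{psmallmatrix}I_r&0\\0&0\end{psmallmatrix}$ has a minor equal to $1$, so $I_r(M)=R$.

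For $(2)\Rightarrow(1)$ I would argue module-theoretically rather than by explicit elimination, since the $2\times2$ gcd trick of Lemma~\ref{lemma: monomial condition} does not generalize directly. View $M$ as a map $R^q\to R^p$ and set $C=\operatorname{coker}(M)=R^p/\operatorname{im}M$. Its Fitting ideals are $\operatorname{Fitt}_{p-r}(C)=I_r(M)=R$ and $\operatorname{Fitt}_{p-r-1}(C)=I_{r+1}(M)=0$. A standard fact from commutative algebra says that a finitely presented module with $\operatorname{Fitt}_{k}=R$ and $\operatorname{Fitt}_{k-1}=0$ is projective of constant rank $k$. One can see this locally: at each prime, some $r\times r$ minor is a unit, and the corresponding pivot reduction over the local ring, using $I_{r+1}=0$, shows that the rest of the matrix vanishes. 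So $C$ is projective of rank $p-r$. Then $\operatorname{im}M$ is a direct summand of $R^p$, since $0\to\operatorname{im}M\to R^p\to C\to0$ splits, and it is projective of rank $r$. Likewise $0\to\ker M\to R^q\to\operatorname{im}M\to0$ splits, so $\ker M$ is projective of rank $q-r$.

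The key input is the Quillen--Suslin theorem for Laurent polynomial rings over a field (Swan's extension). It says that every finitely generated projective $R$-module is free. So $\ker M$, $\operatorname{im}M$ and $C$ are all free. Choose a basis $e_1,\dots,e_r$ of $\operatorname{im}M$ and extend it by a basis of a complement $\cong C$ to a basis of $R^p$. Choose preimages $f_1,\dots,f_r\in R^q$ with $Mf_i=e_i$, and extend them by a basis of $\ker M$ to a basis of $R^q$; this is possible because of the splitting. Let $Q$ have the $f$-basis as its columns and let $P^{-1}$ have the $e$-basis as its columns. Then $PMQ=\begin{psmallmatrix}I_r&0\\0&0\end{psmallmatrix}$.

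The main obstacle is the passage from the Fitting-ideal condition to projectivity and then to freeness. Projectivity is a routine localization argument. Freeness genuinely needs Quillen--Suslin, which I would cite rather than reprove.

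An alternative that avoids the full theorem would induct on $r$. First use $I_r=R$ to produce a unimodular column in the image, then complete it to an invertible matrix, and peel off a $1\times1$ block. However, completing unimodular rows over $R$ is itself the content of Quillen--Suslin–Swan, so I would simply invoke that theorem.
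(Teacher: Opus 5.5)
Your proposal is correct and takes essentially the same route as the paper. For $(1)\Rightarrow(2)$ both use Cauchy--Binet invariance of determinantal ideals; for $(2)\Rightarrow(1)$ both localize at primes, use a pivot reduction with $I_{r+1}(M)=0$ to get local constant rank, deduce that the image, kernel and cokernel are projective, apply Quillen--Suslin--Swan, and then build adapted bases from the split exact sequences. Your phrasing through $\operatorname{Fitt}_{p-r}(C)=I_r(M)=R$ and $\operatorname{Fitt}_{p-r-1}(C)=I_{r+1}(M)=0$ is only a standard repackaging of the paper's local Schur-complement step.
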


\begin{proof}
We first prove \((1)\Rightarrow(2)\). Determinantal ideals are invariant
under invertible row and column operations. Indeed, by the Cauchy--Binet
formula, multiplying by an invertible matrix replaces each \(j\times j\)
minor by an \(R\)-linear combination of \(j\times j\) minors, and applying
the inverse operation gives the reverse inclusion. Therefore
\begin{equation}
I_j(PMQ)=I_j(M)
\end{equation}
for all \(j\). For the canonical matrix
\begin{equation}
D_r:=
\begin{pmatrix}
I_r&0\\
0&0
\end{pmatrix},
\end{equation}
one \(r\times r\) minor is equal to \(1\). Hence
\begin{equation}
I_r(D_r)=R.
\end{equation}
Thus
\begin{equation}
I_r(M)=I_r(PMQ)=I_r(D_r)=R.
\end{equation}

We now prove \((2)\Rightarrow(1)\). Assume
\begin{equation}
I_r(M)=R.
\end{equation}
Since \(r\) is the generic rank, we also have
\begin{equation}
I_{r+1}(M)=0.
\end{equation}
If \(r=0\), then \(M=0\), and the claim is immediate. Hence assume
\(r>0\).

Let
\begin{equation}
\varphi_M:R^q\longrightarrow R^p
\end{equation}
be the \(R\)-linear map represented by \(M\). We first show that
\(\varphi_M\) has constant rank \(r\) locally. Let \(\mathfrak q\subset R\)
be any prime ideal. Since \(I_r(M)=R\), at least one \(r\times r\) minor of
\(M\) is not contained in \(\mathfrak q\), and hence becomes a unit in the
local ring \(R_{\mathfrak q}\). After permuting rows and columns, we may
write \(M\) over \(R_{\mathfrak q}\) in block form
\begin{equation}
M=
\begin{pmatrix}
A&B\\
C&D
\end{pmatrix},
\end{equation}
where \(A\) is an invertible \(r\times r\) matrix over \(R_{\mathfrak q}\).
Using invertible row and column operations over \(R_{\mathfrak q}\), we get
\begin{equation}
\begin{pmatrix}
A&B\\
C&D
\end{pmatrix}
\sim
\begin{pmatrix}
I_r&0\\
0&D-CA^{-1}B
\end{pmatrix}.
\end{equation}
Each entry of the Schur complement \(D-CA^{-1}B\) is, up to multiplication
by \(\det(A)^{-1}\), an \((r+1)\times(r+1)\) minor of \(M\). Since
\begin{equation}
I_{r+1}(M)=0,
\end{equation}
all these entries vanish. Therefore, locally at every prime \(\mathfrak q\),
we have
\begin{equation}
M_{\mathfrak q}\sim
\begin{pmatrix}
I_r&0\\
0&0
\end{pmatrix}.
\end{equation}

It follows that
\begin{equation}
E:=\operatorname{im}(\varphi_M)
\end{equation}
is locally free of rank \(r\), and locally a direct summand of \(R^p\).
Hence \(E\) is a finitely generated projective \(R\)-module. Similarly,
\begin{equation}
\ker(\varphi_M)
\qquad\text{and}\qquad
\operatorname{coker}(\varphi_M)
\end{equation}
are finitely generated projective \(R\)-modules, since locally they are free
of ranks \(q-r\) and \(p-r\), respectively.

Now use that
\begin{equation}
R=\mathbb Z_2[x^{\pm1},y^{\pm1}]
\end{equation}
is a Laurent polynomial ring over a field. By the Quillen--Suslin--Swan
theorem for Laurent polynomial rings, every finitely generated projective
\(R\)-module is free. Therefore
\begin{equation}
E,\qquad \ker(\varphi_M),\qquad \operatorname{coker}(\varphi_M)
\end{equation}
are free \(R\)-modules.

Since \(E\) is projective, the exact sequence
\begin{equation}
0\longrightarrow \ker(\varphi_M)
\longrightarrow R^q
\longrightarrow E
\longrightarrow 0
\end{equation}
splits. Thus
\begin{equation}
R^q\cong E'\oplus \ker(\varphi_M),
\end{equation}
where \(E'\cong E\), and \(\varphi_M\) restricts to an isomorphism
\begin{equation}
E'\xrightarrow{\sim}E.
\end{equation}
Similarly, since \(\operatorname{coker}(\varphi_M)\) is projective, the exact
sequence
\begin{equation}
0\longrightarrow E
\longrightarrow R^p
\longrightarrow \operatorname{coker}(\varphi_M)
\longrightarrow 0
\end{equation}
splits, so
\begin{equation}
R^p\cong E\oplus \operatorname{coker}(\varphi_M).
\end{equation}

Choose bases for the free modules \(E'\), \(\ker(\varphi_M)\), \(E\), and
\(\operatorname{coker}(\varphi_M)\). With respect to the domain decomposition
\begin{equation}
R^q=E'\oplus \ker(\varphi_M)
\end{equation}
and the codomain decomposition
\begin{equation}
R^p=E\oplus \operatorname{coker}(\varphi_M),
\end{equation}
the map \(\varphi_M\) is the identity from \(E'\) to \(E\), and is zero on
\(\ker(\varphi_M)\). Hence its matrix is
\begin{equation}
\begin{pmatrix}
I_r&0\\
0&0
\end{pmatrix}.
\end{equation}
Equivalently, there exist
\begin{equation}
P\in GL_p(R),
\qquad
Q\in GL_q(R),
\end{equation}
such that
\begin{equation}
PMQ=
\begin{pmatrix}
I_r&0\\
0&0
\end{pmatrix}.
\end{equation}
This proves the claim.
\end{proof}

%%%%%%%%%%%%%%%%%%%%%%%%%%%%%%%%%%%%%%%%%%%%%%%%%%%%
\subsection{Proof of Theorem~\ref{thm: nonlocal stabilizer}}\label{app: proof of Theorem nonlocal stabilizer}

\begin{proof}
Write
\begin{equation}
M=
\begin{pmatrix}
a&b\\
c&d
\end{pmatrix}.
\end{equation}
We demonstrate the argument for right kernels. The corresponding statement for left kernels follows by applying the same argument to $M^{\mathsf T}$.

For any torus quotient
\begin{equation}
A=R_{n,m}=R/(x^n-1,y^m-1),
\end{equation}
write
\begin{equation}
\mathcal K_A
:=
\ker_A(M^{(n,m)}:A^2\to A^2)
\end{equation}
for the finite-torus kernel. Also write
\begin{equation}
\mathcal K_R
:=
\ker_R(M:R^2\to R^2)
\end{equation}
for the infinite-plane local kernel.
Let
\begin{equation}
\rho_A:R\to A
\end{equation}
be the quotient map. We extend $\rho_A$ coefficientwise to vectors in $R^2$:
\begin{equation}
u=
\begin{pmatrix}
f\\ g
\end{pmatrix}
\in R^2 ~\Rightarrow~
\rho_A(u)=
\begin{pmatrix}
\rho_A(f)\\
\rho_A(g)
\end{pmatrix}
\in A^2.
\end{equation}
Define the reduction of the infinite-plane local kernel on the torus to be the $A$-submodule
\begin{equation}
\rho_A({\mathcal K}_R)
:=
\left\{
\sum_i t_i\,\rho_A(u_i)
:
t_i\in A,\ u_i\in \mathcal K_R
\right\}
\subseteq A^2.
\end{equation}
In words, $\rho_A({\mathcal K}_R)$ is generated by reducing all infinite-plane local kernel vectors modulo
\begin{equation}
x^n-1,\qquad y^m-1,
\end{equation}
and then allowing arbitrary torus-periodic $A$-linear combinations.

Since every $u_i\in\mathcal K_R$ satisfies
\begin{equation}
Mu_i=0
\end{equation}
over $R$, its reduction satisfies
\begin{equation}
M^{(n,m)}\rho_A(u_i)=0
\end{equation}
over $A$. Hence
\begin{equation}
\rho_A({\mathcal K}_R)\subseteq \mathcal K_A.
\end{equation}
The theorem says that this inclusion is equality for every torus when $I_1(M)=R$, and is strict for some torus when $I_1(M)$ is proper.

First assume
\(
I_1(M)=R.
\)
By Lemma~\ref{lemma: monomial condition}, we have
\begin{equation}
PMQ
=
\begin{pmatrix}
1&0\\
0&0
\end{pmatrix}.
\end{equation}
Now let
\begin{equation}
A:=R_{n,m}
\end{equation}
be any finite-torus quotient. Reducing $P$ and $Q$ modulo $x^n-1$ and $y^m-1$,
% \begin{equation}
% x^n-1,\qquad y^m-1,
% \end{equation}
we get invertible matrices
\begin{equation}
P_A,Q_A\in GL_2(A).
\end{equation}
Thus
\begin{equation}
P_A M^{(n,m)} Q_A
=
\begin{pmatrix}
1&0\\
0&0
\end{pmatrix}.
\end{equation}
It follows that
\begin{equation}
\mathcal K_A
=
Q_A
\begin{pmatrix}
0\\w
\end{pmatrix}, \quad
\forall~w \in A.
\end{equation}
On the infinite plane, the same calculation gives
\begin{equation}
\mathcal K_R
=
Q \begin{pmatrix}
0\\v
\end{pmatrix}, \quad
\forall~v \in R.
\end{equation}
Therefore, after reducing to the torus, the $A$-submodule generated by the reductions of all elements of $\mathcal K_R$ is exactly $\mathcal K_A$.
Hence
\begin{equation}
\rho_A({\mathcal K}_R)
=
\mathcal K_A.
\end{equation}
Since this holds for every $n,m$, imposing periodic boundary conditions creates no right-kernel vectors beyond those coming from infinite-plane local kernels. Applying the same argument to $M^{\mathsf T}$ gives the corresponding statement for left kernels. Thus no additional nonlocal stabilizers appear when
\(
I_1(M)=R.
\)

Now assume that
\(
I_1(M)
\)
is proper. Choose a maximal ideal
\begin{equation}
\mathfrak p\supset I_1(M).
\end{equation}
Set
\begin{equation}
K:=R/\mathfrak p.
\end{equation}
Since $R$ is a finitely generated algebra over $\mathbb Z_2$, the field $K$ is a finite extension of $\mathbb Z_2$. Thus
\begin{equation}
K\cong \mathbb F_{2^e}
\end{equation}
for some $e\geq 1$.
Let
\begin{equation}
\alpha=x\bmod \mathfrak p,
\qquad
\beta=y\bmod \mathfrak p.
\end{equation}
Because $x$ and $y$ are units in $R$, their images $\alpha,\beta$ are nonzero elements of $K$. Hence
\(
\alpha,\beta\in K^\times.
\)
The multiplicative group $K^\times$ has order
\begin{equation}
|K^\times|=2^e-1,
\end{equation}
which is odd. Therefore the multiplicative orders of $\alpha$ and $\beta$ are odd. Choose
\begin{equation}
n=\operatorname{ord}(\alpha),
\qquad
m=\operatorname{ord}(\beta).
\end{equation}
Then
\(
\alpha^n=1,~
\beta^m=1,
\)
and $n,m$ are odd.
Thus,
% \begin{equation}
% x^n-1,\ y^m-1\in \mathfrak p.
% \end{equation}
% Let
\begin{equation}
J:=\langle x^n-1,y^m-1 \rangle \subseteq \mathfrak p.
\end{equation}
Therefore the quotient map
\begin{equation}
R\to K=R/\mathfrak p
\end{equation}
annihilates $J$, so it factors through
\begin{equation}
A=R_{n,m}=R/J.
\end{equation}
Equivalently, we get an induced map
\begin{equation}
A\to K:
\quad
r+J\longmapsto r+\mathfrak p.
\end{equation}
Let
\begin{equation}
\mathfrak n:=\mathfrak p/J
=
\{r+J\in A:r\in\mathfrak p\}.
\end{equation}
Then $\mathfrak n$ is the kernel of the induced map $A\to K$. Hence
\begin{equation}
A/\mathfrak n\cong R/\mathfrak p=K,
\end{equation}
so $\mathfrak n$ is a maximal ideal of $A$.

Recall that, over a field, a polynomial is called square-free if no irreducible factor occurs with multiplicity greater than one. Equivalently, for a polynomial $f$, square-freeness is detected by the derivative criterion
\begin{equation}
f\text{ is square-free}
\quad\Longleftrightarrow\quad
\gcd(f,f')=1.
\end{equation}
Since the characteristic is two and $n$ is odd, we have
\begin{equation}
\frac{d}{dx}(x^n-1)
=
n x^{n-1}
=
x^{n-1}.
\end{equation}
The polynomial $x^{n-1}$ is a power of $x$, while $x$ does not divide $x^n-1$, because $x^n-1$ has nonzero constant term. Hence
\begin{equation}
\gcd(x^n-1,x^{n-1})=1.
\end{equation}
Therefore $x^n-1$ is square-free over $\mathbb Z_2$. Similarly, $y^m-1$ is square-free over $\mathbb Z_2$.
% \begin{equation}
% \frac{d}{dy}(y^m-1)
% =
% m y^{m-1}
% =
% y^{m-1},
% \end{equation}
% and
% \begin{equation}
% \gcd(y^m-1,y^{m-1})=1,
% \end{equation}
% so $y^m-1$ is square-free over $\mathbb Z_2$.

We next explain why this implies that $A$ is a finite product of fields. Since $x^n=1$ and $y^m=1$
% \begin{equation}
% x^n=1,
% \qquad
% y^m=1
% \end{equation}
inside $A$, $A$ is finite-dimensional over $\mathbb Z_2$.
Because $x^n-1$ is square-free over $\ZZ_2$, the quotient
\begin{equation}
B:=\ZZ_2[x^{\pm1}]/(x^n-1)
\end{equation}
has no nonzero nilpotent elements, so $B$ is reduced. Since $x^n=1$ in $B$, we have $x^{-1}=x^{n-1}$, and hence
\begin{equation}
B\cong \ZZ_2[x]/(x^n-1).
\end{equation}
Thus $B$ is finite-dimensional over $\ZZ_2$, hence Artinian. Therefore $B$ is an Artinian reduced ring. By the structure theorem for Artinian reduced rings, $B$ is a finite product of fields. Since $B$ is finite over $\ZZ_2$, each factor is a finite field.
Indeed, if
\begin{equation}
x^n-1=q_1(x)\cdots q_r(x)
\end{equation}
is its factorization into distinct irreducible polynomials, then the Chinese remainder theorem gives
\begin{equation}
B
\cong
\prod_{i=1}^r \mathbb Z_2[x]/(q_i(x)),
\end{equation}
and each factor is a finite field.

Now
\begin{equation}
A
\cong
B[y^{\pm1}]/(y^m-1).
\end{equation}
Writing
\(
B\cong F_1\times\cdots\times F_r
\)
as a product of finite fields, we get
\begin{equation}
A
\cong
\prod_{i=1}^r F_i[y^{\pm1}]/(y^m-1).
\end{equation}
For each finite field $F_i$, the same derivative argument shows that $y^m-1$ is square-free over $F_i$, since the characteristic is still two and $m$ is odd. Therefore each quotient
\begin{equation}
F_i[y^{\pm1}]/(y^m-1)
\end{equation}
is again a finite product of finite fields. Consequently, $A$
is itself a finite product of fields.

Remember $\mathfrak n=\mathfrak p/(x^n-1,y^m-1)$ is a maximal ideal of $A$, and
\begin{equation}
A/\mathfrak n\cong R/\mathfrak p=K.
\end{equation}
Since $A$ is a finite product of fields, localizing at a maximal ideal simply selects the corresponding field factor:
\begin{equation}
A_{\mathfrak n}
\cong
A/\mathfrak n
\cong
K.
\end{equation}
Now consider the finite-torus kernel
\begin{equation}
\mathcal K_A=\ker_A(M^{(n,m)}:A^2\to A^2).
\end{equation}
Localizing at $\mathfrak n$ gives
\begin{equation}
(\mathcal K_A)_{\mathfrak n}
=
\ker_{A_{\mathfrak n}}
\left(
M^{(n,m)}_{\mathfrak n}:A_{\mathfrak n}^2\to A_{\mathfrak n}^2
\right).
\end{equation}
Under the identification
\(
A_{\mathfrak n}\cong K,
\)
all entries of $M$ vanish, because
\begin{equation}
a,b,c,d\in I_1(M)\subseteq \mathfrak p.
\end{equation}
Hence, the localized matrix is the zero matrix over $K$:
\(
M^{(n,m)}_{\mathfrak n}=0.
\)
Therefore
\begin{equation}
(\mathcal K_A)_{\mathfrak n}
=
\ker_K(0:K^2\to K^2)
=
K^2.
\end{equation}
So, in the Fourier sector corresponding to $\mathfrak n$, the full torus kernel is two-dimensional.

We now compare this with the part coming from infinite-plane local kernels. Let
\begin{equation}
\rho_A({\mathcal K}_R)
\subseteq
\mathcal K_A
\end{equation}
be the $A$-submodule generated by reductions of elements of $\mathcal K_R$, as defined at the beginning of the proof.
We claim that
\begin{equation}
(\rho_A({\mathcal K}_R))_{\mathfrak n}
\subsetneq
K^2.
\end{equation}
Suppose, for contradiction, that
\begin{equation}
(\rho_A({\mathcal K}_R))_{\mathfrak n}
=
K^2.
\end{equation}
Let
\begin{equation}
L:=(\mathcal K_R)_{\mathfrak p}
\subseteq R_{\mathfrak p}^2
\end{equation}
be the localization of the infinite-plane kernel at $\mathfrak p$. Explicitly,
\begin{equation}
L=
\left\{
\frac{u}{s}
:
u\in \mathcal K_R,\ s\in R\setminus \mathfrak p
\right\},
\end{equation}
where $R\setminus \mathfrak p$ is multiplicatively closed since $\mathfrak p$ is prime.

Reducing coefficients modulo $\mathfrak p$ gives a map
\begin{equation}
R_{\mathfrak p}^2\to
R_{\mathfrak p}^2/\mathfrak p R_{\mathfrak p}^2
\cong
K^2.
\end{equation}
We now describe the localization of $\rho_A({\mathcal K}_R)$  at $\mathfrak n$. Since
\(
A_{\mathfrak n}\cong K,
\)
we have
\begin{equation}
(A^2)_{\mathfrak n}\cong K^2.
\end{equation}
An element of $(\rho_A({\mathcal K}_R))_{\mathfrak n}$ is a finite sum of terms of the form
\begin{equation}
\frac{t}{s}\,\rho_A(u),
\qquad
t\in A,\quad s\in A\setminus\mathfrak n,\quad u\in\mathcal K_R.
\end{equation}
Under the identification $A_{\mathfrak n}\cong K$, the coefficient
\(
\frac{t}{s}
\)
maps to the scalar
\begin{equation}
\frac{t\bmod\mathfrak n}{s\bmod\mathfrak n}\in K,
\end{equation}
which is well-defined because $s\notin\mathfrak n$. Also, the vector
\(
\rho_A(u)
\)
maps to
\begin{equation}
u\bmod\mathfrak p\in K^2.
\end{equation}
Therefore
\begin{equation}
(\rho_A({\mathcal K}_R))_{\mathfrak n}
=
\operatorname{span}_K
\{u\bmod\mathfrak p:u\in\mathcal K_R\}
\subseteq K^2.
\end{equation}

On the other hand, let
\begin{equation}
L=(\mathcal K_R)_{\mathfrak p}\subseteq R_{\mathfrak p}^2.
\end{equation}
Reducing modulo the maximal ideal $\mathfrak pR_{\mathfrak p}$ gives
\begin{equation}
R_{\mathfrak p}^2
\longrightarrow
R_{\mathfrak p}^2/\mathfrak pR_{\mathfrak p}^2
\cong K^2.
\end{equation}
An element
\begin{equation}
\frac{u}{s}\in L,
\qquad
u\in\mathcal K_R,\quad s\in R\setminus\mathfrak p,
\end{equation}
maps to
\begin{equation}
\frac{u\bmod\mathfrak p}{s\bmod\mathfrak p}\in K^2.
\end{equation}
Since $s\bmod\mathfrak p$ is a nonzero element of $K$, this is a $K$-scalar multiple of $u\bmod\mathfrak p$. Hence the image of $L$ in $K^2$ is precisely
\begin{equation}
\operatorname{span}_K
\{u\bmod\mathfrak p:u\in\mathcal K_R\}.
\end{equation}
Consequently,
\(
(\rho_A({\mathcal K}_R))_{\mathfrak n}
\)
is equal to the image of $L$ modulo $\mathfrak p$ in
\begin{equation}
R_{\mathfrak p}^2/\mathfrak pR_{\mathfrak p}^2
\cong K^2.
\end{equation}
Thus, the assumption
\begin{equation}
(\rho_A({\mathcal K}_R))_{\mathfrak n}=K^2
\end{equation}
means that the image of $L \subseteq R_{\mathfrak p}^2 $ modulo $\mathfrak p$ is all of $K^2 \cong R_{\mathfrak p}^2/\mathfrak p R_{\mathfrak p}^2
$. Equivalently,
\begin{equation}
L+\mathfrak p R_{\mathfrak p}^2
=
R_{\mathfrak p}^2.
\end{equation}

Now define
\begin{equation}
C:=R_{\mathfrak p}^2/L.
\end{equation}
The previous equality says
\begin{equation}
C=\mathfrak p R_{\mathfrak p} C.
\end{equation}
Since $C$ is a finitely generated module over the local ring $R_{\mathfrak p}$, Nakayama's lemma implies
\(
C=0.
\)
Therefore
\begin{equation}
L=R_{\mathfrak p}^2.
\end{equation}

But localization is exact, so $L$ is the kernel of the localized map
\begin{equation}
M_{\mathfrak p}:R_{\mathfrak p}^2\to R_{\mathfrak p}^2.
\end{equation}
Thus
\(
L=R_{\mathfrak p}^2
\)
means that $M_{\mathfrak p}$ kills every vector in $R_{\mathfrak p}^2$. In particular, it kills the two standard basis vectors, so both columns of $M_{\mathfrak p}$ are zero. Hence
\begin{equation}
M_{\mathfrak p}=0.
\end{equation}
Since $R$ is a domain, the localization map
\(
R\to R_{\mathfrak p}
\)
is injective. Therefore
\(
M_{\mathfrak p}=0
\)
would imply
\(
M=0
\)
over $R$, contradicting the hypothesis $M\neq 0$.

This contradiction proves that
\(
(\rho_A({\mathcal K}_R))_{\mathfrak n}
\subsetneq
K^2.
\)
But we already showed that
\(
(\mathcal K_A)_{\mathfrak n}=K^2.
\)
Hence,
\begin{equation}
(\rho_A({\mathcal K}_R))_{\mathfrak n}
\subsetneq
(\mathcal K_A)_{\mathfrak n}.
\end{equation}
If we had
\(
\rho_A({\mathcal K}_R)=\mathcal K_A
\)
before localization, then their localizations at $\mathfrak n$ would also be equal. Since their localizations are not equal, we must have
\begin{equation}
\rho_A({\mathcal K}_R)
\subsetneq
\mathcal K_A.
\end{equation}

Therefore, for the torus sizes $n,m$ chosen above, the finite-torus kernel is strictly larger than the reduction of the infinite-plane local kernel. Choose
\begin{equation}
w\in \mathcal K_A\setminus \rho_A({\mathcal K}_R).
\end{equation}
Then $w$ is a torus kernel vector that does not come from any torus-periodic combination of reduced infinite-plane local kernel vectors.

The corresponding product of gauge generators commutes with all gauge generators on the finite torus because
\(
w\in \mathcal K_A.
\)
Hence it gives a stabilizer. Since
\(
w\notin \rho_A({\mathcal K}_R),
\)
this stabilizer is not generated by the reductions of infinite-plane local stabilizers. Therefore, if the corresponding Pauli operator is nonzero, it is an additional nonlocal stabilizer.

Finally, applying the same argument to $M^{\mathsf T}$ gives the corresponding statement for left kernels.
\end{proof}

%%%%%%%%%%%%%%%%%%%%%%%%%%%%%%%%%%%%%%%%%%%%%%%%%%%%
\subsection{Generalization of Theorem~\ref{thm: nonlocal stabilizer}}\label{app: Generalization of Theorem nonlocal stabilizer}

Theorem~\ref{thm: nonlocal stabilizer} treats the $2\times2$ commutation matrix arising from two $X$-type and two $Z$-type gauge-generator families.
For a general translation-invariant CSS subsystem code with $p$ $X$-type and $q$ $Z$-type gauge-generator families, the commutation matrix is instead a $p\times q$ matrix over $R$.
In this setting, the entry ideal $I_1(M)$ is replaced by the determinantal ideal $I_r(M)$ generated by the maximal nonzero minors, where $r$ is the generic rank of $M$.
The following theorem shows that the same dichotomy holds: if $I_r(M)=R$, periodic boundary conditions introduce no additional kernel vectors beyond the infinite-plane local kernels; if $I_r(M)$ is proper, then some finite torus has extra kernel vectors, which give nonlocal stabilizers whenever the corresponding Pauli operators are nonzero.
We state the result for right kernels, corresponding to $Z$-type stabilizers; the statement for left kernels, corresponding to $X$-type stabilizers, follows by applying the same argument to $M^{\mathsf T}$.

\begin{theorem}[Determinantal-ideal criterion for nonlocal stabilizers]
\label{thm:determinantal_nonlocal_stabilizers}
Let
\begin{equation}
R=\mathbb Z_2[x^{\pm1},y^{\pm1}],
\qquad
M\in \operatorname{Mat}_{p\times q}(R),
\end{equation}
and let
\(
r=\operatorname{rank}(M)
\)
be the generic rank of \(M\). Equivalently,
\begin{equation}
I_r(M)\neq0,
\qquad
I_{r+1}(M)=0,
\end{equation}
where \(I_j(M)\) denotes the ideal generated by all \(j\times j\) minors of
\(M\).

For an \(n\times m\) torus, define
\begin{equation}
R_{n,m}:=R/(x^n-1,y^m-1),
\end{equation}
and let \(M^{(n,m)}\) be the image of \(M\) in
\(
\operatorname{Mat}_{p\times q}(R_{n,m}).
\)
Let
\begin{equation}
\mathcal K_R
:=
\ker_R(M:R^q\to R^p)
\end{equation}
be the infinite-plane right kernel, and let
\begin{equation}
\mathcal K_{n,m}
:=
\ker_{R_{n,m}}
\left(
M^{(n,m)}:R_{n,m}^q\to R_{n,m}^p
\right)
\end{equation}
be the finite-torus right kernel.
Let
\begin{equation}
\rho_{n,m}:R\to R_{n,m}
\end{equation}
be the quotient map. We define
\begin{equation}
\rho_{n,m}(\mathcal K_R)
\subseteq
R_{n,m}^q
\end{equation}
to be the \(R_{n,m}\)-submodule generated by the coefficientwise reductions of
all elements of \(\mathcal K_R\). Explicitly,
\begin{equation}
\rho_{n,m}(\mathcal K_R)
=
\left\{
\sum_i t_i\,\rho_{n,m}(u_i)
:
t_i\in R_{n,m},\ u_i\in\mathcal K_R
\right\}.
\end{equation}
Then:
\begin{enumerate}
    \item If
    \(
    I_r(M)=R,
    \)
    then for every \(n,m\),
    \begin{equation}
    \mathcal K_{n,m}
    =
    \rho_{n,m}(\mathcal K_R).
    \end{equation}
    Thus imposing periodic boundary conditions creates no additional
    right-kernel vectors, and hence no additional nonlocal \(Z\)-type
    stabilizers from this sector.

    \item If
    \(
    I_r(M)\subsetneq R,
    \)
    then there exist positive integers \(n,m\) such that
    \begin{equation}
    \rho_{n,m}(\mathcal K_R)
    \subsetneq
    \mathcal K_{n,m}.
    \end{equation}
    Any nonzero Pauli operator obtained from a vector in
    \begin{equation}
    \mathcal K_{n,m}\setminus \rho_{n,m}(\mathcal K_R)
    \end{equation}
    is an additional nonlocal \(Z\)-type stabilizer.
\end{enumerate}
The analogous statements for \(X\)-type stabilizers hold for left kernels,
or equivalently by applying the theorem to \(M^{\mathsf T}\).
\end{theorem}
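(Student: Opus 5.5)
The plan mirrors the proof of Theorem~\ref{thm: nonlocal stabilizer}, replacing Lemma~\ref{lemma: monomial condition} by Lemma~\ref{lem:unimodular_rank_reduction} and the entry ideal by $I_r(M)$. First, for every torus, the reduction of an infinite-plane kernel vector stays in the torus kernel, so $\rho_{n,m}(\mathcal K_R)\subseteq\mathcal K_{n,m}$. Both parts therefore reduce to deciding whether this inclusion is an equality.

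\emph{Part 1.} Assume $I_r(M)=R$. Lemma~\ref{lem:unimodular_rank_reduction} gives $P\in GL_p(R)$ and $Q\in GL_q(R)$ with $PMQ=D_r:=\begin{psmallmatrix}I_r&0\\0&0\end{psmallmatrix}$. Reducing modulo $(x^n-1,y^m-1)$ preserves invertibility, since determinants are units and map to units. Hence $P_{n,m}M^{(n,m)}Q_{n,m}=D_r$ over $R_{n,m}$. It follows that $\mathcal K_{n,m}=Q_{n,m}\bigl(0\oplus R_{n,m}^{q-r}\bigr)$ and $\mathcal K_R=Q\bigl(0\oplus R^{q-r}\bigr)$. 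The reductions of the columns $Qe_j$ for $j>r$ lie in $\rho_{n,m}(\mathcal K_R)$ and generate $\mathcal K_{n,m}$, which gives equality.

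\emph{Part 2.} Assume $I_r(M)$ is proper, and pick a maximal ideal $\mathfrak p\supseteq I_r(M)$ with residue field $K=R/\mathfrak p\cong\mathbb F_{2^e}$. Let $n,m$ be the odd multiplicative orders of $x,y$ in $K$. Exactly as in the $2\times2$ proof, $A=R_{n,m}$ is a finite product of fields, $\mathfrak n=\mathfrak p/J$ is maximal, and $A_{\mathfrak n}\cong K$.

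\emph{Rank comparison.} Localizing at $\mathfrak n$ identifies $(\mathcal K_{n,m})_{\mathfrak n}$ with $\ker_K(\overline M)$, where $\overline M=M\bmod\mathfrak p$. Since every $r\times r$ minor vanishes mod $\mathfrak p$ (and so do the larger ones), $\operatorname{rank}_K\overline M\le r-1$. Hence $\dim_K(\mathcal K_{n,m})_{\mathfrak n}\ge q-r+1$. As before, $(\rho_{n,m}(\mathcal K_R))_{\mathfrak n}$ equals the image of $L=(\mathcal K_R)_{\mathfrak p}$ in $K^q=R_{\mathfrak p}^q/\mathfrak pR_{\mathfrak p}^q$. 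It therefore suffices to show that this image has $K$-dimension at most $q-r$.

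\emph{Bounding the image.} $R_{\mathfrak p}$ is a regular local ring, so $L=\ker(M_{\mathfrak p})$ is a second syzygy-type module. I would argue directly by rank instead. $L$ is a submodule of $R_{\mathfrak p}^q$ of rank $q-r$, because the generic rank of $M$ is $r$. Suppose the image of $L$ in $K^q$ had dimension at least $q-r+1$. Choose elements $\ell_1,\dots,\ell_{q-r+1}\in L$ whose images are $K$-linearly independent. Then some $(q-r+1)$-minor of the $q\times(q-r+1)$ matrix $[\ell_1\cdots\ell_{q-r+1}]$ is a unit in $R_{\mathfrak p}$. These columns are then linearly independent over the fraction field, contradicting $\operatorname{rank}L=q-r$. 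Thus $(\rho_{n,m}(\mathcal K_R))_{\mathfrak n}\subsetneq(\mathcal K_{n,m})_{\mathfrak n}$. Since equality of modules would survive localization, the inclusion before localization is strict, and any vector in the difference yields the claimed extra stabilizer when the corresponding Pauli operator is nonzero. The left-kernel statements follow by applying the argument to $M^{\mathsf T}$, since $I_r(M^{\mathsf T})=I_r(M)$.

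The main obstacle I anticipate is this last bound, since the Nakayama argument of the $2\times2$ case only excluded the full $K^q$. The replacement, a unit minor implying full column rank over the fraction field, should be the key new step. One must also check carefully that localization commutes with forming the submodule generated by reductions, which the $2\times2$ proof already handled.
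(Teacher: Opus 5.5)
Your proof is correct. Part 1 and the setup of Part 2 coincide with the paper's: the choice of $\mathfrak p\supseteq I_r(M)$, the odd orders $n,m$, the identification $A_{\mathfrak n}\cong K$, and the identification of $(\rho_{n,m}(\mathcal K_R))_{\mathfrak n}$ with the image of $L=(\mathcal K_R)_{\mathfrak p}$ in $K^q$. The two arguments differ only in the final comparison step.

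The paper argues by contradiction. It assumes the two localized modules are equal and sets $s=\operatorname{rank}_K M(\alpha,\beta)$. It uses an $s\times s$ minor that is a unit in $R_{\mathfrak p}$ to bring $M_{\mathfrak p}$ to the block form $\operatorname{diag}(I_s,N)$, with $N\equiv 0 \bmod \mathfrak p$. It then applies Nakayama's lemma to $R_{\mathfrak p}^{q-s}/\ker N$ to force $N=0$, contradicting $\operatorname{rank}N=r-s>0$.

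You replace this with a direct dimension count. On one side, the image of $L$ in $K^q$ has dimension at most $\operatorname{rank}L=q-r$: lifts with $K$-independent reductions have a maximal minor that is a unit in $R_{\mathfrak p}$, so they are independent over $\operatorname{Frac}(R)$. On the other side, $\dim_K\ker_K(\overline M)=q-s\ge q-r+1$.

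What each route buys:
\begin{itemize}
\item Your route avoids the local Schur-complement reduction and Nakayama's lemma entirely. It is also quantitative: the sector $\mathfrak n$ carries at least $r-s\geq 1$ extra kernel dimensions, whereas the paper only asserts strict inclusion.
\item The paper's route is structurally uniform with its $2\times2$ proof, which is exactly the special case $s=0$, $N=M_{\mathfrak p}$ of the general argument.
\end{itemize}

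Your aside that $L$ is a second-syzygy-type module plays no role in the argument and can be dropped.
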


\begin{proof}
We first prove the case
\(
I_r(M)=R.
\)
By Lemma~\ref{lem:unimodular_rank_reduction}, there exist invertible matrices
\begin{equation}
P\in GL_p(R),
\qquad
Q\in GL_q(R),
\end{equation}
such that
\begin{equation}
PMQ=
D_r:=
\begin{pmatrix}
I_r&0\\
0&0
\end{pmatrix}.
\end{equation}
Now fix any torus quotient
\begin{equation}
A:=R_{n,m}.
\end{equation}
Reducing \(P\) and \(Q\) modulo \(x^n-1\) and \(y^m-1\), we obtain
invertible matrices
\begin{equation}
P_A\in GL_p(A),~
Q_A\in GL_q(A)
~~\Rightarrow~~
P_A M^{(n,m)} Q_A
=
D_r.
\end{equation}
% Therefore
% \begin{equation}
% P_A M^{(n,m)} Q_A
% =
% D_r.
% \end{equation}
The right kernel of \(D_r:A^q\to A^p\) is
\(
0\oplus A^{q-r}.
\)
Hence,
\begin{equation}
\mathcal K_{n,m}
=
Q_A(0\oplus A^{q-r}).
\end{equation}
On the infinite plane, the same calculation gives
\begin{equation}
\mathcal K_R
=
Q(0\oplus R^{q-r}).
\end{equation}
After reducing to \(A\) and allowing arbitrary \(A\)-linear combinations, we
get exactly
\begin{equation}
Q_A(0\oplus A^{q-r}).
\end{equation}
Thus
\begin{equation}
\rho_{n,m}(\mathcal K_R)
=
\mathcal K_{n,m}.
\end{equation}
Since \(n,m\) were arbitrary, imposing periodic boundary conditions creates no
additional right-kernel vectors. Applying the same argument to
\(M^{\mathsf T}\) gives the corresponding statement for left kernels.

We now prove the converse. Assume
\(
I_r(M)\subsetneq R.
\)
Since \(I_0(M)=R\), this case cannot occur when \(r=0\). Hence \(r>0\).
Choose a maximal ideal
\begin{equation}
\mathfrak p\supset I_r(M).
\end{equation}
Set
\begin{equation}
K:=R/\mathfrak p.
\end{equation}
Since \(R\) is a finitely generated algebra over \(\mathbb Z_2\), the field
\(K\) is a finite extension of \(\mathbb Z_2\). Thus, for some \(e\geq1\),
\begin{equation}
K\cong \mathbb F_{2^e}.
\end{equation}

Let $\alpha=x\bmod\mathfrak p$ and $\beta=y\bmod\mathfrak p$.
Because \(x\) and \(y\) are units in \(R\), their images \(\alpha,\beta\) are
nonzero elements of \(K\). Hence,
\(
\alpha,\beta\in K^\times.
\)
The multiplicative group \(K^\times\) has order
\begin{equation}
|K^\times|=2^e-1,
\end{equation}
which is odd. Therefore the multiplicative orders of \(\alpha\) and \(\beta\)
are odd. Choose
\begin{equation}
n=\operatorname{ord}(\alpha),
\qquad
m=\operatorname{ord}(\beta).
\end{equation}
Then
\begin{equation}
\alpha^n=1,
\qquad
\beta^m=1,
\end{equation}
and \(n,m\) are odd. Equivalently,
\(
x^n-1,\ y^m-1\in\mathfrak p,
\)
so
\begin{equation}
J:=\langle x^n-1,y^m-1\rangle  \subseteq\mathfrak p.
\end{equation}
% Then
% \(
% J\subseteq\mathfrak p.
% \)
Therefore, the quotient map
\begin{equation}
R\to K=R/\mathfrak p
\end{equation}
factors through
\(
A=R_{n,m}=R/J.
\)
Let
\begin{equation}
\mathfrak n:=\mathfrak p/J
=
\{r'+J\in A:r'\in\mathfrak p\}.
\end{equation}
Then \(\mathfrak n\) is the kernel of the induced map \(A\to K\), and hence
\begin{equation}
A/\mathfrak n\cong K.
\end{equation}
In particular, \(\mathfrak n\) is a maximal ideal of \(A\).
Following the same argument in the previous section, we have 
\begin{equation}
A_{\mathfrak n}\cong A/\mathfrak n\cong K.
\end{equation}
Now consider the finite-torus kernel
\begin{equation}
\mathcal K_A
:=
\ker_A(M^{(n,m)}:A^q\to A^p).
\end{equation}
Localizing at \(\mathfrak n\), and using the fact that localization preserves
kernels, gives
\begin{equation}
(\mathcal K_A)_{\mathfrak n}
=
\ker_{A_{\mathfrak n}}
\left(
M^{(n,m)}_{\mathfrak n}:A_{\mathfrak n}^q\to A_{\mathfrak n}^p
\right).
\end{equation}
Under the identification
\(
A_{\mathfrak n}\cong K,
\)
this becomes
\begin{equation}
(\mathcal K_A)_{\mathfrak n}
=
\ker_K
\left(
M(\alpha,\beta):K^q\to K^p
\right).
\end{equation}

Since every \(r\times r\) minor of \(M\) lies in
\(
I_r(M)\subseteq\mathfrak p,
\)
all \(r\times r\) minors vanish after specialization to \(K\). Therefore
\begin{equation}
s:=\operatorname{rank}_K M(\alpha,\beta) < r.
\end{equation}
% satisfies
% \begin{equation}
% s<r.
% \end{equation}
Hence,
\begin{equation}
\dim_K(\mathcal K_A)_{\mathfrak n}
=
q-s.
\end{equation}

We now compare this finite-torus kernel with the part coming from
infinite-plane local kernels. For readability, write
\begin{equation}
\overline{\mathcal K}_A
:=
\rho_{n,m}(\mathcal K_R)
\subseteq
\mathcal K_A.
\end{equation}
Let
\begin{equation}
L:=(\mathcal K_R)_{\mathfrak p}
\subseteq R_{\mathfrak p}^q
\end{equation}
be the localization of the infinite-plane kernel at \(\mathfrak p\). Explicitly,
\begin{equation}
L=
\left\{
\frac{u}{s}:
u\in\mathcal K_R,\ s\in R\setminus\mathfrak p
\right\}.
\end{equation}
Reducing modulo the maximal ideal \(\mathfrak pR_{\mathfrak p}\) gives
\begin{equation}
R_{\mathfrak p}^q
\longrightarrow
R_{\mathfrak p}^q/\mathfrak pR_{\mathfrak p}^q
\cong
K^q.
\end{equation}
Under the identification \(A_{\mathfrak n}\cong K\), the localized module
\(
(\overline{\mathcal K}_A)_{\mathfrak n}
\)
is exactly the \(K\)-linear span of the reductions modulo \(\mathfrak p\) of
elements of \(\mathcal K_R\). Indeed, an element of
\((\overline{\mathcal K}_A)_{\mathfrak n}\) is a finite sum of terms
\begin{equation}
\frac{t}{s}\,\rho_{n,m}(u),
\qquad
t\in A,\quad s\in A\setminus\mathfrak n,\quad u\in\mathcal K_R.
\end{equation}
Under \(A_{\mathfrak n}\cong K\), the coefficient \(t/s\) becomes an element
of \(K\), and the vector \(\rho_{n,m}(u)\) becomes \(u\bmod\mathfrak p\).
Thus
\begin{equation}
(\overline{\mathcal K}_A)_{\mathfrak n}
=
\operatorname{span}_K
\{u\bmod\mathfrak p:u\in\mathcal K_R\}.
\end{equation}
On the other hand, an element
\begin{equation}
\frac{u}{s}\in L,
\qquad
u\in\mathcal K_R,\quad s\in R\setminus\mathfrak p,
\end{equation}
maps to
\begin{equation}
\frac{u\bmod\mathfrak p}{s\bmod\mathfrak p}\in K^q.
\end{equation}
Since \(s\bmod\mathfrak p\neq0\) in \(K\), this is a \(K\)-scalar multiple of
\(u\bmod\mathfrak p\). Therefore
\(
(\overline{\mathcal K}_A)_{\mathfrak n}
\)
is precisely the image of \(L\) in
\begin{equation}
R_{\mathfrak p}^q/\mathfrak pR_{\mathfrak p}^q
\cong
K^q.
\end{equation}

We claim that
\begin{equation}
(\overline{\mathcal K}_A)_{\mathfrak n}
\subsetneq
(\mathcal K_A)_{\mathfrak n}.
\end{equation}
Suppose, for contradiction, that equality holds.
We now work over the local ring \(R_{\mathfrak p}\). Since
\begin{equation}
s=\operatorname{rank}_K M(\alpha,\beta),
\end{equation}
there is an \(s\times s\) minor whose image in \(K\) is nonzero, unless
\(s=0\). If \(s>0\), after permuting rows and columns and applying invertible
row and column operations over \(R_{\mathfrak p}\), we may write
\begin{equation}
M_{\mathfrak p}
\sim
\begin{pmatrix}
I_s&0\\
0&N
\end{pmatrix}.
\end{equation}
If \(s=0\), we simply take
\begin{equation}
N=M_{\mathfrak p}.
\end{equation}
In both cases, every entry of \(N\) lies in the maximal ideal
\(
\mathfrak pR_{\mathfrak p},
\)
because the specialization of \(N\) modulo \(\mathfrak p\) is zero.

Moreover, \(N\) has generic rank
\(
r-s>0.
\)
Indeed, invertible row and column operations do not change generic rank, and
the displayed block form has generic rank
\(
s+\operatorname{rank}(N).
\)
Since \(M\) has generic rank \(r\), we get
\begin{equation}
\operatorname{rank}(N)=r-s.
\end{equation}

The above row and column operations identify kernels and their reductions.
After these operations, the localized kernel of \(M\) is
\begin{equation}
0\oplus \ker_{R_{\mathfrak p}}(N),
\end{equation}
while the specialized \(K\)-kernel is
\begin{equation}
0\oplus K^{q-s},
\end{equation}
because \(N\) vanishes modulo \(\mathfrak p\).

Our assumption
\begin{equation}
(\overline{\mathcal K}_A)_{\mathfrak n}
=
(\mathcal K_A)_{\mathfrak n}
\end{equation}
therefore implies that the image of
\(
\ker_{R_{\mathfrak p}}(N)
\)
modulo \(\mathfrak p\) is all of
\(
K^{q-s}.
\)
Equivalently,
\begin{equation}
\ker_{R_{\mathfrak p}}(N)
+
\mathfrak pR_{\mathfrak p}^{q-s}
=
R_{\mathfrak p}^{q-s}.
\end{equation}

Define
\begin{equation}
C:=
R_{\mathfrak p}^{q-s}/\ker_{R_{\mathfrak p}}(N).
\end{equation}
The previous equality says
\(
C=\mathfrak pR_{\mathfrak p}C.
\)
Since \(C\) is a finitely generated module over the local ring
\(R_{\mathfrak p}\), Nakayama's lemma implies
\(
C=0.
\)
Hence
\begin{equation}
\ker_{R_{\mathfrak p}}(N)
=
R_{\mathfrak p}^{q-s}.
\end{equation}
Thus \(N\) annihilates every vector in its domain, and therefore
\(
N=0
\)
over \(R_{\mathfrak p}\). This contradicts the fact that \(N\) has generic
rank
\(
r-s>0.
\)

Therefore
\(
(\overline{\mathcal K}_A)_{\mathfrak n}
\subsetneq
(\mathcal K_A)_{\mathfrak n}.
\)
If we had
\(
\overline{\mathcal K}_A=\mathcal K_A
\)
before localization, then their localizations at \(\mathfrak n\) would also be
equal. Since their localizations are not equal, we must have
\begin{equation}
\overline{\mathcal K}_A
\subsetneq
\mathcal K_A.
\end{equation}
Equivalently,
\begin{equation}
\rho_{n,m}(\mathcal K_R)
\subsetneq
\mathcal K_{n,m}.
\end{equation}
Choose
\begin{equation}
w\in
\mathcal K_{n,m}\setminus \rho_{n,m}(\mathcal K_R).
\end{equation}
Then \(w\) is a finite-torus right-kernel vector that is not obtained from any
torus-periodic combination of reduced infinite-plane local kernel vectors. In
the subsystem-code interpretation, such a right-kernel vector gives a product
of \(Z\)-type gauge generators that commutes with all \(X\)-type gauge
generators on the finite torus. Hence it gives a \(Z\)-type stabilizer. Since
\begin{equation}
w\notin \rho_{n,m}(\mathcal K_R),
\end{equation}
this stabilizer is not generated by the reductions of infinite-plane local
stabilizers. Therefore, if the corresponding Pauli operator is nonzero, it is
an additional nonlocal \(Z\)-type stabilizer.

Finally, applying the same argument to \(M^{\mathsf T}\) proves the
corresponding statement for left kernels and \(X\)-type stabilizers.
\end{proof}

%%%%%%%%%%%%%%%%%%%%%%%%%%%%%%%%%%%%%%%%%%%%%%%%%%%%
\subsection{Proof of Theorem~\ref{thm: exist Clifford gate}}\label{app: proof of Theorem exist Clifford gate}

\begin{proof}
We work in the polynomial Pauli module
\begin{equation}
    R^{6}=R_X^3\oplus R_Z^3 .
\end{equation}
Let $\bar{\cdot}$ denote the involution $x\mapsto x^{-1}$,
$y\mapsto y^{-1}$, extended coefficientwise to $R$, and write
\begin{equation}
    v^\dagger:=\bar v^{\mathsf T}
\end{equation}
for vectors or matrices over $R$. We denote by $\mathcal U$ the symplectic
matrix representing the Clifford circuit $U$; thus conjugation by $U$ acts on
Pauli vectors by left multiplication by $\mathcal U$.

Since $M_c$ has the canonical form in Eq.~\eqref{eq: canonical Mc}, we have
\begin{equation}
    G_{X,1}\cdot G_{Z,1}=1,
\end{equation}
while $G_{X,1}$ and $G_{Z,1}$ commute with all remaining gauge generators.
Write
\begin{equation}
    G_{X,1}=
    \begin{bmatrix}
        x\\
        0
    \end{bmatrix},
    \qquad
    x=
    \begin{bmatrix}
        f\\
        g\\
        h
    \end{bmatrix},
    \qquad
    G_{Z,1}=
    \begin{bmatrix}
        0\\
        z
    \end{bmatrix},
\end{equation}
with $z\in R^3$. The condition $G_{X,1}\cdot G_{Z,1}=1$ gives
\begin{equation}
    x^\dagger z=1 .
\end{equation}
Hence
\begin{equation}
    1\in \langle \bar f,\bar g,\bar h\rangle .
\end{equation}
Applying the antipode map gives
\begin{equation}
    \langle f,g,h\rangle=R .
\end{equation}
Thus $x=(f,g,h)^{\mathsf T}$ is a primitive column vector, i.e., its entries
generate the unit ideal. We use the standard elementary-completion property of
$R=\mathbb Z_2[x^{\pm1},y^{\pm1}]$: any primitive column of $R^3$ can be
mapped to a standard basis vector by an invertible matrix generated by elementary
local row operations, coordinate permutations, and monomial translations.
Therefore we may choose $E\in GL_3(R)$ such that
\begin{equation}
    Ex=e_1,
    \qquad
    e_1=
    \begin{bmatrix}
        1\\0\\0
    \end{bmatrix}.
\end{equation}
Consider the symplectic transformation
\begin{equation}
    \mathcal S_1=
    \begin{bmatrix}
        E & 0\\
        0 & (E^{-1})^\dagger
    \end{bmatrix}.
\end{equation}
By the choice of $E$, this symplectic transformation is implemented by a
finite-depth Clifford circuit: elementary row operations correspond to CNOT
layers, coordinate permutations to onsite Clifford relabelings, and monomial
factors only translate qubits within the lattice labeling.
It preserves the symplectic pairing, and
\begin{equation}
    \mathcal S_1G_{X,1}
    =
    \begin{bmatrix}
        e_1\\
        0
    \end{bmatrix}.
\end{equation}
Since the pairing with $G_{Z,1}$ is preserved, the image of $G_{Z,1}$ has unit
pairing with this vector. Thus
\begin{equation}
    \mathcal S_1G_{Z,1}
    =
    \begin{bmatrix}
        0\\
        w
    \end{bmatrix},
    \qquad
    w=
    \begin{bmatrix}
        1\\
        z_2\\
        z_3
    \end{bmatrix}
\end{equation}
for some $z_2,z_3\in R$.

We next remove the remaining $Z$-support without changing
$\mathcal S_1G_{X,1}$. Define
\begin{equation}
    \alpha:=\bar z_2,
    \quad
    \beta:=\bar z_3,
    \quad
    A=
    \begin{bmatrix}
        1 & \alpha & \beta\\
        0 & 1 & 0\\
        0 & 0 & 1
    \end{bmatrix}.
\end{equation}
Since $R$ has characteristic $2$, $A^{-1}=A$.
Since $Ae_1=e_1$, the symplectic transformation
\begin{equation}
    \mathcal S_2=
    \begin{bmatrix}
        A & 0\\
        0 & (A^{-1})^\dagger
    \end{bmatrix}
\end{equation}
leaves $\mathcal S_1G_{X,1}$ unchanged. This is again implemented by a
finite-depth Clifford circuit, since $A$ is an elementary shear. Moreover,
\begin{equation}
    (A^{-1})^\dagger
    \begin{bmatrix}
        1\\
        z_2\\
        z_3
    \end{bmatrix}
    =
    \begin{bmatrix}
        1\\
        \bar\alpha+z_2\\
        \bar\beta+z_3
    \end{bmatrix}
    =
    \begin{bmatrix}
        1\\
        0\\
        0
    \end{bmatrix},
\end{equation}
where we used $\bar\alpha=z_2$ and $\bar\beta=z_3$. Therefore, with
\begin{equation}
    \mathcal U:=\mathcal S_2\mathcal S_1,
\end{equation}
$\mathcal U$ is the symplectic representation of a finite-depth Clifford circuit, and we obtain
\begin{equation}
    \mathcal U G_{X,1}
    =
    \left[\begin{array}{c}
        1 \\
        0 \\
        0 \\
        0 \\
        0 \\
        0
    \end{array}\right]~,
    \quad
    \mathcal U G_{Z,1}
    =
    \left[\begin{array}{c}
        0 \\
        0 \\
        0 \\
        1 \\
        0 \\
        0
    \end{array}\right].
\label{eq: U GX1 and U GZ1}
\end{equation}
Equivalently, the corresponding Clifford circuit maps, in each unit cell,
\begin{equation}
    G_{X,1}(\mathbf r)\mapsto X_{q(\mathbf r)},
    \qquad
    G_{Z,1}(\mathbf r)\mapsto Z_{q(\mathbf r)}
\end{equation}
on the same qubit $q(\mathbf r)$.

It remains to check the support of the other gauge generators. Let $H$ be any
remaining gauge generator, including any lattice translate. In the canonical
basis, $H$ commutes with both $G_{X,1}$ and $G_{Z,1}$. Since $\mathcal U$ is
symplectic, $\mathcal U H$ commutes with $\mathcal U G_{X,1}$ and $\mathcal U G_{Z,1}$ in Eq.~\eqref{eq: U GX1 and U GZ1}.
Writing
\begin{equation}
    \mathcal U H=
    \left[\begin{array}{c}
        a_1\\a_2\\a_3\\ b_1\\b_2\\b_3
    \end{array}\right]~,
\end{equation}
these two commutation relations imply
\begin{equation}
    a_1=b_1=0.
\end{equation}
Thus $\mathcal U H$ has no support on the disentangled qubit or any of its translates.

Finally, in the canonical basis the remaining gauge generators lie in the local
left and right kernels of $M_c$, and hence are stabilizers. Since
Eq.~\eqref{eq: canonical Mc} has entry ideal $R$, Theorem~\ref{thm: nonlocal stabilizer}
rules out additional nonlocal stabilizers on finite tori. Therefore these local
generators generate the stabilizer group. After deleting the disentangled qubit
coordinates, their images are precisely the stabilizer generators of the
corresponding bivariate bicycle code.
\end{proof}

%%%%%%%%%%%%%%%%%%%%%%%%%%%%%%%%%%%%%%%%%%%%%%%%%%%%%%%%%%%%%%
\section{Reflection-symmetric SBB codes}
\label{app:reflection_symmetric_search}

In this appendix, we collect the algebraic facts used in the search for reflection-symmetric SBB codes. Throughout, we define $R=\mathbb Z_2[x^{\pm1},y^{\pm1}]$
% \begin{equation}
%     R=\mathbb Z_2[x^{\pm1},y^{\pm1}],
% \end{equation}
and all arithmetic is over $\mathbb Z_2$. We write
\begin{equation}
    \bar p(x,y):=p(x^{-1},y^{-1}),
    \qquad
    p^\sigma(x,y):=p(y,x),
\end{equation}
and define
\begin{equation}
    p^\rho(x,y)=p(y^{-1},x^{-1}).
\end{equation}
For an $X$-type gauge generator
\begin{equation}
    G_X(f,g,h)=(f,g,h,0,0,0)^{\mathsf T},
\end{equation}
we impose reflection symmetry about the diagonal $y=x$ by choosing the
corresponding $Z$-type gauge generator to be
\begin{equation}
    G_Z(f,g,h)
    =
    (0,0,0,f^\sigma,h^\sigma,g^\sigma)^{\mathsf T}.
\end{equation}
Thus a reflection-symmetric SBB candidate is specified by two triples
$(f_1,g_1,h_1)$ and $(f_2,g_2,h_2)$:
\begin{equation}
    G_{X,i}=G_X(f_i,g_i,h_i),
    \qquad
    G_{Z,i}=G_Z(f_i,g_i,h_i),
    \qquad
    i=1,2 .
\end{equation}
More precisely, the gauge operators are:
\begin{eqs}
    G_{X,1} = 
    \begin{bmatrix}
        f_1(x,y) \\
        \rule{0pt}{1.1em}g_1(x,y) \\
        \rule{0pt}{1.1em}h_1(x,y) \\
        \hline
        0 \\
        0 \\
        0
    \end{bmatrix}, 
    \quad
    G_{X,2} = 
    \begin{bmatrix}
        f_2(x,y) \\
        \rule{0pt}{1.1em}g_2(x,y) \\
        \rule{0pt}{1.1em}h_2(x,y) \\
        \hline
        0 \\
        0 \\
        0
    \end{bmatrix}, 
    G_{Z,1} = 
    \begin{bmatrix}
        0 \\
        0 \\
        0 \\
        \hline
        \rule{0pt}{1.1em} f_1^\sigma(x,y) \\
        \rule{0pt}{1.1em} h_1^\sigma(x,y) \\
        g_1^\sigma(x,y)
    \end{bmatrix},
    \quad
    G_{Z,2} = 
    \begin{bmatrix}
        0 \\
        0 \\
        0 \\
        \hline
        \rule{0pt}{1.1em} f_2^\sigma(x,y) \\
        \rule{0pt}{1.1em} h_2^\sigma(x,y) \\
        g_2^\sigma(x,y)
    \end{bmatrix},
    \label{eq: gauge operators}
\end{eqs}
The weight-4 search imposes
\begin{equation}
    \operatorname{wt}(f_i)+\operatorname{wt}(g_i)+\operatorname{wt}(h_i)=4,
    \qquad
    i=1,2 .
\end{equation}
For two triples $w=(f,g,h)$ and $w'=(f',g',h')$, define the reflected pairing
\begin{equation}
    \mu(w,w')
    :=
    \bar f\,{f'}^\sigma
    +
    \bar g\,{h'}^\sigma
    +
    \bar h\,{g'}^\sigma .
\end{equation}
Equivalently,
\begin{equation}
    G_X(f,g,h)\cdot G_Z(f',g',h')=\mu((f,g,h),(f',g',h')).
\end{equation}
The commutation matrix of the four gauge generators therefore has the form
\begin{equation}
    M_c
    =
    \begin{pmatrix}
        a & b\\
        b^\rho & d
    \end{pmatrix},
    \label{eq:reflection_symmetric_Mc}
\end{equation}
where
\begin{equation}
    a=\mu(w_1,w_1),
    \qquad
    b=\mu(w_1,w_2),
    \qquad
    d=\mu(w_2,w_2).
\end{equation}
Here $w_i=(f_i,g_i,h_i)$. The diagonal entries satisfy
\begin{equation}
    a^\rho=a,
    \qquad
    d^\rho=d,
\end{equation}
and the off-diagonal entries are related by reflection:
\begin{equation}
    \mu(w_2,w_1)=\rho\!\left(\mu(w_1,w_2)\right)=b^\rho.
\end{equation}
Suppose that $\det M_c=0$.
Then
\begin{equation}
    (b^\rho,a)M_c=0,
    \qquad
    M_c
    \begin{pmatrix}
        b\\ a
    \end{pmatrix}
    =0 .
\end{equation}
By the kernel--stabilizer correspondence, the corresponding local stabilizers are
\begin{equation}
    S_X=b^\sigma G_{X,1}+\bar a\,G_{X,2},
    \qquad
    S_Z=bG_{Z,1}+aG_{Z,2}.
    \label{eq:reflection_symmetric_stabilizers}
\end{equation}

\begin{lemma}[Reflection symmetry of local stabilizers]
\label{lem:reflection_symmetric_stabilizers}
The local stabilizers in Eq.~\eqref{eq:reflection_symmetric_stabilizers} inherit
the same reflection symmetry as the gauge generators. More explicitly, if
\begin{equation}
    S_X=G_X(F,G,H),
\end{equation}
then
\begin{equation}
    S_Z=G_Z(F,G,H).
\end{equation}
\end{lemma}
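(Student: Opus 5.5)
The plan is a direct computation. It rests on two facts: the reflection map $G_X(f,g,h)\mapsto G_Z(f,g,h)$ is $\sigma$-semilinear over $R$, and the diagonal entry $a$ of $M_c$ is $\rho$-invariant.

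First I record the elementary properties of the involutions. Both $\sigma$ and the antipode $p\mapsto\bar p$ are ring automorphisms of $R$ of order two, and they commute. Moreover $\rho=\sigma\circ\bar{\ }$, since $\bar p^\sigma(x,y)=p(y^{-1},x^{-1})=p^\rho(x,y)$. Consequently, for any $p\in R$,
\begin{equation}
    (\bar p)^\sigma = p^\rho .
\end{equation}
From the definition $G_Z(f,g,h)=(0,0,0,f^\sigma,h^\sigma,g^\sigma)^{\mathsf T}$ and the multiplicativity of $\sigma$, we get semilinearity:
\begin{equation}
    G_Z(pf+qf',\,pg+qg',\,ph+qh') = p^\sigma G_Z(f,g,h)+q^\sigma G_Z(f',g',h')
\end{equation}
for all $p,q\in R$.

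Next I write $S_X=b^\sigma G_{X,1}+\bar a\,G_{X,2}=G_X(F,G,H)$, where
\begin{equation}
    F=b^\sigma f_1+\bar a f_2,\qquad G=b^\sigma g_1+\bar a g_2,\qquad H=b^\sigma h_1+\bar a h_2 .
\end{equation}
Applying semilinearity gives
\begin{equation}
    G_Z(F,G,H)=(b^\sigma)^\sigma G_{Z,1}+(\bar a)^\sigma G_{Z,2}=b\,G_{Z,1}+a^\rho\,G_{Z,2}.
\end{equation}
It therefore suffices to show $a^\rho=a$; then the right-hand side equals $S_Z=bG_{Z,1}+aG_{Z,2}$, as in Eq.~\eqref{eq:reflection_symmetric_stabilizers}.

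The only point needing care is this identity $a^\rho=a$, which the text asserts. I would verify it directly from $a=\mu(w_1,w_1)=\bar f_1f_1^\sigma+\bar g_1h_1^\sigma+\bar h_1g_1^\sigma$. Since $\rho$ is a ring automorphism with $\rho(\bar p)=p^\sigma$ and $\rho(p^\sigma)=\bar p$, each term transforms as follows:
\begin{equation}
    \rho(\bar f_1 f_1^\sigma)=f_1^\sigma\bar f_1,\qquad \rho(\bar g_1 h_1^\sigma)=g_1^\sigma\bar h_1,\qquad \rho(\bar h_1 g_1^\sigma)=h_1^\sigma\bar g_1 .
\end{equation}
So $\rho$ fixes the first term and swaps the two mixed terms, and hence $a^\rho=a$. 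No other input is required; the same computation also gives the relation $\mu(w_2,w_1)=b^\rho$ quoted before the lemma.
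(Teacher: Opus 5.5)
Your proposal is correct and follows the paper's proof: both write $F,G,H$ as $b^\sigma(\cdot)_1+\bar a(\cdot)_2$, apply $\sigma$ using $(b^\sigma)^\sigma=b$ and $(\bar a)^\sigma=a^\rho=a$, and read off $G_Z(F,G,H)=bG_{Z,1}+aG_{Z,2}=S_Z$. Your direct check that $\rho$ fixes $\bar f_1f_1^\sigma$ and swaps the two mixed terms of $\mu(w_1,w_1)$ supplies the identity $a^\rho=a$, which the paper only asserts before the lemma.
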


\begin{proof}
From Eq.~\eqref{eq:reflection_symmetric_stabilizers},
\begin{equation}
    F=b^\sigma f_1+\bar a f_2,
    \qquad
    G=b^\sigma g_1+\bar a g_2,
    \qquad
    H=b^\sigma h_1+\bar a h_2 .
\end{equation}
Since $a^\rho=a$, we have $(\bar a)^\sigma=a$. Therefore
\begin{equation}
    F^\sigma=b f_1^\sigma+a f_2^\sigma,
    \qquad
    G^\sigma=b g_1^\sigma+a g_2^\sigma,
    \qquad
    H^\sigma=b h_1^\sigma+a h_2^\sigma .
\end{equation}
Hence
\begin{equation}
    G_Z(F,G,H)
    =
    (0,0,0,F^\sigma,H^\sigma,G^\sigma)
    =
    bG_{Z,1}+aG_{Z,2}
    =
    S_Z .
\end{equation}
\end{proof}

We now prove the normal form used to enumerate the first gauge generator in the
search.

\begin{proposition}[weight-4 normal forms for a monomial pivot]
\label{proposition: 1}
\label{prop:weight_four_monomial_pivot}
Let $f,g,h\in R$ satisfy
\begin{equation}
    \operatorname{wt}(f)+\operatorname{wt}(g)+\operatorname{wt}(h)=4,
\end{equation}
and suppose that the reflected self-pairing
\begin{equation}
    \mu(f,g,h)
    :=
    \bar f\,f^\sigma
    +
    \bar g\,h^\sigma
    +
    \bar h\,g^\sigma
\end{equation}
is a monomial.

Then, $f$ is a monomial $x^\alpha y^\beta$ for some $\alpha,\beta\in\mathbb Z$ and $\bar g\,h^\sigma+\bar h\,g^\sigma=0$.
Moreover, after possibly interchanging $g$ and $h$, exactly one of the following
two cases occurs:
\begin{enumerate}
    \item $g=0$ and $\operatorname{wt}(h)=3$;
    \item $g$ is a monomial and $h=ug$
    % \begin{equation}
    %     h=ug
    % \end{equation}
    for some $u\in R^\rho$ with $\operatorname{wt}(u)=2$, where
    \begin{equation}
        R^\rho:=\{p\in R:p^\rho=p\}.
    \end{equation}
\end{enumerate}
Conversely, every triple of the above form satisfies
\begin{equation}
    \mu(f,g,h)
    =
    \bar f\,f^\sigma
    =
    x^{\beta-\alpha}y^{\alpha-\beta},
\end{equation}
and hence has monomial reflected self-pairing.
\end{proposition}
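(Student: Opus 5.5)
The plan is to split $\mu$ into a ``diagonal'' part and a ``mixed'' part, control the mixed part by a $\rho$-symmetry argument, and then use that $R$ is a domain whose units are exactly the monomials. Write
\begin{equation*}
\mu(f,g,h)=D+T,\qquad D:=\bar f f^\sigma,\qquad T:=P+P^\rho,\qquad P:=\bar g\,h^\sigma .
\end{equation*}
First I would verify directly from the definitions that $(\bar g h^\sigma)^\rho=g^\sigma\bar h$, so the mixed term in $\mu$ really is $P+P^\rho$. I would also check that $D^\rho=D$, and hence $\mu^\rho=\mu$.

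The first key observation is that $T$ contains no $\rho$-fixed monomial. For a fixed monomial $m=m^\rho$, the coefficient of $m$ in $P$ equals its coefficient in $P^\rho$, so these coefficients cancel over $\mathbb Z_2$. The second observation is a parity count obtained by evaluating at $x=y=1$:
\begin{equation*}
\mu(1,1)=f(1,1)^2+2g(1,1)h(1,1)=\operatorname{wt}(f)\bmod 2 .
\end{equation*}
Since $\mu$ is a monomial, $\mu(1,1)=1$, and so $\operatorname{wt}(f)$ is odd, i.e.\ $\operatorname{wt}(f)\in\{1,3\}$.

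Next I would rule out $\operatorname{wt}(f)=3$. In that case $\operatorname{wt}(g)+\operatorname{wt}(h)=1$, so one of $g,h$ vanishes and $T=0$. Then $\bar f f^\sigma=\mu$ is a unit, so $\bar f$ is a unit in the domain $R$. Hence $f$ is a monomial, contradicting $\operatorname{wt}(f)=3$. So $f=x^\alpha y^\beta$, and
\begin{equation*}
D=x^{\beta-\alpha}y^{\alpha-\beta},
\end{equation*}
which is a $\rho$-fixed monomial. Then $T=\mu+D$ is a sum of at most two monomials, each $\rho$-fixed: $D$ by direct check and $\mu$ because $\mu^\rho=\mu$ and $\mu$ is a single monomial. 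Since $T$ has no $\rho$-fixed monomials, $T=0$. This step is the main obstacle. A naive weight or degree count does not force $T=0$, and the symmetry observation about fixed monomials is what makes it work.

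Finally I would classify the solutions of $T=0$ with $\operatorname{wt}(g)+\operatorname{wt}(h)=3$. Up to interchanging $g$ and $h$, the weight split is $(0,3)$ or $(1,2)$. The split $(0,3)$ is case 1, and there $T=0$ holds automatically. In the split $(1,2)$, $g$ is a monomial, so we can write $h=ug$ with $\operatorname{wt}(u)=2$. Then
\begin{equation*}
P=(\bar g g^\sigma)\,u^\sigma ,
\end{equation*}
where $\bar g g^\sigma$ is a $\rho$-fixed monomial. Thus $P^\rho=P$ is equivalent to $(u^\sigma)^\rho=u^\sigma$, i.e.\ $\bar u=u^\sigma$. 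Applying $\sigma$ to both sides, this becomes $u^\rho=u$, which gives case 2. For the converse, both forms give $T=0$ (case 1 trivially, case 2 by the same computation run backwards), so $\mu=D=x^{\beta-\alpha}y^{\alpha-\beta}$.
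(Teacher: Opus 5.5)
Your proof is correct, and it follows the same architecture as the paper's proof. You split $\mu=\bar f f^\sigma+T$ with $T=\bar g h^\sigma+(\bar g h^\sigma)^\rho$ having no $\rho$-fixed monomial. You exclude $\operatorname{wt}(f)=3$ because there $T=0$ forces $f$ to be a unit. Once $f$ is a monomial you deduce $T=0$, and you reduce case (ii) to $u^\sigma=\bar u\Leftrightarrow u^\rho=u$.

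The one genuinely different step is how you exclude even $\operatorname{wt}(f)$. The paper expands $\bar f f^\sigma$ into the $r$ diagonal terms $x^{b_i-a_i}y^{a_i-b_i}$, all $\rho$-fixed, plus $\rho$-paired cross terms. It argues that the $\rho$-fixed part of $\bar f f^\sigma$ comes only from the diagonal terms and must be a single monomial. It rules out $r=0,2$ by counting those terms, and it handles $r=4$, like $r=3$, via $T=0$ and the unit argument. You instead apply the evaluation homomorphism $x,y\mapsto 1$ to get $\operatorname{wt}(f)\equiv\mu(1,1)=1 \pmod 2$. This disposes of $r=0,2,4$ in one line and avoids tracking which cross terms can coincide and cancel. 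Your route is shorter. The paper's expansion identifies the $\rho$-fixed part of $\bar f f^\sigma$ explicitly, but the proposition does not need that information.
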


\begin{proof}
Write
\begin{equation}
    T:=\bar g\,h^\sigma+\bar h\,g^\sigma .
\end{equation}
Since
\begin{equation}
    T=Q+Q^\rho,
    \qquad
    Q:=\bar g\,h^\sigma,
\end{equation}
the term $T$ contains no $\rho$-fixed monomial: any $\rho$-fixed monomial
appears twice and cancels in characteristic two. On the other hand,
$\bar f f^\sigma$ is $\rho$-invariant. Thus, if
$\mu(f,g,h)=\bar f f^\sigma+T$ is a monomial, then the $\rho$-fixed part of
$\bar f f^\sigma$ must itself be a single monomial.

Let
\begin{equation}
    f=\sum_{i=1}^r x^{a_i}y^{b_i},
    \qquad
    r=\operatorname{wt}(f).
\end{equation}
Then
\begin{equation}
\begin{aligned}
    \bar f f^\sigma
    &=
    \sum_{i=1}^r x^{b_i-a_i}y^{a_i-b_i}
    +
    \sum_{1\le i<j\le r}
    \left(
        x^{b_j-a_i}y^{a_j-b_i}
        +
        x^{b_i-a_j}y^{a_i-b_j}
    \right).
\end{aligned}
\end{equation}
The second sum consists of $\rho$-paired monomials; if such a pair collapses to
a fixed monomial, it cancels. Hence the fixed-monomial contribution comes from
the first sum.

Since the total weight is four, the possible values of $r$ are
$0,1,2,3,4$. If $r=0$, then the fixed part is zero, contradicting the assumption
that $\mu(f,g,h)$ is a single fixed monomial. If $r=2$, the fixed part is either
zero or a sum of two distinct monomials, again impossible.

If $r=3$, then $\operatorname{wt}(g)+\operatorname{wt}(h)=1$, so
$T=0$. If $r=4$, then $g=h=0$, so again $T=0$. In either case,
$\bar f f^\sigma$ would be a monomial. Since monomials are precisely the units
of $R$, this would imply that $f$ is a unit, contradicting $r=3$ or $r=4$.
Therefore $r=1$, and hence $f=x^\alpha y^\beta$.
% \begin{equation}
%     f=x^\alpha y^\beta .
% \end{equation}
It follows immediately that
\begin{equation}
    \bar f f^\sigma
    =
    x^{\beta-\alpha}y^{\alpha-\beta}.
\end{equation}

Since $\bar f f^\sigma$ already accounts for the unique fixed monomial in
$\mu(f,g,h)$, and since $T$ has no fixed-monomial contribution, we must have $T=0$, that is,
\begin{equation}
    \bar g\,h^\sigma+\bar h\,g^\sigma=0 .
\end{equation}
The remaining weight condition gives
\begin{equation}
    \operatorname{wt}(g)+\operatorname{wt}(h)=3.
\end{equation}
After possibly interchanging $g$ and $h$, we may assume
$\operatorname{wt}(g)\le \operatorname{wt}(h)$. Thus either
$g=0$ and $\operatorname{wt}(h)=3$, or
$\operatorname{wt}(g)=1$ and $\operatorname{wt}(h)=2$.

In the second case, $g$ is a monomial and therefore a unit. Set
\begin{equation}
    u:=h g^{-1}.
\end{equation}
Then $h=ug$ and $\operatorname{wt}(u)=2$. The condition $\bar g\,h^\sigma+\bar h\,g^\sigma=0$
% \begin{equation}
%     \bar g\,h^\sigma+\bar h\,g^\sigma=0
% \end{equation}
becomes
\begin{equation}
    \bar g\,g^\sigma(u^\sigma+\bar u)=0.
\end{equation}
Since $\bar g\,g^\sigma$ is a unit, this is equivalent to $u^\sigma=\bar u$.
% \begin{equation}
%     u^\sigma=\bar u.
% \end{equation}
Applying $\sigma$ to both sides gives
\begin{equation}
    u^\rho=u,
\end{equation}
so $u\in R^\rho$.

Conversely, if $g=0$, then
\begin{equation}
    \bar g\,h^\sigma+\bar h\,g^\sigma=0
\end{equation}
trivially. If $h=ug$ with $u\in R^\rho$, then $u^\sigma=\bar u$, and hence
\begin{equation}
\begin{aligned}
    \bar g\,h^\sigma+\bar h\,g^\sigma
    &=
    \bar g\,u^\sigma g^\sigma
    +
    \bar u\,\bar g\,g^\sigma  \\
    &=
    \bar g\,g^\sigma(u^\sigma+\bar u)
    =
    0 .
\end{aligned}
\end{equation}
Thus in both cases
\begin{equation}
    \mu(f,g,h)=\bar f f^\sigma
    =
    x^{\beta-\alpha}y^{\alpha-\beta},
\end{equation}
which is a monomial.
\end{proof}

Finally, we record the explicit Clifford reduction in the direct monomial-pivot
sector. This is the form used for the examples in the main text.

\begin{lemma}[Explicit Clifford reduction in the direct sector]
\label{lem:explicit_direct_clifford}
Assume $G_{X,1}=G_X(f_1,g_1,h_1)$ satisfies the direct monomial-pivot condition
of Proposition~\ref{prop:weight_four_monomial_pivot}: $f_1$ is a monomial
and $\bar g_1 h_1^\sigma+\bar h_1 g_1^\sigma=0$.
% \begin{equation}
%     \bar g_1 h_1^\sigma+\bar h_1 g_1^\sigma=0 .
% \end{equation}
Set
\begin{equation}
    u:=g_1 f_1^{-1},
    \qquad
    v:=h_1 f_1^{-1}.
\end{equation}
Then there is a finite-depth Clifford circuit that maps
$G_{X,1}$ and $G_{Z,1}$ to a single-qubit Pauli pair, up to the monomial
translations determined by $f_1$ and $f_1^\sigma$. After the translation
normalization $f_1=1$, the pair is mapped exactly to
\begin{equation}
    (1,0,0,0,0,0),
    \qquad
    (0,0,0,1,0,0).
\end{equation}
\end{lemma}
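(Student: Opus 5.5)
The plan is to build the circuit explicitly, in two CNOT layers, mirroring the two symplectic steps $\mathcal S_2\mathcal S_1$ in the proof of Theorem~\ref{thm: exist Clifford gate}. The difference is that, since $f_1$ is a monomial, no elementary-completion argument is needed. First we translate. Multiplying $G_{X,1}$ by $f_1^{-1}$ and $G_{Z,1}$ by $(f_1^\sigma)^{-1}$ only relabels which translate we call the generator at $\mathbf r$. This gives $G_{X,1}=(1,u,v,0,0,0)^{\mathsf T}$. With the translation normalization the $Z$ partner becomes $G_{Z,1}=(0,0,0,1,v',u')^{\mathsf T}$, where $u'=(g_1f_1^{-1})^\sigma$ after normalization, i.e.\ $u'=u^\sigma$ and $v'=v^\sigma$ when $f_1=1$. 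The pivot condition says $\bar u\,v^\sigma+\bar v\,u^\sigma=0$, so $G_{X,1}\cdot G_{Z,1}=1$.

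\textbf{Layer $U_1$.} Take $E=\begin{pmatrix}1&0&0\\ u&1&0\\ v&0&1\end{pmatrix}$. Over $\mathbb Z_2$ we have $E^{-1}=E$, and $Ex=e_1$ for $x=(1,u,v)^{\mathsf T}$. Each monomial term of $u$ (resp.\ $v$) corresponds to one CNOT from a vertex qubit $v_1$ to a translate of $v_2$ (resp.\ $v_3$). All these CNOTs share controls on $v_1$ and targets on the edge sublattices. They therefore commute, and can be scheduled in depth at most $\operatorname{wt}(u)+\operatorname{wt}(v)\le 3$. The symplectic action is $\mathcal S_1=\operatorname{diag}(E,(E^{-1})^\dagger)$. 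On the $Z$ part, $(E^{-1})^\dagger=E^\dagger$ maps $(1,v^\sigma,u^\sigma)$ to
\[
\bigl(1+\bar u v^\sigma+\bar v u^\sigma,\;v^\sigma,\;u^\sigma\bigr)=(1,v^\sigma,u^\sigma),
\]
where the first entry simplifies by the mixed-term condition. So $\mathcal S_1G_{X,1}=(e_1,0)$ and $\mathcal S_1G_{Z,1}=(0,w)$ with $w=(1,z_2,z_3)=(1,v^\sigma,u^\sigma)$.

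\textbf{Layer $U_2$.} Apply the shear $A$ from the proof of Theorem~\ref{thm: exist Clifford gate}, with $\alpha=\bar z_2=\overline{v^\sigma}$ and $\beta=\overline{u^\sigma}$. Physically these are CNOTs controlled on translates of the edge qubits and targeting the vertex qubit. Again they all commute, and the depth is bounded by the weights. Since $Ae_1=e_1$, $G_{X,1}$ is untouched, while $(A^{-1})^\dagger w=e_1$. Composing the two layers gives $U=U_2U_1$, which sends the normalized pair to $(1,0,0,0,0,0)$ and $(0,0,0,1,0,0)$. Undoing the normalization, the general case is the same statement up to the monomial translations by $f_1$ and $f_1^\sigma$.

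The main obstacle is keeping the antipode and $\sigma$ bookkeeping consistent: which translate each CNOT connects, and checking that the $(1,1)$ entry of $E^\dagger w$ really collapses to $1$. That collapse is exactly where the hypothesis $\bar g_1h_1^\sigma+\bar h_1g_1^\sigma=0$ enters. I would verify this step by direct expansion. I would then check the $[[75,10,5]]$ data against Fig.~\ref{fig: clifford gate and stabilizers} as a sanity test.
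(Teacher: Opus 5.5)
Your proposal is correct and is essentially the paper's proof. Your first layer $\operatorname{diag}(E,E^\dagger)$ is exactly the paper's $U_1$, and your shear with $\alpha=\overline{v^\sigma}=v^\rho$, $\beta=\overline{u^\sigma}=u^\rho$ is exactly the paper's $U_2$; the mixed-term condition enters at the same place (the first $Z$-entry of $U_1G_{Z,1}$), and the only cosmetic difference is that you translate to $f_1=1$ first, whereas the paper applies the same translation-invariant (hence $R$-linear) circuit directly and obtains $(f_1,0,0,0,0,0)$ and $(0,0,0,f_1^\sigma,0,0)$.
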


\begin{proof}
In the Laurent-polynomial Pauli module, the first layer of CNOT gates is
represented by
\begin{equation}
U_1=
\begin{bmatrix}
1 & 0 & 0 & 0 & 0 & 0\\
u & 1 & 0 & 0 & 0 & 0\\
v & 0 & 1 & 0 & 0 & 0\\
0 & 0 & 0 & 1 & \bar u & \bar v\\
0 & 0 & 0 & 0 & 1 & 0\\
0 & 0 & 0 & 0 & 0 & 1
\end{bmatrix}.
\end{equation}
It sends
\begin{equation}
    U_1G_{X,1}
    =
    \left[\begin{array}{c}
        f_1\\0\\0\\
        \hline
        0\\0\\0
    \end{array}\right].
\end{equation}
Using
\begin{equation}
    \bar g_1 h_1^\sigma+\bar h_1 g_1^\sigma=0,
\end{equation}
one also obtains
\begin{equation}
    U_1G_{Z,1}
    =
    \left[\begin{array}{c}
        0\\0\\0\\
        \hline
        f_1^\sigma\\h_1^\sigma\\g_1^\sigma
    \end{array}\right].
\end{equation}

The second layer is represented by
\begin{equation}
U_2=
\begin{bmatrix}
1 & v^\rho & u^\rho & 0 & 0 & 0\\
0 & 1 & 0 & 0 & 0 & 0\\
0 & 0 & 1 & 0 & 0 & 0\\
0 & 0 & 0 & 1 & 0 & 0\\
0 & 0 & 0 & v^\sigma & 1 & 0\\
0 & 0 & 0 & u^\sigma & 0 & 1
\end{bmatrix}.
\end{equation}
Therefore, with $U:=U_2U_1$,
\begin{equation}
    UG_{X,1}
    =
    \left[\begin{array}{c}
        f_1\\0\\0\\
        \hline
        0\\0\\0
    \end{array}\right],
    \qquad
    UG_{Z,1}
    =
    \left[\begin{array}{c}
        0\\0\\0\\
        \hline
        f_1^\sigma\\0\\0
    \end{array}\right].
\end{equation}
Since $f_1$ is a monomial, these are single-qubit Pauli operators up to lattice
translation. If we fix the translation redundancy by setting $f_1=1$, they are
exactly the canonical pair on the vertex qubit.
\end{proof}

In the same direct sector, the transformed stabilizers take the BB form
explicitly. Let
\begin{equation}
    a=G_{X,1}\cdot G_{Z,1},
    \qquad
    b=G_{X,1}\cdot G_{Z,2},
\end{equation}
and use the local stabilizers
\begin{equation}
    S_X=b^\sigma G_{X,1}+\bar a\,G_{X,2},
    \qquad
    S_Z=bG_{Z,1}+aG_{Z,2}.
\end{equation}
With $u=g_1f_1^{-1}$ and $v=h_1f_1^{-1}$ as above, the Clifford circuit
$U=U_2U_1$ gives
\begin{equation}
    US_X
    =
    \left[\begin{array}{c}
        0\\
        \bar a\,(g_2+u f_2)\\
        \bar a\,(h_2+v f_2)\\
        \hline
        0\\
        0\\
        0
    \end{array}\right],\qquad
    US_Z
    =
    \left[\begin{array}{c}
        0\\
        0\\
        0\\
        \hline
        0\\
        b h_1^\sigma+a h_2^\sigma\\
        b g_1^\sigma+a g_2^\sigma
    \end{array}\right].
\end{equation}
After deleting the disentangled vertex gauge qubit, the remaining two qubits per
unit cell support the corresponding BB stabilizer generators
\begin{equation}
    \widetilde S_X
    =
    \left[\begin{array}{c}
        \bar a\,(g_2+u f_2)\\
        \bar a\,(h_2+v f_2)\\
        \hline
        0\\
        0
    \end{array}\right],
    \qquad
    \widetilde S_Z
    =
    \left[\begin{array}{c}
        0\\
        0\\
        \hline
        b h_1^\sigma+a h_2^\sigma\\
        b g_1^\sigma+a g_2^\sigma
    \end{array}\right].
    \label{eq: stabilizer for SBB after U}
\end{equation}
In the search normalization $f_1=1$, one has $a=1$, so these formulas simplify to the BB stabilizers used in the main text. 
A Gr\"obner basis computation can then be used to determine the number of logical qubits encoded by the corresponding SBB code.

%%%%%%%%%%%%%%%%%%%%%%%%%%%%%%%%%%%%%%%%%%%%%%%%%%%%%%%%%%%%%%%%%%%%%%%%%%%%%%%%%%%%%%%%%%%%%%%%%%%%%%%%%%%%%%%%
\section{Example data and verification of SBB codes}
\label{app: examples_verification}

This appendix provides the explicit data used for the SBB examples quoted in the main text.
We first verify the guiding $[[75,10,5]]$ code in detail, including its gauge generators, commutation matrix, local stabilizers, Clifford reduction, and associated BB stabilizers.
We then list further low-overhead examples from the finite search.
Together with the reflection rule for obtaining the $Z$-type gauge generators, the data in Table~\ref{tab: more examples} are sufficient to reconstruct each code and reproduce the quoted parameters.

\subsection{Verification of the $[[75,10,5]]$ guiding example}

We begin with the $[[75,10,5]]$ code discussed in the main text.
The gauge generators shown in Fig.~\ref{fig: gauge operator of [[75, 10, 5]] code} are
\begin{eqs}
    G_{X,1} = 
    \left[\begin{array}{c}
        x^2 \\
        \rule{0pt}{1.1em}y^2 \\
        \rule{0pt}{1.1em}x+x^2y \\
        \hline
        0 \\
        0 \\
        0
    \end{array}\right], \qquad
    G_{X,2} = 
    \left[\begin{array}{c}
        1+y^2 \\
        \rule{0pt}{1.1em}x+y \\
        \rule{0pt}{1.1em}0 \\
        \hline
        0 \\
        0 \\
        0
    \end{array}\right], \\
    G_{Z,1} = 
    \left[\begin{array}{c}
        0 \\
        0 \\
        0 \\
        \hline
        \rule{0pt}{1.1em}y^2 \\
        \rule{0pt}{1.1em}y+xy^2 \\
        x^2
    \end{array}\right], \qquad
    G_{Z,2} =  
    \left[\begin{array}{c}
        0 \\
        0 \\
        0 \\
        \hline
        \rule{0pt}{1.1em}1+x^{2} \\
        \rule{0pt}{1.1em}0 \\
        x+y
    \end{array}\right].
    \label{eq: [[75,10,5]] gauge operators}
\end{eqs}
Their commutation matrix is
\begin{eqs}
M_c=
\begin{pmatrix}
 x^{-2}y^2 & x^{-1}y+x^{-1}y^{-1} \\
 x^{-1}y+xy & 1+x^2+y^{-2}+x^2y^{-2}
\end{pmatrix}.
\label{eq:Mc_75}
\end{eqs}
One directly checks that $\det(M_c)=0$, and the upper-left entry $x^{-2}y^2$ is a monomial pivot.
Therefore Theorems~\ref{thm: nonlocal stabilizer} and~\ref{thm: exist Clifford gate} apply: the code has local stabilizers, no additional nonlocal stabilizers on the torus, and a finite-depth Clifford correspondence with a BB stabilizer code.

The local stabilizers are those obtained from the kernel construction.
Explicitly,
\begin{eqs}
S_X=(x^{-2}y+y)G_{X,1}+xG_{X,2},\qquad
S_Z=(xy^{-2}+x)G_{Z,1}+yG_{Z,2}.
\end{eqs}
They are shown in Fig.~\ref{fig: Stabilizers and dressed logical operators of [[75, 10, 5]] code}.
Equivalently, choose
\begin{eqs}
P=
\begin{pmatrix}
1 & 0 \\
x^{2}y^{-1}+y^{-1} & x^{-1}
\end{pmatrix},
\qquad
Q=
\begin{pmatrix}
1 & xy^{-2}+x \\
0 & y
\end{pmatrix}.
\label{eq: P and Q of [[75,10,5]]}
\end{eqs}
Then
\begin{eqs}
P M_c Q=
\begin{pmatrix}
x^{-2}y^2 & 0\\
0 & 0
\end{pmatrix}.
\label{eq: [[75,10,5]] gauge qubits}
\end{eqs}
The vanishing second row and second column encode the local $X$- and $Z$-type stabilizers above.

\begin{figure*}[thb]
    \centering
    \includegraphics[width=0.95\textwidth]{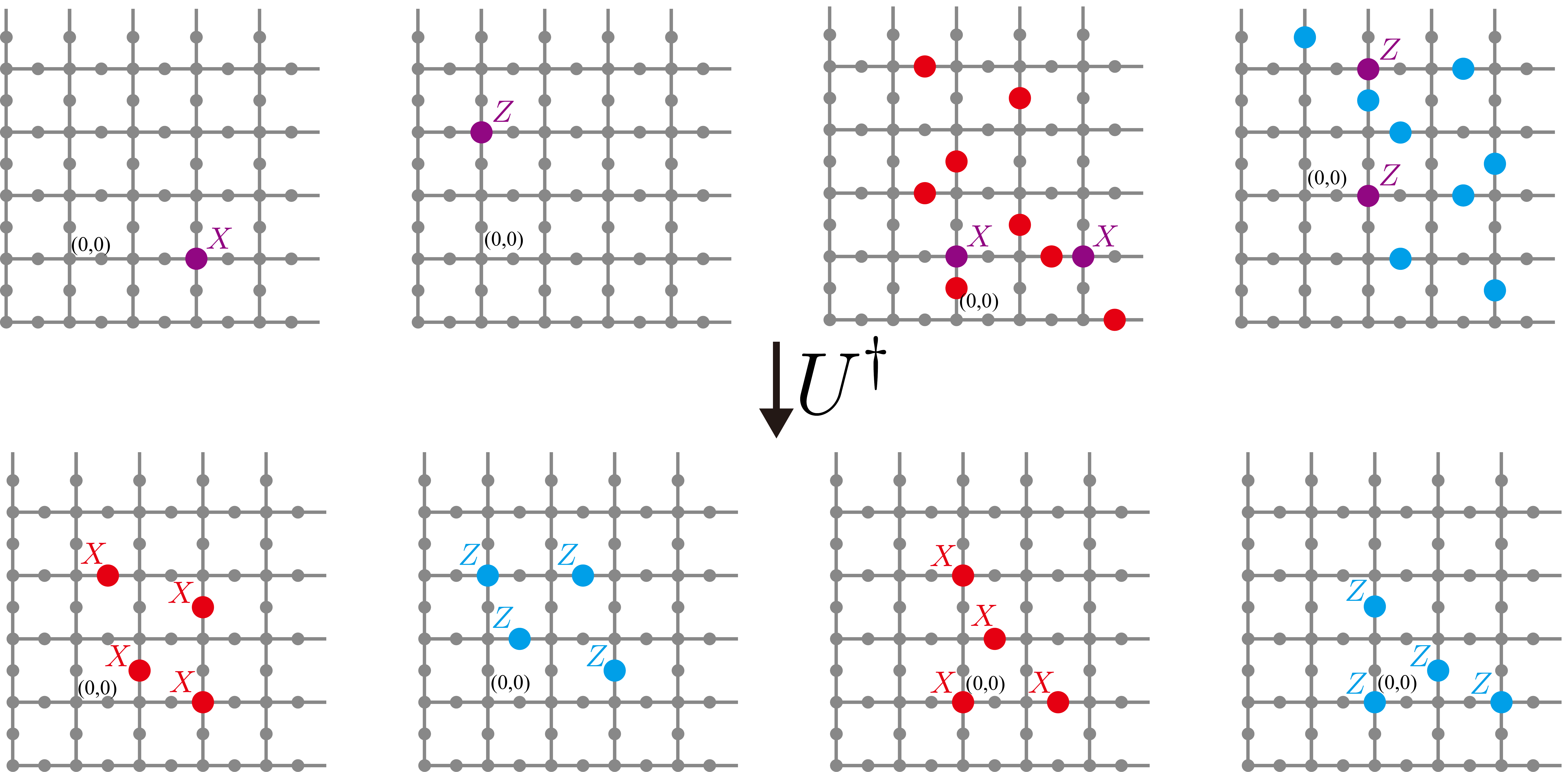}
    \caption{
    Inverse construction of the $[[75,10,5]]$ SBB code from the corresponding BB code.
    The upper row shows the gauge-qubit Pauli pair on each vertex together with the BB stabilizers decorated by these gauge qubits.
    Applying the inverse Clifford circuit $U^\dagger$ gives the four weight-$4$ SBB gauge generators shown in the lower row.
    Thus the weight-$8$ BB stabilizer checks are realized through local weight-$4$ gauge measurements in the SBB description.
    }
\label{fig: Clifford_gate_inverse}
\end{figure*}

We next display the Clifford reduction explicitly.
Let the CNOT circuit in Fig.~\ref{fig: clifford gate and stabilizers} be written as $U=U_2U_1$, where
\begin{eqs}
U_1=
\begin{bmatrix}
1 & 0 & 0 & 0 & 0 & 0\\
x^{-2}y^2 & 1 & 0 & 0 & 0 & 0\\
x^{-1}+y & 0 & 1 & 0 & 0 & 0\\
0 & 0 & 0 & 1 & x^2y^{-2} & x+y^{-1}\\
0 & 0 & 0 & 0 & 1 & 0\\
0 & 0 & 0 & 0 & 0 & 1
\end{bmatrix},\qquad
U_2=
\begin{bmatrix}
1 & y+x^{-1} & x^{-2}y^2 & 0 & 0 & 0\\
0 & 1 & 0 & 0 & 0 & 0\\
0 & 0 & 1 & 0 & 0 & 0\\
0 & 0 & 0 & 1 & 0 & 0\\
0 & 0 & 0 & y^{-1}+x & 1 & 0\\
0 & 0 & 0 & x^2y^{-2} & 0 & 1
\end{bmatrix}.
\end{eqs}
Then
\begin{eqs}
UG_{X,1}=
\left[\begin{array}{c}
        x^2\\0\\0\\0\\0\\0
    \end{array}\right],
% \begin{bmatrix}
% x^2\\0\\0\\0\\0\\0
% \end{bmatrix},
\qquad
UG_{Z,1}=
\left[\begin{array}{c}
        0\\0\\0\\y^2\\0\\0
    \end{array}\right].
% \begin{bmatrix}
% 0\\0\\0\\y^2\\0\\0
% \end{bmatrix}.
\end{eqs}
Thus $U$ maps the gauge pair $G_{X,1},G_{Z,1}$ to a single-qubit Pauli pair on the vertex qubit, up to monomial translations.
After this vertex gauge qubit is removed, the transformed stabilizers act only on the two edge qubits per unit cell:
\begin{eqs}
\widetilde S_X=
\left[\begin{array}{c}
        x^{-1}y^2+x^{-1}y^4+xy+x^2 \\
        \rule{0pt}{1.1em}1+y^2+xy+xy^3 \\
        \hline
        0 \\
        0
    \end{array}\right],\qquad
\widetilde S_Z=x^{2} y^2
\left[\begin{array}{c}
        0 \\
        0 \\
        \hline
        \rule{0pt}{1.1em} 1+y^{-2}+x^{-1}y^{-1}+x^{-1}y^{-3}\\
        xy^{-2}+xy^{-4}+x^{-1}y^{-1}+x^{-2}
    \end{array}\right].
\end{eqs}
These two operators have the BB form and have weight $8$ in this representative.
Therefore the $[[75,10,5]]$ SBB code is a weight-$4$ subsystem realization of the corresponding BB stabilizer code.
Conversely, as shown in Fig.~\ref{fig: Clifford_gate_inverse}, starting from this BB code, adding one gauge qubit at each vertex, and applying $U^\dagger$ produces the four weight-$4$ SBB gauge generators in Eq.~\eqref{eq: [[75,10,5]] gauge operators}.

The equality of logical dimensions can also be checked directly from the BB description.
Let $f(x,y)$ and $g(x,y)$ be the two nonzero components of $\widetilde S_X$, after removing the overall monomial.
On the $5\times5$ torus, the ideal
\begin{eqs}
\langle f(x,y),\,g(x,y),\,1+x^5,\,1+y^5\rangle
\end{eqs}
has Gr\"obner basis
\begin{eqs}
\langle 1+y^5,\,x+y^4\rangle.
\end{eqs}
The quotient ring therefore has five independent monomials.
Since a BB code encodes twice this quotient dimension, the corresponding BB code has $k=10$ logical qubits, in agreement with the $[[75,10,5]]$ SBB code.

\begin{table*}[thb]
\centering
% \footnotesize
\renewcommand{\arraystretch}{1.8}
\setlength{\tabcolsep}{3pt}
\begin{tabular}{ccccccc}
\hline
$[[n,k,d]]$
& $(f_1,g_1,h_1)$
& $(f_2,g_2,h_2)$
& $a_1=(0,\alpha)$
& $a_2=(\beta,\gamma)$
& $S_X$
& $S_Z$ \\
\hline
$[[27,6,3]]$
& $(y^2,y,1+xy)$
& $(1+xy+x^2,1,0)$
& $(0,3)$
& $(3,0)$
& \makecell[l]{\\$(\bar{x}^2+\bar{x}\bar{y}+\bar{x}y)G_{X,1}$\\$+\bar{x}^2y^2G_{X,2}$\\{}}
& \makecell[l]{$(\bar{x}\bar{y}+\bar{y}^2+x\bar{y})G_{Z,1}$\\$+x^2\bar{y}^2G_{Z,2}$} \\
\hline

$[[60,10,4]]$
& $(y^2,x,1+xy)$
& $(x^2+x^2y^2,y^2+xy,0)$
& $(0,5)$
& $(4,-1)$
& \makecell[l]{\\$(\bar{x}y+xy)G_{X,1}$\\$+\bar{x}^2y^2G_{X,2}$\\{}}
& \makecell[l]{$(x\bar{y}+xy)G_{Z,1}$\\$+x^2\bar{y}^2G_{Z,2}$} \\
\hline

$[[75,10,5]]$
& $(x^2,y^2,x+x^2y)$
& $(1+y^2,x+y,0)$
& $(0,5)$
& $(5,0)$
& \makecell[l]{\\$(\bar{x}\bar{y}+x\bar{y})G_{X,1}$\\$+x^2\bar{y}^2G_{X,2}$\\{}}
& \makecell[l]{$(\bar{x}\bar{y}+\bar{x}y)G_{Z,1}$\\$+\bar{x}^2y^2G_{Z,2}$} \\
\hline

$[[90,12,5]]$
& $(y^2,x^2,1+x^2y^2)$
& $(1+x^3y,y^2+xy,0)$
& $(0,6)$
& $(5,2)$
& \makecell[l]{\\$(\bar{x}\bar{y}+y^2)G_{X,1}$\\$+\bar{x}^2y^2G_{X,2}$\\{}}
& \makecell[l]{$(\bar{x}\bar{y}+x^2)G_{Z,1}$\\$+x^2\bar{y}^2G_{Z,2}$} \\
\hline

$[[108,12,6]]$
& $(y^2,x^2,1+x^2y^2)$
& $(1+x^3y,y^2+xy,0)$
& $(0,9)$
& $(4,1)$
& \makecell[l]{\\$(\bar{x}\bar{y}+y^2)G_{X,1}$\\$+\bar{x}^2y^2G_{X,2}$\\{}}
& \makecell[l]{$(\bar{x}\bar{y}+x^2)G_{Z,1}$\\$+x^2\bar{y}^2G_{Z,2}$} \\
\hline

$[[126,14,6]]$
& $(y^2,x^2,x+x^2y)$
& $(x+x^3,y^2+xy,0)$
& $(0,7)$
& $(6,-1)$
& \makecell[l]{\\$(\bar{y}+y)G_{X,1}$\\$+\bar{x}^2y^2G_{X,2}$\\{}}
& \makecell[l]{$(\bar{x}+x)G_{Z,1}$\\$+x^2\bar{y}^2G_{Z,2}$} \\
\hline
\end{tabular}
\caption{
Explicit data for representative weight-$4$ SBB codes found in the finite search.
Each row lists the code parameters, the two $X$-type gauge-generator triples, the two twisted-torus translation vectors, and the local stabilizer combinations $S_X$ and $S_Z$.
The translation vectors define the periodic boundary conditions $y^\alpha=1$ and $x^\beta y^\gamma=1$.
The $Z$-type gauge generators are obtained from the reflection rule
$G_Z(f,g,h)=(0,0,0,f^\sigma,h^\sigma,g^\sigma)^{\mathsf T}$.
The stabilizers are written in the unnormalized convention associated with the listed gauge-generator triples: for
$M_c=\begin{psmallmatrix}a&b\\ c&d\end{psmallmatrix}$ with $\det M_c=0$, we use
$S_X=\bar c\,G_{X,1}+\bar a\,G_{X,2}$ and
$S_Z=b\,G_{Z,1}+a\,G_{Z,2}$ (here $\bar{x}=x^{-1}$ and $\bar{y}=y^{-1}$).
Code distances are computed exactly using the integer-programming approach~\cite{landahl2011fault, Bravyi2024HighThreshold}.
}
\label{tab: more examples}
\end{table*}

%%%%%%%%%%%%%%%%%%%%%%%%%%%
\subsection{Further examples from the finite search}

Table~\ref{tab: more examples} lists representative weight-$4$ SBB codes found in the finite search.
For each row, the second and third columns give the two $X$-type gauge generators,
\begin{equation}
G_{X,1}=G_X(f_1,g_1,h_1),\qquad
G_{X,2}=G_X(f_2,g_2,h_2).
\end{equation}
The corresponding $Z$-type gauge generators are obtained by the combined reflection symmetry used in Appendix~\ref{app:reflection_symmetric_search}:
\begin{equation}
G_Z(f,g,h)=(0,0,0,f^\sigma,h^\sigma,g^\sigma)^{\mathsf T},
\qquad
p^\sigma(x,y)=p(y,x).
\end{equation}
Thus the table gives the full translation-invariant gauge-generator data for each example. 

The last two columns specify the twisted torus.
For translation vectors
\begin{equation}
a_1=(0,\alpha),\qquad a_2=(\beta,\gamma),
\end{equation}
Laurent polynomials are reduced modulo the corresponding periodicities.
The number of unit cells is
\begin{eqs}
\left|\det
\begin{pmatrix}
0 & \beta \\
\alpha & \gamma
\end{pmatrix}
\right|
=
\alpha\beta .
\end{eqs}
Since each unit cell contains three physical qubits, the block length is $n=3\alpha\beta$.
The listed parameters $[[n,k,d]]$ are obtained by constructing the finite binary gauge and stabilizer matrices on the corresponding twisted torus and computing the number of protected logical qubits and the dressed distance, as described in the main text.

As a consistency check, the $[[75,10,5]]$ row has
$a_1=(0,5)$ and $a_2=(5,0)$, so the twisted torus contains
$5\times 5=25$ unit cells and hence
$n=3\times 25=75$ physical qubits. The other rows are obtained in the same way.

For each entry in Table~\ref{tab: more examples}, we also compute the corresponding commutation matrix.
In the unnormalized convention associated with the listed gauge-generator triples, write
\begin{equation}
    M_c=
    \begin{pmatrix}
        a & b\\
        c & d
    \end{pmatrix}.
\end{equation}
When $\det M_c=0$, the local stabilizers are obtained from the kernel formula
\begin{equation}
    S_X=\bar c\,G_{X,1}+\bar a\,G_{X,2},
    \qquad
    S_Z=bG_{Z,1}+aG_{Z,2}.
\end{equation}
Explicitly, the commutation matrices are:
\begin{enumerate}
    \item $[[27,6,3]]$:
    \begin{equation}
        M_c =
        \begin{pmatrix}
            x^2y^{-2}
            &
            x^{-1}y^{-1}+y^{-2}+xy^{-1}
            \\
            ~~xy^{-1}+xy+x^2~~
            &
            ~~x^{-2}+x^{-2}y^2+x^{-1}y^{-1}+y^2+xy~~
        \end{pmatrix}.
    \end{equation}

    \item $[[60,10,4]]$:
    \begin{equation}
        M_c =
        \begin{pmatrix}
            x^2y^{-2}
            &
            xy^{-1}+xy
            \\
            ~~x^{-1}y^{-1}+xy^{-1}~~
            &
            ~~x^{-2}+x^{-2}y^2+1+y^2~~
        \end{pmatrix}.
    \end{equation}

    \item $[[75,10,5]]$:
    \begin{equation}
        M_c =
        \begin{pmatrix}
            x^{-2}y^2
            &
            x^{-1}y^{-1}+x^{-1}y
            \\
            ~~x^{-1}y+xy~~
            &
            ~~y^{-2}+1+x^2y^{-2}+x^2~~
        \end{pmatrix}.
    \end{equation}

    \item $[[90,12,5]]$:
    \begin{equation}
        M_c =
        \begin{pmatrix}
            x^2y^{-2}
            &
            x^{-1}y^{-1}+x^2
            \\
            ~~y^{-2}+xy~~
            &
            ~~x^{-3}y^{-1}+x^{-2}y^2+1+xy^3~~
        \end{pmatrix}.
    \end{equation}

    \item $[[108,12,6]]$:
    \begin{equation}
        M_c =
        \begin{pmatrix}
            x^2y^{-2}
            &
            x^{-1}y^{-1}+x^2
            \\
            ~~y^{-2}+xy~~
            &
            ~~x^{-3}y^{-1}+x^{-2}y^2+1+xy^3~~
        \end{pmatrix}.
    \end{equation}

    \item $[[126,14,6]]$:
    \begin{equation}
        M_c =
        \begin{pmatrix}
            x^2y^{-2}
            &
            x^{-1}+x
            \\
            ~~y^{-1}+y~~
            &
            ~~x^{-3}y+x^{-3}y^3+x^{-1}y+x^{-1}y^3~~
        \end{pmatrix}.
    \end{equation}
\end{enumerate}
For all matrices above, one verifies that
\begin{equation}
    \det M_c=0.
\end{equation}
Moreover, since each matrix has a monomial entry, the entry ideal satisfies
\begin{equation}
    I_1(M_c)=R.
\end{equation}
Therefore, the no-nonlocal-stabilizer criterion applies. The distances listed in Table~\ref{tab: more examples} are the dressed distances of the corresponding subsystem codes.

\end{document}